\newtheorem{theorem}{Theorem}
\newtheorem{lemma}{Lemma}
\newtheorem{definition}{Definition}
\newtheorem*{theoremA}{Theorem A (Ferromagnetic Case)}
\newtheorem*{theoremB}{Theorem B (Antiferromagnetic Case, $q \geq 3$)}
\newtheorem*{theoremC}{Theorem C (Antiferromagnetic Case, $q = 2$)}
\newtheorem*{theoremD}{Theorem D (Renormalization Procedure)}
\newtheorem{remark}{Remark}
\title[Lee-Yang zeros for the Potts Model on the Cayley Tree]{A dynamical approach to studying the Lee-Yang zeros for the Potts Model on the Cayley Tree}
\author{Diyath Pannipitiya}
\address{IU Indianapolis Department of Mathematics\\
         402 North Blackford Street room LD270\\
         Indianapolis, Indiana 46202-3267 \\
         USA}
\email{dinepann@iu.edu}
\author{Roland Roeder}
\address{IU Indianapolis Department of Mathematics\\
         402 North Blackford Street room LD270\\
         Indianapolis, Indiana 46202-3267 \\
         USA}
\email{roederr@iu.edu}
\begin{document}

\begin{abstract}
Let $Z_n(z,t)$ denote
the partition function of the $q$-state Potts Model on the rooted binary Cayley tree of depth~$n$.  Here, $z = {\rm e}^{-h/T}$ and $t = {\rm e}^{-J/T}$ with $h$ denoting an externally applied magnetic field, $T$ the temperature, and $J$ a coupling constant.  One can interpret $z$ as a ``magnetic field-like'' variable and $t$ as a ``temperature-like'' variable.  Physical values $h \in \mathbb{R}$, $T > 0$, and $J \in \mathbb{R}$ correspond to $t \in (0,\infty)$ and $z \in (0,\infty)$. For any fixed $t_0 \in (0,\infty)$ and fixed $n \in \mathbb{N}$ we consider the complex zeros of $Z_n(z,t_0)$ and how they accumulate on the ray $(0,\infty)$ of physical values for $z$ as $n \rightarrow \infty$.   In the ferromagnetic case ($J > 0$ or equivalently $t \in (0,1)$) these Lee-Yang zeros accumulate to at most one point on $(0,\infty)$ which we describe using explicit formulae.   In the antiferromagnetic case $(J < 0$ or equivalently $t \in (1,\infty)$) these Lee-Yang zeros accumulate to at most two points of $(0,\infty)$, which we again describe with explicit formulae.  The same results hold for the unrooted Cayley tree of branching number two.

These results are  proved by adapting a renormalization procedure that was previously used in the case of the Ising model on the Cayley Tree by M\"uller-Hartmann and Zittartz
(1974 and 1977), Barata and Marchetti (1997), and Barata and Goldbaum (2001).   We then use methods from complex dynamics and, more specifically, the active/passive dichotomy for iteration of a marked point, along with detailed analysis of the renormalization mappings, to prove the main results.
\end{abstract}

\maketitle

\markboth{\today}{\today}

\section{Introduction}\label{SEC:INTRO}

This paper concerns the Lee-Yang zeros for the $q \geq 2$ state Potts Model on the binary Cayley Tree.  Because of the recursive nature in which the Cayley Tree is constructed, there is a suitable
renormalization procedure which makes this problem amenable to methods from dynamical systems.  
This paper is intended for readers from statistical physics and also from mathematics (especially dynamical systems), so we will provide considerable background and motivation in Sections \ref{SEC:INTRO} and~\ref{section 3}.

Throughout this paper we will use the convention that the natural numbers are
all integers greater than or equal to $1$, i.e. $\mathbb{N} = \{1,2,\ldots\}$.
Given $k \in \mathbb{N}$ we let $\mathbb{N}_{\geq k} := \{n \in
\mathbb{N} \, : \, n \geq k\}$.

\subsection{Lee-Yang zeros for the Ising Model}\label{SUBSEC:ISING}

Let $(\Gamma_n)_{n=1}^\infty$ be a sequence of graphs, and let $(V_n)_{n=1}^\infty$ and $(E_n)_{n=1}^\infty$ be the corresponding vertex set (sites) and edge set (bonds). Each such graph is interpreted as a finite approximation to a magnetic material and classically one might let $\Gamma_n$ be an $n \times n$ piece of the $\mathbb{Z}^2$ square lattice or an $n \times n \times n$ piece of the $\mathbb{Z}^3$ cubical lattice.

In the Ising model, one magnetic particle (an electron, for example) is at each vertex, and two particles interact if and only if they are connected by an edge.  Each particle is assigned a magnetic moment, called \textit{spin}, which is represented in the model by the discrete variable $\sigma (i)\in \{-1,+1\}$ which describes the spin at vertex $i$. For each spin configuration, $\sigma : V_n\rightarrow \{\pm 1\}$, define the Hamiltonian (energy) of $\sigma$ by
\begin{align}
    H_n(\sigma):=-J\cdot \sum_{<i,j>\in E_n}\sigma(i) \sigma(j)-h\cdot \sum_{i\in V_n}\sigma(i).
\end{align}
    Here $J>0$ is the ferromagnetic coupling constant\index{coupling constant} and $h$ is the strength of the external magnetic field.
    
For a fixed temperature $T>0$, the Gibbs-Boltzmann weight of the spin configuration $\sigma$ is given by \begin{align}
         W_n(\sigma):=e^{-H_n(\sigma)/T}.
     \end{align}
     (We set the Boltzmann constant $k_B = 1$.)  Summing over all the possible spin configurations gives the partition function
     \begin{align}
         Z_n(h,T):=\sum_{\sigma}W_n(\sigma)= \sum_{\sigma} e^{-H_n(\sigma)/T} .\label{(1.3)}
     \end{align}
By the Gibbs-Boltzmann hypothesis,
the probability $P_n(\sigma)$ of the system being in the state $\sigma$ is proportional to $W_n(\sigma)$. Thus, 
 \[P_n(\sigma) = W_n(\sigma)/Z_n(h,T).\]

Since $Z_n(h,T)$ is a finite sum of exponentials, it can never be zero for real values of $h$ and $T$.   However, the zeros of $Z_n(h,T)$ at {\em complex} values of $h$ or $T$ do occur.  The way in which they accumulate
to real values of $h$ and $T$ in the limit as $n$ tends to infinity gives
information about the phase transitions in the model.    This interpretation
dates back to the works of Lee and Yang \cite{Lee-Yang1,Lee-Yang2}.   We will use this interpretation
as a motivation for studying the complex zeros of $Z_n(h,T)$ for the Cayley Tree and their limiting behavior as $n$ tends to infinity, without pursuing more deeply how this relates to phase transitions.

 By letting
 \begin{align}\label{EQN:DEF_Z_T}
     z=e^{-h/T} \ \text{(field-like) and}\ t=e^{-J/T}\ \text{(temperature-like)},
     \end{align}
    we get $Z_n(z,t)$ as a polynomial of $z$ and $t$ when multiplied by $z^{\mid V_n\mid} t^{\mid E_n\mid}$ to clear the denominator.
    For the physical values $T>0$ and $h\in \mathbb{R}$, we must have $t\in (0,1)$ and $z\in (0,\infty)$.    Because the partition function becomes
    a polynomial in the $(z,t)$ variables we will study it exclusively in
    terms of $z$ and $t$.

    The initial studies of phase transitions for the Ising model considered what happens as $T$ is varied with fixed $h=0$.  This corresponds to setting $z=1$ and studying the zeros
    of $Z_n(1,t)$ in the complex $t$ plane and how they accumulate to points
    on the interval $(0,1)$ of physically relevant values of $t$.   Such zeros in the complex $t$ plane are called {\em Fisher zeros} in honor of Michael Fisher~\cite{FISHER,BK}.
    
    It is also interesting to fix $T=T_0>0$ and to vary $h$.  This corresponds to fixing a value of $t = t_0 \in (0,1)$ and studying the
    the zeros of $Z_n(z,t_0)$ in the complex $z$ plane and also how they accumulate to the real ray $(0,\infty)$ of physically relevant values
    of $z$. In 1952, Tsung-Dao Lee \index{Lee, Tsung-Dao} and Chen-Ning Yang \index{Yang, Chen-Ning} published two important papers in statistical mechanics and proved a series of Lee-Yang theorems \cite{Lee-Yang1,Lee-Yang2}. The most interesting theorem is the following: 
\label{LY1} \begin{theorem}[Lee-Yang Circle Theorem]
     For $t\in[0, 1]$, the complex zeros in $z$ of the partition function $Z(z, t)$ of the Ising model on any graph lie on the unit circle $\mathbb{T} = \{z\in \mathbb{C}:\ \mid z\mid = 1\}$.
 \end{theorem}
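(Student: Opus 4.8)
The plan is to deduce the circle theorem from a stronger multivariate statement, following the route of Asano. Fix a finite graph $\Gamma=(V,E)$ with $V=\{1,\dots,N\}$ and attach an independent field to each site. After the standard clearing of denominators and the substitution turning the down-spin weight at site $i$ into an activity variable $z_i$ (so that site $i$ contributes $1$ when $\sigma(i)=+1$ and $z_i$ when $\sigma(i)=-1$, while an edge $\langle i,j\rangle$ contributes $t^{-\sigma(i)\sigma(j)}$), one obtains a polynomial
\[
\Xi(z_1,\dots,z_N;t)=\sum_{\sigma:V\to\{\pm1\}}\ \prod_{\langle i,j\rangle\in E}t^{-\sigma(i)\sigma(j)}\ \prod_{i\in V}z_i^{\,(1-\sigma(i))/2}
\]
that is affine (degree $\le 1$) in each $z_i$ and whose diagonal specialization $z_i=z$ returns $Z(z,t)$ in a variable whose unit circle corresponds to $\lvert z\rvert=1$ in the original field variable. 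The global spin flip $\sigma\mapsto-\sigma$ shows that $\Xi$ is invariant under the simultaneous inversion $z_i\mapsto 1/z_i$. Hence it suffices to prove that for $t\in[0,1]$ one has $\Xi\neq0$ whenever $\lvert z_i\rvert<1$ for all $i$: the inversion symmetry then also gives $\Xi\neq0$ when all $\lvert z_i\rvert>1$, and setting all $z_i=z$ confines the zeros of $Z(z,t)$ to $\lvert z\rvert=1$.

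The heart of the argument is a single-edge computation. For one edge between $u$ and $v$, summing over the two spins yields, up to an irrelevant monomial,
\[
\Xi_e(z_u,z_v)=t^{-1}\bigl(1+z_uz_v\bigr)+t\bigl(z_u+z_v\bigr),
\]
which is affine in each variable. Solving $\Xi_e=0$ for $z_u$ gives $\lvert z_u\rvert=\lvert 1+t^2z_v\rvert/\lvert t^2+z_v\rvert$, and a direct expansion shows
\[
\lvert 1+t^2z_v\rvert^2-\lvert t^2+z_v\rvert^2=(1-t^4)\bigl(1-\lvert z_v\rvert^2\bigr),
\]
which is positive for $0<t<1$ and $\lvert z_v\rvert<1$. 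Thus any zero of $\Xi_e$ has $\lvert z_u\rvert>1$, establishing the Lee–Yang property for a single edge. The degenerate endpoints are handled directly: at $t=1$ the spins decouple and $Z$ is a power of $1+z$ type factor with zeros on the circle, while at $t=0$ one first renormalizes $\Xi_e$ by $t$ before passing to the limit.

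I would then assemble $\Xi$ for the whole graph from these blocks. Give each vertex $i$ a separate copy $z_i^{(e)}$ of its activity for every incident edge $e$; the product $\prod_{e\in E}\Xi_e$ over edges is nonvanishing on the open polydisk in all these doubled variables, since a product of functions vanishes only where a factor does. It remains, for each vertex $i$, to merge the copies $\{z_i^{(e)}\}$ back into a single $z_i$, and this is precisely the \emph{Asano contraction}: if a polynomial affine in two variables $w_1,w_2$, written $A+Bw_1+Cw_2+Dw_1w_2$, has no zeros with $\lvert w_1\rvert,\lvert w_2\rvert<1$, then its contraction $A+Dw$ has no zeros with $\lvert w\rvert<1$. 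Applying this lemma repeatedly—legitimate because $\Xi$ is multiaffine in every variable, each $\sigma(i)\in\{\pm1\}$ entering at most linearly—collapses all copies while preserving the no-zeros-in-the-polydisk property, which gives the claim for $\Xi$ and hence the theorem for arbitrary finite graphs, the Cayley tree included.

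The main obstacle is the Asano contraction lemma itself. The single-edge estimate is a clean Möbius/Schwarz–Pick computation and the product step is immediate, so essentially all of the subtlety is concentrated in justifying the contraction: one must use the multiaffine structure and argue carefully that the contracted zero cannot escape the unit disk, which amounts to tracking the precise direction of the governing inequalities and checking that the normalizations keeping the polydisk statement intact are respected at each of the repeated merges. Verifying that multiaffineness is preserved throughout this iterated identification of variables is the other point requiring care.
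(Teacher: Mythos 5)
Your proposal cannot be matched against an internal argument, because the paper never proves this statement: the Circle Theorem is quoted as classical background, with the proof deferred to Lee and Yang's original papers \cite{Lee-Yang1,Lee-Yang2} (indeed, the paper's own results, Theorems A--D, concern the Potts generalization, where the circle property fails for $q \geq 3$). That said, your route is correct and is the standard modern proof, due to Asano, which is genuinely different from Lee--Yang's original induction on the number of sites. Your single-edge computation checks out: clearing $t$ from $t^{-1}(1+z_uz_v)+t(z_u+z_v)=0$ and using the identity $|1+t^2z_v|^2-|t^2+z_v|^2=(1-t^4)\bigl(1-|z_v|^2\bigr)$ shows a zero forces $|z_u|>1$ whenever $|z_v|<1$ and $0<t<1$; in fact at $t=1$ the zero set is exactly $\{z_u=-1\}\cup\{z_v=-1\}$, so that endpoint needs no separate decoupling argument. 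Two refinements are worth recording. First, the global spin flip gives not literal invariance but the covariance $\Xi(1/z_1,\dots,1/z_N)=\bigl(\prod_i z_i\bigr)^{-1}\Xi(z_1,\dots,z_N)$, which is exactly what the exterior-of-the-polydisk step uses, so no harm is done. Second, the Asano contraction lemma, which you single out as the main obstacle, has a three-line proof, so the obstacle is smaller than you fear: if $A+Dw_0=0$ for some $|w_0|<1$, then either $D=0$ and hence $A=0$, contradicting $\Phi(0,0)\neq0$, or $|A|<|D|$, in which case the diagonal restriction $\Phi(u,u)=A+(B+C)u+Du^2$ has roots whose product $A/D$ has modulus less than one, so one root lies in the open disk and gives a zero of $\Phi$ inside the open bidisk, a contradiction. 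Multiaffineness is preserved automatically at every stage, since each duplicated variable occurs in exactly one edge factor and contraction keeps every variable of degree at most one; and the fully contracted polynomial is $\Xi$ itself, because retaining only monomials in which all copies at a vertex appear together or not at all is precisely the consistency constraint defining genuine spin configurations. What your Asano route buys over the original Lee--Yang argument in the cited papers is locality (all analysis happens on a single edge) and graph-independence, which is why this proof, rather than the original one, is the version that generalizes (Ruelle's extensions, modern multivariate statements) and applies verbatim to the Cayley trees studied in this paper.
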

 \noindent
Interpreting parameters $z_0 \in (0,\infty)$ at which zeros of $Z_n(t_0,z)$ accumulate when $n \rightarrow \infty$ as corresponding to phase transitions, 
this implies that for any fixed $t_0\in [0,1]$, the only physical parameter at which we could potentially observe a phase transition is when $z_0=1$ (equivalently when $h_0=0$).   Because of this impressive theorem, the zeros of $Z_n(z,t_0)$ in the complex $z$ plane are called {\em Lee-Yang} zeros.

\subsection{Potts Model with an external
magnetic field and its Lee-Yang zeros.}\label{SUBSEC:POTTS}
 Much of the discussion of the previous section carries over directly to the Potts model. Let $q>2$ be a natural number. Now one can consider configurations $\sigma: V_n \longrightarrow \{0, \cdots, q-1\}$. If $J > 0$ is the coupling constant, then the energy of such a configuration when exposed to external magnetic field $h \in \mathbb{R}$ is defined to be 
\begin{align}\label{(2.1)}
    H_n(\sigma)= -J\sum_{(i,j)\in E_n}\delta(\sigma(i),\sigma(j)) - h\sum_{i\in V_n}\delta(\sigma(i),0).
\end{align}
\noindent   See, for example, Equation $20$ in \cite{WuPercolationAndThePottsModel}. 
(Here, $\delta(i,j) = 1$ if $i=j$ and $\delta(i,j) = 0$ otherwise.)  
The partition function $Z_n(z,t)$ and probability $P(\sigma)$ with which a spin configuration occurs are defined exactly in the same way as in the previous section, except that the new Formula (\ref{(2.1)}) for the energy  $H_n(\sigma)$ is used instead of the one in the Ising model. 

Like for the Ising Model, it is convenient to express the partition function
in terms of the ``field-like'' and ``temperature-like'' variables $z$ and $t$ that were defined in (\ref{EQN:DEF_Z_T}).
In the context of the Potts Model, one again 
 calls the zeros of $Z_n(1,t)$ in the complex $t$ plane the {\em Fisher zeros} and, for fixed $t_0 \in [0,1]$, the zeros of $Z_n(z,t_0)$ in the complex $z$ plane, the {\em Lee-Yang zeros}.

 Remark that for $t=0$ and $t=1$ it is easy to compute the partition function.  For any connected graph $\Gamma$ with $k$ vertices one finds
 \begin{align}\label{EQN:PARTITION_FXN_WHEN_T_EQUALS_01}
Z_\Gamma(z,0) = z^{-k} + (q-1) \qquad \mbox{and} \qquad Z_\Gamma(z,1) = \left(q-1+z^{-1} \right)^k.
 \end{align}
 In particular the Lee-Yang zeros when $t=0$ are the $k$-th roots of $1/(1-q)$ and the Lee-Yang zeros
 when $t=1$ all $k$ Lee-Yang zeros are equal to $1/(1-q)$.
 When $t \in (0,1)$ the Lee-Yang zeros are much more difficult to compute and they depend on the graph 
 $\Gamma$ and the temperature-like variable $t$ in a non-trivial way.

Because of its complexity, the $q$-state Potts model with the presence of an external magnetic field has been researched by just a few authors, as described in Section \ref{Related Works}.

\subsection{The antiferromagnetic case}
As in Sections \ref{SUBSEC:ISING} and \ref{SUBSEC:POTTS} above, it is customary to choose the coupling constant $J$ to be positive, making it energetically favorable for spins at neighboring vertices to be aligned.  It corresponds to the {\em ferromagnetic} materials.   However, there are some physical systems for which the opposite phenomenon holds.   They are called {\em antiferromagnetic} and for such systems one assumes that $J < 0$.

\subsection{Binary Cayley Tree}
The {\em $n^{th}$-level rooted binary Cayley tree} is a tree (a simple,
undirected, connected, and finite graph in which any two vertices are connected
by exactly one path) in which one vertex, called the root, is of degree two
with all leaves (vertices of degree one) at a distance (minimum number of edges
to connect) $n$ from the root, and all the other vertices are of degree three.
The {\em $n^{th}$-level unrooted binary Cayley tree} is a tree in which all
vertices have degree $3$ or $1$ and for which there exists a unique vertex of
degree $3$ such that all of the leaves are of distance $n$ from that specified
vertex.  In the remainder of this paper we will denote the rooted binary Cayley
Tree of level $n$ by $\Gamma_n$ and the unrooted binary Cayley Tree of level
$n$ by $\hat{\Gamma}_n$.   We will consider the $q$-state Potts Model on these
two families of graphs, as $n$ approaches infinity, for the rest of this
paper.

\begin{figure}
\begin{center}
\includegraphics[scale=0.38]{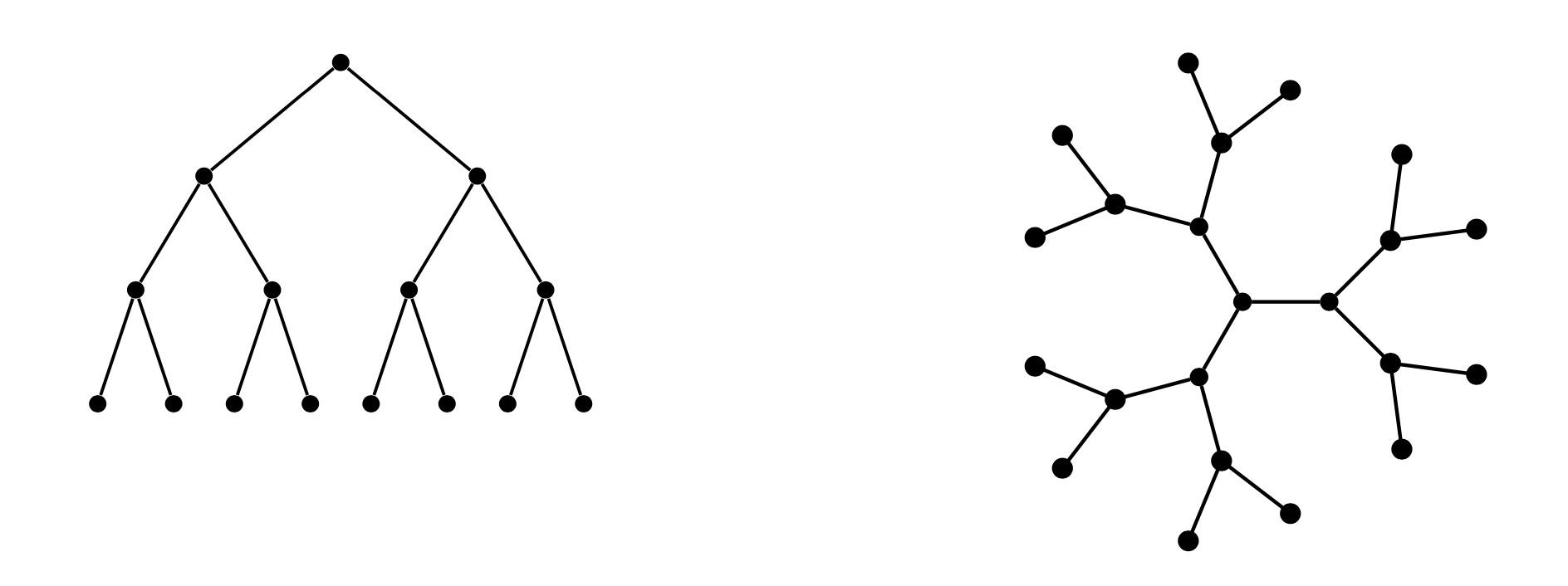}
\end{center}
\caption{Left: rooted binary Cayley Tree $\Gamma_4$ of depth 4.   Right: unrooted binary Cayley tree $\hat{\Gamma}_4$ of depth $4$.}
\end{figure}

\begin{remark}
It is important to note that
many classical treatments of the Ising Model on the Cayley Tree consider the
thermodynamical properties associated only with vertices ``deep'' in the lattice (e.g. the root vertex); see
\cite[Ch. 4]{BaxterBook} and the references therein. The term ``Bethe Lattice'' is typically used to describe such
considerations. Instead, we treat all vertices equally, studying the ``bulk'' behavior of the lattice,
and thus we follow the standard convention of referring to our work as being on the Cayley Tree.
\end{remark}

\subsection{Plots of some Lee-Yang zeros for the Ising and Potts models on the Cayley tree.}\label{section 1.7}

There is a convenient recursive formula that allows one to compute the Lee-Yang zeros of the $n^{th}$ rooted Cayley tree for the relatively small values of $n$. See Theorem D in Section \ref{main results} below. It applies to both the Ising and Potts models. Using that theorem, we have generated figures plotting Lee-Yang zeros for the Ising model and $3$-state Potts model.

The left side of Figure \ref{Fig 1.1} shows the Lee-Yang zeros for the Ising model on the binary Cayley tree when $n=5$ and when $t=0.0625$. Note that the zeros appear to lie perfectly on the unit circle $\mid z \mid =1$ as described by the Lee-Yang circle theorem. Moreover, there exists a critical temperature $t_{\rm crit}=1/3>0$ such that when $t \leq t_{\rm crit}$ the Lee-Yang zeros accumulate to $1$ on the positive real axis and such that when $t>t_{\rm crit}$, the Lee-Yang zeros stay away from the positive real axis (right side of Figure \ref{Fig 1.1}).
(For a rigorous justification of this phenomenon, see \cite[Theorem A]{Limiting}.)

The situation is dramatically different for the $3$-state Potts model on the binary Cayley tree because the Lee-Yang zeros no longer lie on the unit circle. In fact, the Lee-Yang zeros of the $5^{th}$ rooted binary Cayley tree for the $3$-state Potts model seem to lie inside the unit circle (left side of Figure \ref{Fig 1.3}). However, like for the Ising Model, there is a critical temperature $t_{\rm crit}= \frac{1+\sqrt{73}}{36}\approx 0.265$ such that for any $t
\leq t_{\rm crit}$, these zeros accumulate to a point on the positive real axis and such that when $t>t_{\rm crit}$, the Lee-Yang zeros stay away from the positive real axis (right side of Figure \ref{Fig 1.3}).  This will be rigorously justified in Theorem A.  (The reader may
also compare this description with a conjectural description of the accumulation locus of the Lee-Yang zeros described in Remark \ref{REM:CONJ_DESCR}.)

\begin{figure}
    \centering
   \includegraphics[width=0.35\linewidth]{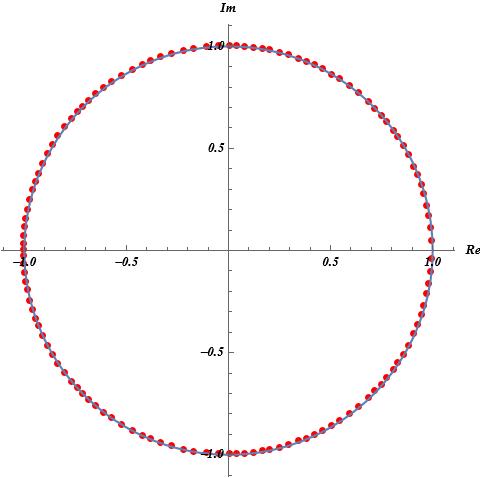}\hspace{0.3in}\includegraphics[width=0.35\linewidth]{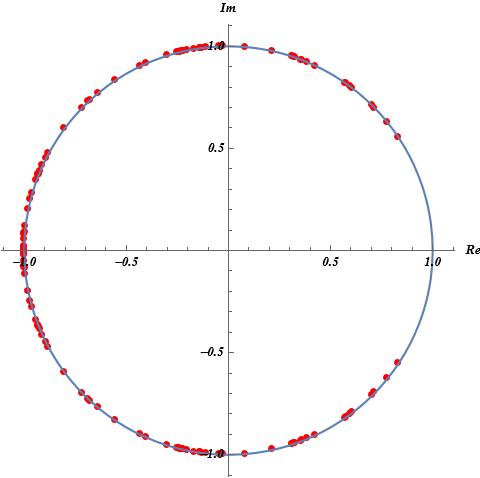}
   \caption{The Lee-Yang zeros for the $5^{th}$ rooted binary Cayley Tree for the Ising model.  Left: $\displaystyle t=0.0625<t_{\rm crit}=1/3$.  Right: $t=0.5>t_{\rm crit}=1/3$.}
    \label{Fig 1.1}
 \end{figure}

\begin{figure}
    \centering
  \includegraphics[width=0.35\linewidth]{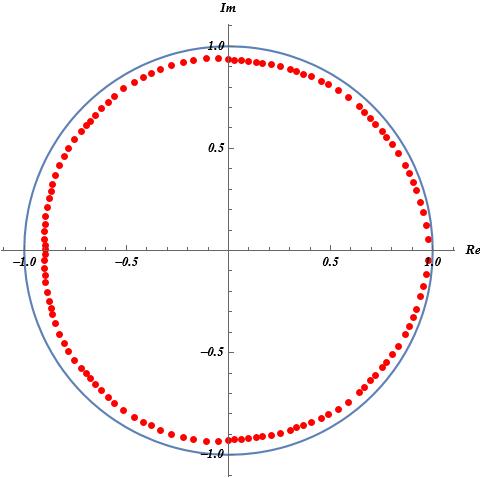}\hspace{0.3in}\includegraphics[width=0.35\linewidth]{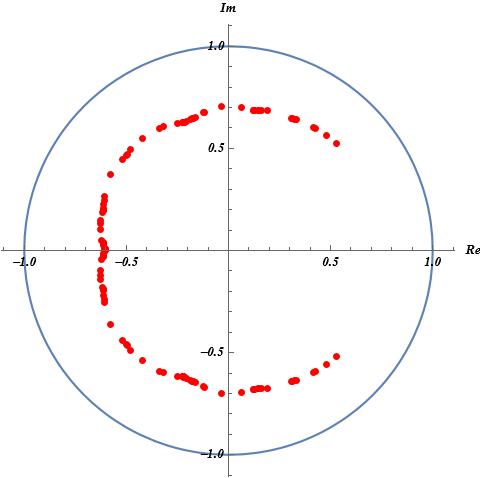}
    \caption{The Lee-Yang zeros for the $5^{th}$ rooted binary Cayley Tree for the $3$-state Potts model.  Left: $t=0.0625<t_{\rm crit}= \frac{1+\sqrt{73}}{36}\approx 0.265$.  Right: $t=0.5>t_{\rm crit}= \frac{1+\sqrt{73}}{36}\approx 0.265$.}
    \label{Fig 1.3} 
\end{figure}

\FloatBarrier

\subsection{Main Results}\label{main results}\label{section 1.8}
For the remainder of the paper we will denote by $Z_n(z,t,q)$ the partition function for the $q$ state Potts model on the rooted binary Cayley tree of depth $n$.    The partition function for the unrooted binary Cayley tree of depth $n$ will be denoted by $\hat{Z}_n(z,t,q)$.   Often we will drop the dependence of $q$ from the notation when it is clear: $Z_n(z,t) \equiv Z_n(z,t,q)$ and $\hat{Z}_n(z,t) \equiv \hat{Z}_n(z,t,q)$.

For fixed choices of $t \in [0,\infty]$ and $q \in \mathbb{N}_{\geq 2}$
our results consider the sets
\begin{align}
\mathcal{B}(t,q) &:= \overline{\{z \in \mathbb{C} \, : \,  Z_n(z,t,q) = 0 \, \mbox{for some $n \in \mathbb{N}$}\}}, \quad \mbox{and} \label{DEF_SETS_B} \\
\hat{\mathcal{B}}(t,q) &:= \overline{\{z \in \mathbb{C} \, : \,  \hat{Z}_n(z,t,q) = 0 \, \mbox{for some $n \in \mathbb{N}$}\}}. \nonumber
\end{align}
 Following the Lee-Yang approach to studying phase transitions (see Section \ref{SUBSEC:ISING}) we will be interested in where the sets $\mathcal{B}(t,q)$ and $\hat{\mathcal{B}}(t,q)$ intersect the ray $z \in (0,\infty)$ for various fixed choices
of $t \geq 0$ and $q\in \mathbb{N}_{\geq 2}$.

We first consider the ferromagnetic case $J > 0$ in which physical values of $z$ and $t$ correspond
to $z \in (0,\infty)$ and $0< t <1$.  Let 
\begin{align}\label{EQN:DEF_T1_T2}
    t_{1}(q) = \frac{1}{1+q} \qquad \mbox{and} \qquad t_{2}(q) = \frac{q-2 + \sqrt{q^2+32 q-32}}{18 (q-1)}.
\end{align}
Note that $t_{1}(2) = t_{2}(2) = 1/3$ but that $t_1(q) < t_2(q)$ for all $q \geq 3$.

\begin{theoremA}
For any $t \in [0,1]$, as $n \rightarrow \infty$ 
the Lee-Yang zeros for the $q \geq 2$ state Potts Model on the {\rm (}rooted or unrooted{\rm )} binary Cayley Tree accumulate to $z \in (0,\infty)$ if and only if $t \in [0,t_{2}(q)]$. They do so at a single point 
\begin{align*}
  z_c(t,q) =
	\begin{cases}
		  1\ & \mbox{if }\ 0\leq t\leq t_{1}(q) \\
  \mathcal{Z}_q(t)\ & \mbox{if }\ t_{1}(q) <t\leq t_{2}(q)
\end{cases}
\end{align*}
with
\begin{align}\label{EQN:CALZ_Q}
\mathcal{Z}_q(t) &:= \frac{\left(\substack{
(-27 (q-1)^2 t^4+18 \left(q^2-3 q+2\right) t^3+\left(q^2+14 q-14\right) t^2 +2 (q-2) t+1))
 \\ 
 - \sqrt{(t-1) ((q-1) t+1) \left(9 (q-1) t^2-(q-2) t-1\right)^3}
}\right)}
     {8 t ((q-2) t+1)^3}.
 \end{align}
\end{theoremA}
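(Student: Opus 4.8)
The plan is to use the renormalization of Theorem~D to convert the vanishing of $Z_n(z,t_0)$ into a dynamical statement, and then to locate the accumulation set on $(0,\infty)$ by means of the active/passive dichotomy. Writing $Z_n^{0}$ and $Z_n^{1}$ for the partition functions of $\Gamma_n$ with the root spin frozen to $0$ and to a fixed nonzero value (all nonzero values being equivalent by symmetry), Theorem~D produces a first-order recursion for the ratio $y_n := Z_n^{1}/Z_n^{0}$ of the form $y_{n+1}=\phi_z(y_n)$, where, for fixed $t$,
\[
\phi_z(y) \;=\; z\left(\frac{1+(t^{-1}+q-2)\,y}{\,t^{-1}+(q-1)\,y\,}\right)^{2}
\]
is a degree-two rational map depending holomorphically on the parameter $z$, with marked starting point $a(z)=y_0=z$. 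Since $Z_n = Z_n^{0}\bigl(1+(q-1)y_n\bigr)$, the Lee--Yang zeros of $Z_n(z,t_0)$ are exactly the parameters $z$ at which the $n$-th iterate $g_n(z):=\phi_z^{\,n}(a(z))$ hits the target value $v=-1/(q-1)$. Thus $\mathcal{B}(t_0,q)$ from (\ref{DEF_SETS_B}) is the closure of $\bigcup_n\{g_n=v\}$, and by the active/passive dichotomy for iteration of a marked point a point $z_0$ lies in this closure precisely when $a$ is \emph{active} at $z_0$; so it remains to compute the activity locus and intersect it with $(0,\infty)$.

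The key simplification in the ferromagnetic regime $t\in(0,1)$ is that $\phi_z$ restricts to a \emph{monotone increasing} self-map of $(0,\infty)$: all coefficients are positive, and $t<1$ forces $(t^{-1}+q-2)t^{-1}>q-1$, so the inner M\"obius factor is increasing and positive there. Monotone interval maps have no cycles of period greater than one, so every real orbit converges monotonically to a fixed point, and the real dynamics is governed entirely by the fixed-point cubic coming from $\phi_z(y)=y$. I would show that the marked point is passive on any open subinterval of $(0,\infty)$ on which its orbit is attracted to a fixed point of multiplier of modulus strictly less than one: such hyperbolic attraction persists to a complex neighborhood, giving normality of $\{g_n\}$ and hence no accumulation of zeros. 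Activity can therefore occur only where the limit $z\mapsto\lim_n g_n(z)$ is discontinuous, which happens through exactly two mechanisms: (i) the marked point $a(z)=z$ sits on a repelling fixed point, forcing its orbit to split; and (ii) a saddle-node bifurcation annihilates the attracting fixed point to which the marked point was converging. Mechanism~(i) reduces to $\phi_z(z)=z$, whose only positive solution is $z=1$, and a multiplier computation shows that the fixed point $y=1$ at $z=1$ is repelling for $t<t_1(q)$, neutral at $t=t_1(q)$, and attracting for $t>t_1(q)$; this yields $z_c=1$ for $t\le t_1(q)$. Mechanism~(ii) is the vanishing of the discriminant of the fixed-point cubic; solving this saddle-node condition for $z$ produces the explicit root $\mathcal{Z}_q(t)$ of (\ref{EQN:CALZ_Q}), and the requirement that the colliding fixed points be real forces $9(q-1)t^2-(q-2)t-1\le 0$, i.e.\ $t\le t_2(q)$, which is exactly the range on which the square root in (\ref{EQN:CALZ_Q}) is real. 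At $t=t_1(q)$ the two mechanisms meet, since the saddle-node fixed point occurs at $z=1$ there, so $\mathcal{Z}_q(t_1(q))=1$ and the formula (\ref{EQN:DEF_T1_T2}) is consistent.

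With these ingredients the three assertions follow. To prove accumulation to a \emph{single} point, I would carry out the global bookkeeping of the fixed-point cubic as $z$ ranges over $(0,\infty)$ for fixed $t$: counting positive fixed points (one or three, by monotonicity and Descartes' rule), tracking the position of $a(z)=z$ relative to the repelling fixed point, and verifying that among the possible bifurcations exactly one produces a genuine jump of $\lim_n g_n$ while the relevant attracting fixed point is hyperbolic at every other $z$. Genuine activity at $z_c$ is then confirmed by the discontinuity of the limit together with the fact that $v=-1/(q-1)$ is not an exceptional value for the family, so solutions of $g_n=v$ do accumulate there. The boundary values $t=0$ and $t=1$ are handled directly from the closed forms in (\ref{EQN:PARTITION_FXN_WHEN_T_EQUALS_01}): for $t=0$ the zeros lie on a circle of radius tending to $1$ and hence meet $(0,\infty)$ only at $z=1=z_c$, while for $t=1$ all zeros equal $-1/(q-1)<0$, giving no accumulation on the positive ray (consistent with $1>t_2(q)$). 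Finally, the unrooted tree $\hat{\Gamma}_n$ is treated by expressing $\hat{Z}_n$ through the same ratios $y_n$ at its central vertex, which leaves the marked point, its activity locus, and therefore $\hat{\mathcal{B}}(t,q)\cap(0,\infty)$ unchanged.

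The main obstacle I anticipate is the global real-dynamics bookkeeping in the middle range $t_1(q)<t\le t_2(q)$: although the fixed-point cubic may undergo two saddle-nodes and the marked point may meet several fixed points as $z$ grows, one must show that only \emph{one} parameter value produces a discontinuity of $\lim_n g_n$, with hyperbolic attraction---hence complex passivity and no accumulation---at all others. Establishing this ``at most one'' statement, and upgrading real passivity to passivity on a full complex neighborhood via hyperbolicity, is the technical heart of the argument; the explicit algebra producing $\mathcal{Z}_q(t)$, $t_1(q)$, and $t_2(q)$, while lengthy, is routine once the correct bifurcation conditions have been identified.
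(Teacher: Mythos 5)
Your proposal follows essentially the same route as the paper: your $\phi_z$ is exactly the renormalization map $R_{z,t,q}$ (after dividing numerator and denominator by $t$), you reduce via Theorem D and the active/passive dichotomy to computing $\mathcal{A}(t,q)\cap(0,\infty)$, you exploit monotonicity of the real dynamics for $t\in(0,1)$, and your two activity mechanisms are precisely the paper's Lemma \ref{lemma 4.0.5}, with mechanism (i) handled by Lemma \ref{lemma 4.0.6} and the multiplier computation giving $t_1(q)$, and mechanism (ii) handled by the discriminant computation of Lemma \ref{lemma 4.0.7} giving $t_2(q)$ and the explicit roots. However, the proposal has a genuine gap at exactly the point you flag as ``the technical heart'': the discriminant quadratic in $z$ has \emph{two} positive roots $N_\pm(t,q)$ for \emph{every} $t\in(0,t_2(q)]$ (including $t<t_1(q)$; see the last claim of Lemma \ref{lemma 4.0.7}), so mechanism (ii) by itself produces two candidate accumulation points at every such temperature, while the theorem asserts a single accumulation point: $1$ for $t\le t_1(q)$, and $N_-(t,q)=\mathcal{Z}_q(t)$ --- not $N_+(t,q)$ --- for $t_1(q)<t\le t_2(q)$. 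Deciding which saddle-node parameter is active requires knowing where the orbit of the marked point actually converges there: the paper's Lemma \ref{LEM:TECHNICAL} (Appendix \ref{APPENDIX_TECHNICAL_LEMMAS}) shows that at $z=N_+(t,q)$ the orbit always converges to the surviving attracting fixed point (hence passivity, by Lemma \ref{Lemma 3.1.1}), that the same holds at $z=N_-(t,q)$ when $t<t_1(q)$, and that only at $z=N_-(t,q)$ with $t_1(q)<t<t_2(q)$ does the orbit converge to the neutral fixed point, whereupon a one-sided perturbation (for $z$ slightly below $N_-$ the orbit jumps to an attracting fixed point a definite distance away) breaks normality. The paper proves this by ordering $w_N$, $w_A$, and the marked point region-by-region, using continuity of fixed points away from the discriminant locus anchored by sample-parameter computations. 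Since your write-up defers exactly this step, it establishes neither the ``single point'' claim nor the selection of $N_-$ over $N_+$.

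A secondary imprecision: your opening assertion that $z_0\in\mathcal{B}(t,q)$ \emph{precisely when} the marked point is active at $z_0$ is false as a general principle --- every zero of every finite-level partition function lies in $\mathcal{B}(t,q)$ by definition, and such parameters are typically passive. The equality holds only after intersecting with $(0,\infty)$, and the inclusion $\mathcal{B}(t,q)\cap(0,\infty)\subset\mathcal{A}(t,q)$ requires the positivity argument of Lemma \ref{LEM:RELATING_ACTIVITY_LOCUS_TO_ACCUMULATION_LOCUS2}: along the positive axis any locally uniform limit of $g_n$ is $\ge 0$, whereas the target value $1/(1-q)$ is negative. Normality alone --- which is what you invoke (``giving normality of $\{g_n\}$ and hence no accumulation of zeros'') --- does not preclude accumulation of zeros; your hyperbolic-attraction mechanism does the job only once it is combined with the observation that the limiting fixed point is positive and hence bounded away from $1/(1-q)$.
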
 

\begin{remark}
    Theorem A can be reformulated as saying that for any $q \in \mathbb{N}_{\geq 2}$ we have that $\mathcal{B}(t,q) \cap (0,\infty)$ consists of a single point $z_c(t,q)$ for $0 \leq t \leq t_2(q)$ and that $\mathcal{B}(t,q) \cap (0,\infty) = \emptyset$ for $t_2(q) < t \leq 1$.   The same holds when $\mathcal{B}(t,q)$ (associated with the rooted Cayley Tree) is replaced with $\hat{\mathcal{B}}(t,q)$ (associated with the unrooted Cayley Tree).
\end{remark}

\begin{remark}
In the case of the Ising Model we have $t_1(2) = t_2(2) = 1/3$ so that for $t \in [0,1]$ the only $z \in (0,\infty)$ at which the Lee-Yang zeros can accumulate is $z = 1$.  This is consistent with the Lee-Yang circle theorem. However, when $q \geq 3$ an interesting new phenomenon occurs for the Potts Model.   For the non-empty
interval $t \in (t_1(q),t_2(q)]$ the Lee-Yang zeros accumulate at $z_c(t,q) = \mathcal{Z}_q(t) < 1$.    See Figure \ref{graph of $y=Z(t)$} for the case when $q=3$.
\end{remark}

 \begin{figure}[h!]
    \centering
    \includegraphics[width=0.8\linewidth]{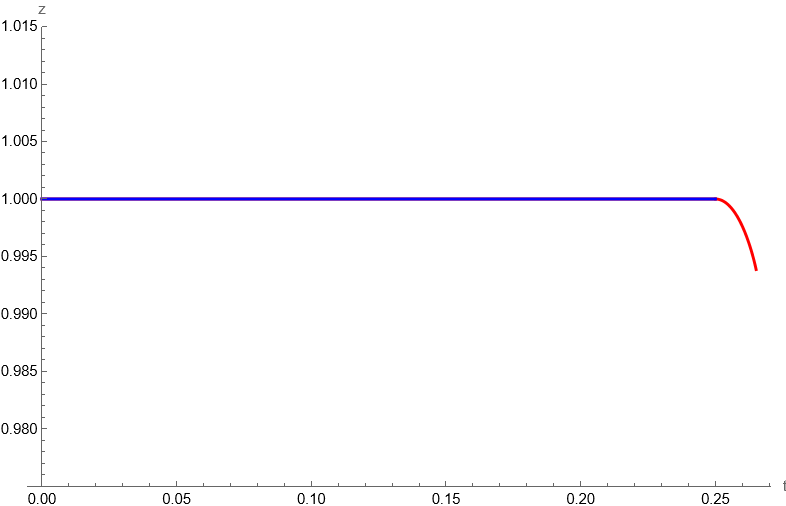}
    \caption{Graph of $z=z_c(t,3)$ (Theorem A).  For $0 \leq t \leq 1/4$ we have
    $z_c(t,3) = 1$ (blue) and for $1/4 < t \leq t_2(q) \approx 0.2651$ we have $z_c(t,3) = \mathcal{Z}_q(t) < 1$ (red).}
    \label{graph of $y=Z(t)$}
\end{figure}

\vspace{0.1in}

We now consider the antiferromagnetic case $J < 0$ in which physical values of $z$ and $t$ correspond
to $z \in (0,\infty)$ and $t > 1$.  Let
\begin{align}\label{EQN:T3}
t_3(q) = \frac{3 \left(3 q-6 + \sqrt{9 q^2-32 q+32}\right)}{2 (q-1)}.
\end{align}

\vbox{
\begin{theoremB}
For any $t > 1$,
as $n \rightarrow \infty$ the Lee-Yang zeros for the $q \geq 3$ state Potts Model on the {\rm (}rooted or unrooted{\rm )} binary Cayley tree accumulate to $z \in (0,\infty)$ if and only if 
  $t\geq t_3(q)$ and $z=z_c^{\pm}(t,q)$  where

\begin{align}\label{EQN:Z_C_PM}
z_c^{\pm}(t,q)  :=
\frac{\left(\substack{
(-3 - 6  (-2 + q)  t - 3  (2 + (-2 + q)  q)  t^2 -
   6  (-2 + q)  (-1 + q)  t^3 + (-1 + q)^2  t^4 )
 \\
 \pm \sqrt{(-1 + t)^3  (1 - t + q t)^3  (-9 + 18  t - 9  q  t - t^2 +
       q  t^2)}
}\right)}
     {8  t  (1 - 2 t + q t)^3}.
 \end{align}

\end{theoremB}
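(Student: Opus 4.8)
The plan is to recast the Lee--Yang zeros as the parameters at which the orbit of a marked point under a fixed rational map hits a prescribed target, and then to identify their accumulation set with the real slice of the activity locus of that marked point. Concretely, I would first write the renormalization of Theorem~D in the form adapted to the Potts model: letting $a_n,b_n$ be the partition function of $\Gamma_n$ conditioned on the root being in state $0$, respectively in a fixed nonzero state, the two--subtree construction of $\Gamma_{n+1}$ gives
\[
a_{n+1} = \frac{1}{z}\bigl(t^{-1}a_n + (q-1)b_n\bigr)^2, \qquad
b_{n+1} = \bigl(a_n + (t^{-1}+q-2)b_n\bigr)^2 .
\]
Passing to $x_n:=a_n/b_n$ reduces this to iteration of the single degree--two map
\[
f_{z,t}(x) = \frac{1}{z}\left(\frac{x+(q-1)t}{t\,x+1+(q-2)t}\right)^{2}, \qquad x_0=\frac{1}{z},
\]
and since $Z_n=a_n+(q-1)b_n=b_n(x_n+q-1)$, a parameter $z$ is a Lee--Yang zero of $Z_n$ exactly when $f_{z,t}^{\,n}(1/z)=1-q$. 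The unrooted partition function $\hat Z_n$ is, via the degree--three combination at the central vertex, a fixed ($n$--independent) algebraic expression in the same iterate $f_{z,t}^{\,n-1}(1/z)$, so it has the same activity locus; it therefore suffices to study $f_{z,t}$, its marked point $x_0(z)=1/z$, and the target $1-q$, which I would check is an ordinary point of $f_{z,t}$ for $z\in(0,\infty)$.

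Next I would apply the active/passive dichotomy to the holomorphic family $z\mapsto g_n(z):=f_{z,t}^{\,n}(1/z)$, with $t>1$ and $q\ge 3$ fixed. By Montel's theorem, at any $z_0$ where $\{g_n\}$ fails to be normal the value $1-q$---not among the at most two omitted values---is attained infinitely often in every neighborhood, so $z_0\in\mathcal{B}(t,q)$; conversely on each component of normality the $g_n$ converge locally uniformly to the attracting--cycle value, so (as this value is not $1-q$) no zeros accumulate there. This identifies $\mathcal{B}(t,q)\cap(0,\infty)$---and, by the previous paragraph, $\hat{\mathcal{B}}(t,q)\cap(0,\infty)$---with the intersection of the activity locus of the marked point with the physical ray.

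It then remains to analyze the real one--dimensional dynamics of $f_{z,t}$ on $\hat{\mathbb R}$ for $t>1$, where, reflecting the alternating antiferromagnetic ordering on the bipartite tree, the governing object is a period--two cycle rather than a fixed point. I would show that for $z\in(0,\infty)$ off a finite set the marked point $1/z$ is attracted to an attracting real fixed point or attracting real $2$--cycle and is therefore passive, and that activity on the ray occurs exactly where the relevant $2$--cycle is parabolic, i.e. along the saddle--node locus $f_{z,t}^{2}(x)=x$, $(f_{z,t}^{2})'(x)=1$, with the marked orbit accumulating on that cycle. Eliminating the cycle coordinate from this system should collapse to a single quadratic in $z$ whose two roots are $z_c^{\pm}(t,q)$; its discriminant equals, up to a positive factor, $(t-1)^3((q-1)t+1)^3\bigl((q-1)t^2-9(q-2)t-9\bigr)$, whose first two factors are positive for $t>1$, so the roots are real precisely when $(q-1)t^2-9(q-2)t-9\ge 0$, that is when $t\ge t_3(q)$, the larger root in $t$ of that quadratic. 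For $1<t<t_3(q)$ the two candidates are complex conjugate and leave the ray, giving the ``only if'' direction, while for $t\ge t_3(q)$ the real points $z_c^{\pm}$ are active, giving the ``if'' direction and the explicit forms in~\eqref{EQN:Z_C_PM}.

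The main obstacle is the global control required in the last step: proving that $z_c^{\pm}$ are the \emph{only} activity parameters on $(0,\infty)$ means ruling out every other bifurcation of the marked orbit---no additional change of itinerary, no period--doubling cascade reaching the ray, and a monotone dependence of the $2$--cycle multiplier on $z$---together with the explicit elimination that produces the closed forms. A secondary technical point is to verify at the bifurcation parameters that the marked point genuinely lies on the Julia set, so that $z_c^{\pm}$ are active rather than being mere limits of passivity, and to confirm that $\mathcal{B}(t,q)$ and $\hat{\mathcal{B}}(t,q)$ agree on the physical ray.
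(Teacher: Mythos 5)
Your high-level skeleton coincides with the paper's: your $f_{z,t}$ with marked point $1/z$ and target $1-q$ is exactly the paper's $R_{z,t,q}$ with marked point $z$ and target $1/(1-q)$, conjugated by $w\mapsto 1/w$; the identification of the accumulation set with the real slice of the active locus is the paper's Lemmas \ref{LEM:RELATING_ACTIVITY_LOCUS_TO_ACCUMULATION_LOCUS} and \ref{LEM:RELATING_ACTIVITY_LOCUS_TO_ACCUMULATION_LOCUS2}; and your elimination from the saddle-node system for the second iterate, with discriminant factor $(t-1)^3((q-1)t+1)^3((q-1)t^2-9(q-2)t-9)$ and $t_3(q)$ its larger root, agrees with the paper's computation of $Q_2$. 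The problem is that the two steps which constitute the actual proof are left open, and the mechanisms you propose for them would fail. For activity at $z=z_c^{\pm}$ you plan to show the marked orbit accumulates on the parabolic $2$-cycle and then ``verify that the marked point genuinely lies on the Julia set.'' That criterion is a misconception: an orbit attracted to a parabolic cycle lies in the parabolic basin, hence in the \emph{Fatou} set, and Julia-set membership is in any case neither necessary nor sufficient for parameter activity --- compare the paper's Theorem A, where at $z=N_-(t,q)$ the marked point is attracted to the neutral fixed point (so it is in the Fatou set) yet the parameter is active, while at $z=N_+(t,q)$ the map also has a parabolic fixed point yet the parameter is passive. The paper obtains activity by a purely parameter-space argument, Lemma \ref{LEM:2PERIODS_IMPLIES_ACTIVE}: arbitrarily close to $z_c^{+}$ there are parameters (in $\mathcal{R}_1$, i.e.\ $z>z_c^{+}$) where the marked point converges to an attracting \emph{fixed} point, and parameters (in $\mathcal{R}_2$, i.e.\ $z_c^{-}<z<z_c^{+}$) where it converges to an attracting \emph{$2$-cycle}; the coexistence of two periods forces non-normality. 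Nothing in your sketch produces an argument of this kind.

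Second, the completeness direction --- that every $z\in(0,\infty)$ other than $z_c^{\pm}(t,q)$ is passive --- is precisely what you defer as ``the main obstacle,'' but it is the bulk of the proof, and the tools that make it tractable are absent from your plan: (a) the second iterate $R^2_{z,t,q}$ is \emph{increasing} on $[0,\infty)$ and maps it into a compact subinterval of $(0,\infty)$, so every positive orbit converges to a fixed point of $R^2_{z,t,q}$; this is what rules out, a priori, period-doubling cascades and all real periods other than $1$ and $2$; (b) the parameter regions $\mathcal{R}_1$ and $\mathcal{R}_2$ are path-connected, and since no neutral fixed points exist there, the attracting/repelling configuration can be verified at one explicit parameter per region; (c) the marked point can itself be a fixed point only at $z=1$, where $R'_{1,t,q}(1)=(2-2t)/((q-1)t+1)\in(-1,0)$ for $t>1$, $q\geq 3$, so it is attracting and hence passive --- this is exactly where the hypothesis $q\geq 3$ enters, and it is why Theorem C ($q=2$, where this multiplier reaches $-1$ at $t=3$ and the fixed point becomes repelling) has the extra accumulation point $z=1$; your proposal never examines this case, so it cannot explain the restriction to $q\geq 3$. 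A smaller inaccuracy: your claim that on components of normality the functions $g_n$ converge to ``the attracting-cycle value'' is unjustified (normal limits need not have this form); the paper's argument that passive physical parameters are not in $\mathcal{B}(t,q)$ uses only that real orbits stay positive while the target $1/(1-q)$ is negative.
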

}

\noindent
A plot of $z_c^\pm(t,3)$ versus $t > 1$ is shown in Figure \ref{Fig 1.6}.  


\begin{remark}
    Theorem B can be reformulated as saying that for any $q \in \mathbb{N}_{\geq 2}$ we have that $\mathcal{B}(t,q) \cap (0,\infty) = \emptyset$ for $1 \leq t < t_3(q)$ and that $\mathcal{B}(t,q) \cap (0,\infty)$ consists of the (one or) two points $z_c^{\pm}(t,q)$ for $t_3(q) \leq t < \infty$.   The same holds when $\mathcal{B}(t,q)$ (associated with the rooted Cayley Tree) is replaced with $\hat{\mathcal{B}}(t,q)$ (associated with the unrooted Cayley Tree).
\end{remark}

\begin{figure}[h]
    \centering
    \includegraphics[width=0.8\linewidth]{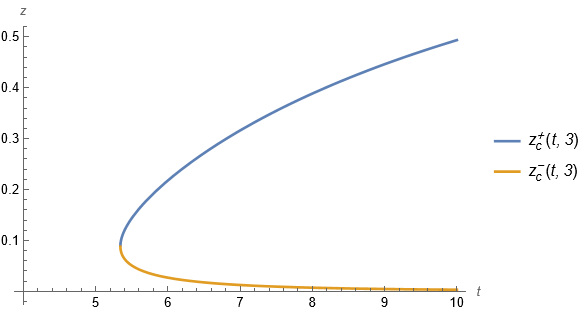}
    \caption{Graph of $z_c^-(t,3)$ and $z_c^+(t,3)$ ($Theorem\ B$).}
    \label{Fig 1.6}
\end{figure}

The antiferromagnetic case for the Ising Model ($q=2$) is slightly different than in Theorem B because one also has Lee-Yang zeros accumulating to $z=1$ when $t \geq t_3(2) = 3$.  We record this fact here:

\begin{theoremC}
For any $t > 1$,
as $n \rightarrow \infty$ the Lee-Yang zeros for the Ising Model ($q = 2$) on the {\rm (}rooted or unrooted{\rm )} binary Cayley tree accumulate to $z \in (0,\infty)$ if and only if 
  $t\geq t_3(q)$ and $z=1$ or $z=z_c^{\pm}(t,2)$  where
\begin{align}\label{EQN:Z_C_PM_QIS2}
z_c^{\pm}(t,2) &:= \frac{t^4-6t^2-3 \pm \sqrt{(t^2-1)^3 (t^2-9)}}{8t}.
 \end{align}
\end{theoremC}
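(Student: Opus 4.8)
The plan is to recast $\mathcal{B}(t,2)\cap(0,\infty)$ in the dynamical language already used for Theorems A and B, and then to isolate the one feature special to $q=2$. By the renormalization of Theorem D, the ratio $r_n(z):=Z_n^{(0)}(z,t)/Z_n^{(\neq0)}(z,t)$ of the partition function with root spin $0$ against a nonzero root spin satisfies $r_n=\Phi_z(r_{n-1})$ for a one-parameter family of degree-two rational maps; tracking $r_n$ one finds
\[ \Phi_z(r)=\frac1z\left(\frac{r+t(q-1)}{t\,r+1+t(q-2)}\right)^2, \]
so that $r_n(z)=\Phi_z^{\,n}(a(z))$ with marked point $a(z):=r_0(z)=1/z$, and (after removing the factor $Z_n^{(\neq0)}$) one has $Z_n=0$ exactly when $\Phi_z^{\,n}(a(z))=1-q$. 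As in the proofs of Theorems A and B, $\mathcal{B}(t,2)\cap(0,\infty)$ is therefore the real activity locus of $a$ for the family $\{\Phi_z\}$, i.e. the set of $z\in(0,\infty)$ where $\{z\mapsto\Phi_z^{\,n}(a(z))\}_n$ fails to be normal. Specializing the Theorem B analysis to $q=2$ produces exactly the points $z_c^\pm(t,2)$ and the threshold $t_3(2)=3$: one checks directly that \eqref{EQN:Z_C_PM} and \eqref{EQN:T3} reduce at $q=2$ to \eqref{EQN:Z_C_PM_QIS2} and to $t_3(2)=3$. The entire content of Theorem C beyond Theorem B is thus the additional accumulation point $z=1$.

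This extra point comes from the Ising spin-flip symmetry, present only when $q=2$: the involution $\sigma\mapsto 1-\sigma$, $h\mapsto -h$ (i.e. $z\mapsto 1/z$) leaves the model invariant and fixes $z=1$. Dynamically, for every $q$ the value $r=1$ is a fixed point of $\Phi_1$ and the marked point is pinned to it, $a(1)=1$, with multiplier
\[ \lambda(t)=\Phi_1'(1)=\frac{2(1-t)}{1+t(q-1)}<0\quad(t>1). \]
For $q=2$ this is $\lambda(t)=2(1-t)/(1+t)$, and $|\lambda(t)|=1$ exactly at $t=3=t_3(2)$ --- a period-doubling ($\lambda=-1$) threshold reflecting staggered antiferromagnetic order. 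For $q\ge 3$, by contrast, $|\lambda(t)|=2(t-1)/(1+t(q-1))<2/(q-1)\le 1$ for all $t>1$, so $r=1$ remains attracting and $z=1$ is never active; this is precisely why $z=1$ is absent from Theorem B.

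I would then settle the two regimes at $z=1$ for $q=2$. For $1<t<3$ the fixed point $r=1$ is attracting, so near $z=1$ the iterates $\Phi_z^{\,n}(a(z))$ converge locally uniformly, the family is normal, and since the limit value $1$ differs from the target $-1$, Hurwitz's theorem pushes the solutions of $\Phi_z^{\,n}(a(z))=-1$ away from $z=1$; as the discriminant $(t^2-1)^3(t^2-9)$ in \eqref{EQN:Z_C_PM_QIS2} is negative for $t<3$ (so $z_c^\pm$ are non-real) and the Theorem B analysis gives no other real activity, we get $\mathcal{B}(t,2)\cap(0,\infty)=\emptyset$. For $t>3$ the fixed point is repelling, and the marked point meets it transversally: writing $r^*(z)$ for the fixed-point branch through $r^*(1)=1$, a short computation gives $a'(1)-(r^*)'(1)=2(1-t)/(3t-1)\ne 0$. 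A marked point crossing a repelling fixed point transversally is active, so $z=1\in\mathcal{B}(t,2)$; and since $-1$ is not an exceptional point of $\Phi_1$ (it maps to the repelling fixed point $1$), the equations $\Phi_z^{\,n}(a(z))=-1$ have roots $z_n\to 1$, giving genuine accumulation. Together with the $z_c^\pm(t,2)$ supplied by Theorem B, this yields the stated accumulation set for every $t>3$.

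The main obstacle I anticipate is the boundary parameter $t=3$, where $\lambda=-1$ makes $r=1$ parabolic and where the three points collide, $z_c^\pm(3,2)=1$ (the discriminant vanishes). There the clean attracting/repelling dichotomy fails, and one must show accumulation persists at this single confluent point --- most naturally by exhibiting $z=1$ as the limit of the active period-doubling locus carrying $z_c^\pm$, so that the closed condition $t\ge t_3(2)$ is exactly reproduced. Two smaller tasks remain: confirming that the spin-flip symmetry creates no real activity beyond $\{1,z_c^+,z_c^-\}$ (equivalently, that the marked point is passive on the rest of $(0,\infty)$), which follows from the Theorem B analysis at $q=2$; and transferring the result to the unrooted tree, which goes exactly as in Theorems A and B because the only difference --- a cube in place of the square at the final renormalization step, with the target adjusted accordingly --- does not alter the $n\to\infty$ activity locus of the shared map $\Phi_z$.
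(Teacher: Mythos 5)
Your proposal follows essentially the same route as the paper's proof: reduce Theorem C to the activity locus of the marked point under the renormalization dynamics (your $\Phi_z$ is just the paper's $R_{z,t,2}$ conjugated by $w \mapsto 1/w$), reuse the Theorem B machinery (adapted to $q=2$) to produce $z_c^{\pm}(t,2)$ and passivity elsewhere, and obtain the extra accumulation point $z=1$ from the marked point landing non-persistently on the fixed point $w=1$, whose multiplier $2(1-t)/(1+t)$ makes it repelling exactly when $t>3$. The boundary subtleties you flag are the same ones the paper handles --- the $q=2$ version of the region lemma (proved via monotonicity of $z_c^{\pm}(t,2)$ in $t$, since $z_c^{+}(t,2)$ crossing $z=1$ breaks the $q\geq 3$ connectedness argument) and the parabolic case $t=3$ (via closedness of the active locus) --- so your proposal is correct and matches the paper's argument.
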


\noindent
A plot of $z_\pm(t,2)$ and also $z=1$ versus $t > 1$ that illustrates Theorem C is given in
Figure \ref{fig:ZCPM_QIS2}.

\begin{figure}
    \centering
    \includegraphics[width=0.8\linewidth]{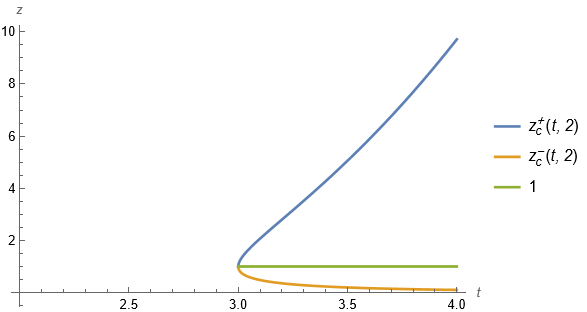}
    \caption{Graphs of $z^-_c(t,2)$ and $z^+_c(t,2)$ (Theorem $C$).}
    \label{fig:ZCPM_QIS2}
\end{figure}

The following theorem is what enables us to use the methods from dynamical systems to prove Theorems A and B above. It is a generalization to the Potts model of a result used by M\"uller-Hartmann-Zittartz \cite{MHZ}, Barata-Marchetti \cite{BarataAndMarchetti}, Chio-He-Ji-Roeder \cite{Limiting}, and others for the Ising model on the Cayley tree.

\begin{theoremD}For $q\in \mathbb{N}_{\geq 2}$, $t\in \mathbb{R}$, and $z,w\in \mathbb{C}$ define $R_{z,t,q}(w)$ and $\hat{R}_{z,t,q}(w)$ by 
\begin{align}\label{EQN:RENORM_MAP}    
R_{z,t,q}(w):=z\left[\frac{t+w+(q-2)tw}{1+(q-1)tw}\right]^2 \qquad \mbox{and} \qquad \hat{R}_{z,t,q}(w):= z\left[\frac{t+w+(q-2)tw}{1+(q-1)tw}\right]^3.
\end{align}
Then:
\begin{itemize}
    \item[(i)] the Lee-Yang zeros of the $q$-state Potts model on the $n^{th}$ level rooted binary Cayley tree are the solutions $z$ to 
    \begin{align*}
        R^n_{z,t,q}(z)=\frac{1}{1-q}, \qquad \mbox{and}
    \end{align*}
    \item[(ii)]the Lee-Yang zeros of the $q$-state Potts model on the $n^{th}$ level unrooted binary Cayley tree are the solutions $z$ to 
    \begin{align*}
        \left(\hat{R}_{z,t,q} \circ R^{(n-1)}_{z,t,q}\right)(z)=\frac{1}{1-q}.
    \end{align*}
\end{itemize}
\end{theoremD}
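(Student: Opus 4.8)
The plan is to exploit the self-similar structure of the Cayley tree: I will derive a recursion for the partition function conditioned on the spin at the root (respectively the central vertex), and then recognize $R_{z,t,q}$ (respectively $\hat R_{z,t,q}$) as the induced dynamics on the ratio of two such conditional partition functions. For the rooted tree $\Gamma_n$ with root $r$, define for each spin value $s\in\{0,\dots,q-1\}$ the conditional sum $Z_n^{(s)}:=\sum_{\sigma:\,\sigma(r)=s}W_n(\sigma)$, so that $Z_n=\sum_{s}Z_n^{(s)}$. Since the magnetic field term in the Hamiltonian singles out only the spin value $0$, the energy is invariant under permutations of the nonzero spins; hence $Z_n^{(s)}$ shares a common value $b_n$ for all $s\neq 0$, while $Z_n^{(0)}=:a_n$. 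This reduces the problem to tracking the pair $(a_n,b_n)$.

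Next I would use the decomposition of $\Gamma_n$ into its root joined by an edge to each of two copies of $\Gamma_{n-1}$, with $\Gamma_0$ a single vertex. Conditioning on $\sigma(r)=s$, the weight factors into the field weight of $r$ (namely $z^{-1}$ if $s=0$ and $1$ otherwise) times the product over the two independent subtrees, each subtree contributing $\sum_{s'}e(s,s')Z_{n-1}^{(s')}$, where the edge Boltzmann weight is $e(s,s')=t^{-1}$ when $s'=s$ and $1$ otherwise. Evaluating the subtree factor gives $t^{-1}a_{n-1}+(q-1)b_{n-1}$ for $s=0$ and $a_{n-1}+(t^{-1}+q-2)b_{n-1}$ for $s\neq0$; squaring (two subtrees) yields $a_n=z^{-1}\bigl(t^{-1}a_{n-1}+(q-1)b_{n-1}\bigr)^2$ and $b_n=\bigl(a_{n-1}+(t^{-1}+q-2)b_{n-1}\bigr)^2$.

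These two expressions are homogeneous of degree two in $(a_{n-1},b_{n-1})$, so they descend to a well-defined self-map of the Riemann sphere in the ratio coordinate $w_n:=b_n/a_n\in\hat{\mathbb{C}}$; clearing the factors of $t^{-1}$ by multiplying numerator and denominator by $t$ shows this induced map is exactly $w_n=R_{z,t,q}(w_{n-1})$. The base case $\Gamma_0$ (a single vertex) has $a_0=z^{-1}$ and $b_0=1$, so $w_0=z$ and hence $w_n=R_{z,t,q}^n(z)$. A Lee-Yang zero is a value of $z$ at which $Z_n=a_n+(q-1)b_n=0$, i.e.\ (reading the condition on $\hat{\mathbb{C}}$) $w_n=\tfrac{1}{1-q}$; this gives part (i). For part (ii), $\hat\Gamma_n$ decomposes as a central degree-three vertex joined to three copies of $\Gamma_{n-1}$, so the subtree factor is cubed rather than squared; the induced ratio map at the final step is therefore $\hat R_{z,t,q}$ applied to $w_{n-1}=R_{z,t,q}^{n-1}(z)$, and the zero condition $\hat w_n=\tfrac{1}{1-q}$ becomes $(\hat R_{z,t,q}\circ R_{z,t,q}^{n-1})(z)=\tfrac{1}{1-q}$.

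The one delicate point, and the step I would treat most carefully, is the passage from the polynomial statement $Z_n=0$ to the equation on $\hat{\mathbb{C}}$: I must work with the ratio $w_n$ projectively, as $[b_n:a_n]$, so that dividing by $a_n$ neither loses nor creates solutions, and I must check that clearing the $z^{-1},t^{-1}$ denominators to match the polynomial normalization of $Z_n$ used in the Introduction introduces no spurious zeros at $z=0$ or $z=\infty$ that would affect the enumeration of Lee-Yang zeros on $(0,\infty)$. Once the recursion is set up on the sphere this is routine, and the remaining simplifications reducing the induced maps to the stated closed forms of $R_{z,t,q}$ and $\hat R_{z,t,q}$ are elementary algebra.
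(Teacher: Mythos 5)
Your proposal is correct and follows essentially the same route as the paper's proof: conditional partition functions at the root, the spin-permutation symmetry reducing to the pair $(Z_n^0, Z_n^1)$, the two-subtree (respectively three-subtree) decomposition, and the passage to the ratio $w_n = Z_n^1/Z_n^0$ with $w_0 = z$, yielding $w_n = R_{z,t,q}^n(z)$ and the zero condition $w_n = \tfrac{1}{1-q}$. The only differences are cosmetic: you use the literal edge weights $t^{-1}$ (equal spins) versus the paper's rescaled weights $t$ (unequal spins), which differ by an overall factor $t^{-|E_n|}$ that cancels in the ratio, and you explicitly flag the projective treatment of $[Z_n^1 : Z_n^0]$, a point the paper passes over silently.
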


\begin{remark} The expression $R^n_{z,t,q}(z)$ means that one first iterates $R_{z,t,q}(w)$ $n$-times with respect to $w$ and then substitutes $w=z$.  The expression in Claim (ii) is interpreted analogously.
\end{remark}

\subsection{Comparison of temperatures $t_1(q)$ and $t_2(q)$ to the works of Wang and Wu and of  M\"uller-Hartmann–Zittartz.}

In their 1976 paper \cite{WW}, Wang and Wu considered the ferromagnetic $q$-state Potts Model on the Cayley tree and they computed the temperature $T_c(q) > 0$ such that the zero magnetic field susceptibility diverges for $T < T_c(q)$ but converges for $T > T_c(q)$.    (See also \cite[Section I.E]{WU_review}.) As they explain, this proves that the limiting free energy $f(T,h)$ is non-analytic as $h$ is varied about $0$, at least in the range $0 < T < T_c(q)$.   Expressed in our notation, and specialized to the binary Cayley tree, their result is
\begin{align*}
\frac{J}{T_c(q)} = {\rm ln}\left(\frac{q+\sqrt{2}-1}{\sqrt{2}-1}\right) \qquad \mbox{corresponding to} \qquad t_c = {\rm e}^{-J/T_c} = \frac{\sqrt{2}-1}{q+\sqrt{2}-1}.
\end{align*}
The reader can check that $0 < t_c(q) < t_1(q)$ for all $q \in \mathbb{N}_{q \geq 2}$.   In fact, $t_c(q)$ is the unique value of $t \in (0,1)$ at which $R_{1,t,q}'(1) = \sqrt{2}$.   The reason for this connection between $t_c(q)$ and the derivative of the renormalization map $R_{z,t,q}$ (in the context of the Ising model) is nicely explained in the papers  of M\"uller-Hartmann–Zittartz \cite{MHZ} and of M\"uller-Hartmann \cite{MH}.

Wang and Wu also explain that the higher derivatives of $f(T,h)$ with respect to $h$ diverge at $h=0$ over the wider
range of temperatures $0 < T < T_{\rm BP}(q)$, where $T_{\rm BP}(q)$ is the ``Bethe-Peierls'' temperature.   Thus it is more interesting to compare the Bethe-Peierls temperature $t_{\rm BP}(q) = {\rm e}^{-J/T_{\rm BP}(q)}$ with $t_1(q)$ and $t_2(q)$.   As explained by the papers of M\"uller-Hartmann–Zittartz and of M\"uller-Hartmann in the context of the Ising Model, this corresponds precisely to the unique temperature $t_{\rm BP}(q)$ at which $R_{1,t,q}'(1) = 1$.   A simple calculation shows that $t_{\rm BP}(q) = t_1(q) = \frac{1}{1+q}$.   Indeed, this condition shows up naturally in our proof of Theorem A.

\subsection{Related Works}\label{Related Works}
An early study of the Lee-Yang zeros for the $q \geq 3$ state Potts Model was done numerically by Kim-Creswich \cite{KC} in 1998. Their observations included that the Lee-Yang zeros no longer lie on the unit circle $|z| = 1$ when $q \geq 3$.   Much closer to our paper is the 2002 paper of Myshlyavtsev-Ananikian-Sloot \cite{GAS} where the Lee-Yang zeros for the $q$-state Potts model on the Cayley Tree are studied using a renormalization mapping similar to the one used here, combined with physical reasoning and also numerical experiments.   A main novelty of our paper that goes beyond this nice work of Ghulghazaryan-Ananikian-Sloot is the use of the active/passive dichotomy from complex dynamics; see Section \ref{section 1.9} below.

Other studies regarding the Potts model with non-zero external magnetic field were done by Chang and Shrock in \cite{Chang2009} and \cite{Chang2015}, Shrock and Xu in \cite{Shrock_2010} and \cite{Shrock_2010_2}, McDonald and Moffat \cite{McDonald_Moffat}. 

Studies of the Lee-Yang zeros for the Ising Model are more common, so we mention the ones that are closest to the present work.   Studies of the Lee-Yang zeros for the Ising Model on the Cayley Tree date back to M\"uller-Hartmann \cite{MH}, M\"uller-Hartmann and Zittartz \cite{MHZ}, whose works were followed by that of Barata and Marchetti \cite{BarataAndMarchetti}, Barata and Goldbaum \cite{BarataAndGoldbaum}. Some of the most recent work was done by Chio, He, Ji, and Roeder \cite{Limiting}. 
Extensive studies regarding the Lee-Yang zeros for the Ising Model on the Diamond Hierarchical  Lattice were done by Bleher, Lyubich, and Roeder \cite{BLR}. Studies of Locations of Lee-Yang zeros can be found in the works of Bencs, Buys, Guerini, and Peters \cite{BENCS_BUYS_GUERINI_PETERS_2022}, Regts and Peters \cite{PetersAndGuus}, Camia, Jiang, and Newman \cite{Newman1}, Hou, Jiang, and Newman \cite{Newman2}, and the references therein. Studies of the limiting measure of Lee-Yang zeros for the Curie-Weiss Model were done by Kabluchko \cite{kabluchko2022}.

Note that studying statistical physics on the Cayley tree falls within the wider context of statistical physics on hierarchical lattices. For a sample of recent works see: Akin and Berker \cite{AkinBerker}, De Simoi and Marmi \cite{DEMA}, De Simoi \cite{DE}, Jiang, Qiao, and Lan \cite{JiangQiaoLan}, Myshlyavtsev, Myshlyavtsev, and Akimenko \cite{MYSHLYAVTSEV}, Alvarez \cite{alvarez}, Chio and Roeder \cite{chio}, Chang, Roeder, and Shrock \cite{Chang2015}, and Bleher, Lyubich, and Roeder \cite{bleher2020}.

\subsection{Structure of the paper.}
In Section \ref{section 3} we present some tools from complex dynamics that
will be needed to prove Theorems A-C.   Section \ref{SECTION_PROOF_OF_THMA}
is devoted to the proof of Theorem A and Section \ref{SECTION_PROOF_OF_THMB} is
devoted to the proof of Theorem B.   It is followed by a short Section
\ref{SEC:PROOF_THMC} which sketches the proof of Theorem C.   A technical lemma
that is used in the proofs of Theorem A is presented in Appendix
\ref{APPENDIX_TECHNICAL_LEMMAS}. Even though Theorem D is used in the proofs of
Theorems A-C, we have relegated its proof to the Appendix
\ref{APPENDIX_THMC} because it is relatively standard.   In particular, it is
similar to the derivation of the renormalization mapping for the Ising Model on
the Cayley tree that was presented in \cite{Limiting}. 

\subsection*{Acknowledgments}
We are very grateful to Arnaud Cheritat for making the computer images of the active locus shown in Figures \ref{AC1} through \ref{t=6b} and also for describing his algorithm to us.   We also thank Suzanne and Brian Boyd for their help using the Dynamics Explorer computer software to help us discover the statements proved in this paper.  We are very grateful to Pavel Bleher, Bruce Kitchens, and Robert Shrock for their many helpful suggestions.   We thank Thomas Gauthier for his comments about the active/passive dichotomy and we thank Krishna Chaitanya Kalidindi, Andres Quintero, and Yapeng Zhao for their comments on the writing.   Both authors were supported by NSF grant DMS-2154414.

\section{Background from complex dynamics and initial consequences for our problem.}\index{Complex Dynamics}\label{section 3}
The proofs of Theorem A-C are based on the recursive formula in Theorem D and some ideas in complex dynamics, which we present in
Section \ref{section 1.9}, below.   Section~\ref{SUBSEC:APPLICATIONS_TO_RG_MAP} concerns application of the methods from Section \ref{section 1.9} to the Renormalization mapping $R_{z,t,q}(w)$.
The ideas presented in Sections \ref{section 1.9} and \ref{SUBSEC:APPLICATIONS_TO_RG_MAP} allow us to plot high-quality computer-generated images of the accumulation set of Lee-Yang zeros of the Potts model on the $n^{th}$-level binary Cayley tree as $n\rightarrow \infty$. This will be done in Section \ref{section 1.10}.

\subsection{Active-Passive Dichotomy in Complex Dynamics.}\label{section 1.9}
 We refer the reader to \cite{MILNOR_BOOK} for general background on holomorphic dynamics and to \cite[Section V]{ChangRoederShrock} for a similar discussion of the topics presented below that is also written for an audience from statistical physics.  

Let $\Lambda \subset \mathbb{C}$ be open and let $\mathbb{C}_{\infty}$ be the Riemann sphere \index{Riemann sphere}. Let $(f_\lambda)_{\lambda \in \Lambda}$ be a family of rational maps \index{rational map} from the Riemann sphere to itself that depends holomorphically on the parameter $\lambda$.  Let $a(\lambda)$ be a choice of initial conditions for the iterates of $f_\lambda$ which also depends holomorphically on $\lambda$. It is called a {\em marked point}\index{Marked Point} for $f_\lambda$.

\begin{definition}\label{def 3.1.1} A marked point $a(\lambda)$ is called {\em passive}\index{passive parameter} for $f_\lambda$ at the parameter $\lambda_0$ if the sequence $(g_n)_{n=1}^\infty$ of functions defined by $g_n(\lambda):=f^n_\lambda(a(\lambda))$ forms a normal family\index{normal family} in some neighborhood of~$\lambda_0$. Else, we say $a(\lambda)$ is {\em active}\index{active parameter} for $f_\lambda$ at the parameter $\lambda_0$.
\end{definition}

\noindent
See \cite[Chapter VII]{CONWAY_BOOK1} for the definition of normal family.

\begin{remark}
Historically, the above definitions were applied in the case where the marked point $a(\lambda)$ was a critical point of $f_\lambda(z)$ for each $\lambda \in \Lambda$, to study bifurcations in the dynamics of $f_\lambda$.   See, for example, \cite{MCM}.  However, recently there has been considerable interest in the dynamical properties of noncritical marked points, with some of the motivations coming from problems in arithmetic dynamics.
As a sample of such recent papers, we refer the reader to \cite{DEMARCO1,DEMARCO2,GAUTHIER1,FG_BOOK} and the references therein.
\end{remark}

 The set of all passive parameters for the marked point $a(\lambda)$ is an open set. It is called the \textit{passive locus for $a(\lambda)$}\index{passive locus}. Similarly, the set of active parameters for the marked point $a(\lambda)$ is called the \textit{active locus for $a(\lambda)$}\index{active locus}.

\begin{lemma}\label{prop 3.1.1}
    Let $\ell \geq 2$ be a natural number.   Then, marked point $a(\lambda)$ is passive for $f_\lambda(z)$ at $\lambda = \lambda_0$ if and only if it is passive for the $\ell$-th iterate $f^\ell_\lambda(z)$ at $\lambda = \lambda_0$.
\end{lemma}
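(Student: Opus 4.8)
The plan is to prove both implications directly from the definition of passivity as normality of the family $g_n(\lambda) = f_\lambda^n(a(\lambda))$ in a neighborhood of $\lambda_0$, exploiting the fact that the iterates of $f^\ell$ form a subsequence of the iterates of $f$. Throughout I would work in a fixed small neighborhood $U$ of $\lambda_0$ and use the standard Montel/Arzel\`a--Ascoli characterization of normal families of maps into the Riemann sphere $\mathbb{C}_\infty$ equipped with the spherical metric, so that ``normal'' means every sequence has a locally uniformly convergent subsequence (convergence to $\infty$ allowed).

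\medskip

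\noindent\textbf{Easy direction ($f$ passive $\Rightarrow$ $f^\ell$ passive).} Suppose $a(\lambda)$ is passive for $f_\lambda$ at $\lambda_0$, so $(g_n)_{n\geq 1}$ is normal on some neighborhood $U$. Let $h_m(\lambda) := (f_\lambda^\ell)^m(a(\lambda)) = f_\lambda^{\ell m}(a(\lambda)) = g_{\ell m}(\lambda)$. Then $(h_m)_{m\geq 1}$ is precisely the subsequence $(g_{\ell m})_{m}$ of the normal family $(g_n)_n$, and any subfamily of a normal family is normal on the same neighborhood. Hence $a(\lambda)$ is passive for $f^\ell_\lambda$ at $\lambda_0$.

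\medskip

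\noindent\textbf{Harder direction ($f^\ell$ passive $\Rightarrow$ $f$ passive).} Assume $(h_m)_{m\geq 1}$ with $h_m = g_{\ell m}$ is normal on a neighborhood $U$ of $\lambda_0$. The point is that the full sequence $(g_n)$ decomposes into the $\ell$ residue classes $n \equiv r \pmod \ell$ for $r = 0,1,\dots,\ell-1$, and each such subsequence is an iterate-image of the normal family $(h_m)$ under a \emph{fixed} holomorphic map. Precisely, for each fixed $r$ write $n = \ell m + r$ so that
\begin{align*}
g_{\ell m + r}(\lambda) = f_\lambda^{\,\ell m + r}(a(\lambda)) = f_\lambda^{\,r}\!\left(f_\lambda^{\,\ell m}(a(\lambda))\right) = f_\lambda^{\,r}\!\left(h_m(\lambda)\right).
\end{align*}
The plan is to show that applying the holomorphically-varying family of rational maps $f_\lambda^r$ preserves normality, so that each of the $\ell$ subsequences $(g_{\ell m + r})_m$ is normal on $U$; a finite union of normal subsequences covering all of $(g_n)$ then yields normality of $(g_n)$ and hence passivity for $f$. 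The key lemma I would invoke (or prove in a line) is: if $(\phi_m)_m$ is a normal family of holomorphic maps $U \to \mathbb{C}_\infty$ and $F(\lambda, w)$ is continuous on $U \times \mathbb{C}_\infty$ and holomorphic in $\lambda$ for fixed $w$ (here $F(\lambda, w) = f_\lambda^r(w)$), then $(\lambda \mapsto F(\lambda, \phi_m(\lambda)))_m$ is normal. This follows because from any subsequence of $(\phi_m)$ one extracts a locally uniformly convergent subsequence $\phi_{m_k} \to \phi$ (into $\mathbb{C}_\infty$), and then uniform continuity of $F$ on compact subsets of $U\times\mathbb{C}_\infty$ (using compactness of the sphere) gives $F(\lambda,\phi_{m_k}(\lambda)) \to F(\lambda,\phi(\lambda))$ locally uniformly.

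\medskip

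\noindent\textbf{Anticipated main obstacle.} The only genuine subtlety is the last normality-preservation step, and it is a subtlety of \emph{bookkeeping with the point at infinity} rather than a deep difficulty. One must take values in $\mathbb{C}_\infty$ with the spherical metric so that $F = f_\lambda^r$ is continuous even where it sends finite points to $\infty$ or vice versa (the maps $f_\lambda$ are rational, hence continuous as self-maps of the sphere), and one must ensure the limit $\phi$ of a subsequence is a legitimate sphere-valued limit (possibly $\equiv \infty$), which is exactly what normality of $(h_m)$ guarantees. Once normality is phrased this way, the composition estimate is routine uniform-continuity on the compact sphere; I expect no case analysis on $q$ or $t$ to be needed, since the argument is purely a general fact about normal families and their iterates.
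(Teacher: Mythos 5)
Your proposal is correct and follows essentially the same route as the paper's proof: the easy direction is the subfamily observation, and the hard direction decomposes the iterates into residue classes modulo $\ell$ and uses that post-composition with the fixed holomorphically-varying iterate $f_\lambda^r$ preserves locally uniform (spherical) convergence. The only difference is bookkeeping — you establish normality of each residue-class subsequence and take a finite union, while the paper pigeonholes a given subsequence into one residue class before extracting — which is an equivalent reorganization of the same argument.
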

\begin{proof}
    Suppose that the marked point $a(\lambda)$ is passive for $f^\ell_\lambda(z)$ at $\lambda_0$. To show that $a(\lambda)$ is passive for $f_\lambda(z)$ at $\lambda_0$, consider any subsequence $(n_k)_k$ of the sequence $(n)$ of all natural numbers.  There must then be some $m \in \{0,\ldots,\ell-1\}$
    such that for infinitely many indices $k$ we have $n_k \equiv m ({\rm mod} \ \ell)$.
    We denote this subsequence by $(n_{k_j})_j$ and for each $j$ we write $n_{k_j} = \ell \, q_{j} + m$ with $(q_j)_j$ an increasing sequence of natural numbers.   Since $a(\lambda)$ is passive for $f^\ell_\lambda(z)$ we can find an open neighborhood $U$ of $\lambda_0$ and a further subsequence $(q_{j_p})_p$ of $(q_j)_j$ such that
    $\left(f^{\ell} \right)^{q_{j_p}}(a(\lambda))$ converges uniformly on compact subsets of $U$ to some holomorphic function $g(\lambda)$.   Then, $f^{\ell \cdot {q_{j_p}}+m}(a(\lambda))$ converges uniformly on compact
    subsets of $U$ to $f^m \circ g(\lambda)$.   Therefore, given the original subsequence $(n_k)_k$ of $(n)$ we have found a further subsequence $(\ell \cdot  q_{j_{p}} + m)_p$ along which $f^{\ell \cdot {q_{j_p}}+m}(a(\lambda))$ converges uniformly on compact subsets of $U$.   Therefore, the family of
    functions $\{f^n(a(\lambda))\}_{n=1}^\infty$ is a normal family on $U$ implying that $\lambda_0 \in U$ is a passive parameter for $f_\lambda(z)$.

 The converse implication immediately follows from the definition of normal family.
\end{proof}

\begin{definition}
    Let $z_{rep}(\lambda_0)$ be a repelling fixed point of $f_{\lambda_0}$. Because $f_\lambda$ varies holomorphically with $\lambda$, we can find a holomorphic map $z_{rep}(\lambda)$ defined in a neighborhood $U$ of $\lambda_0$ such that $z_{rep}(\lambda)$ is a repelling fixed point of $f_\lambda$ for all $\lambda \in U$. Let $n_0\in \mathbb{N} \cup \{0\}$. We say $f_{\lambda_0}^{n_0}$ maps $a(\lambda_0)$ \index{non-persistently} {\em non-persistently} onto the repelling  fixed point $z_{rep}(\lambda_0)$ if
    \begin{align}
        f_{\lambda_0}^{n_0}(a(\lambda_0))=z_{rep}(\lambda_0)\ \text{and}\ f^{n_0}_\lambda(a(\lambda)) \not\equiv z_{rep}(\lambda)\ \text{on}\ U.
    \end{align}
(Here, $f_\lambda^0$ is interpretted as the identity map.)
The definition that $f_{\lambda_0}$ maps the marked point $a(\lambda_0)$ non-persistently onto a repelling periodic point of period $p > 1$ 
is defined analogously by replacing $f_{\lambda_0}$ with $f_{\lambda_0}^p$.
\end{definition}

\begin{lemma} \label{Lemma 3.1.1} Suppose $f_{\lambda_0}^{n_0}$ maps $a(\lambda_0)$ non-persistently onto a repelling periodic point for $f_{\lambda_0}(z)$. Then $\lambda_0$ is an active parameter for the marked point $a(\lambda)$ under $f_\lambda$. 

 If $a(\lambda_0)$ is in the basin of attraction of an attracting periodic point for $f_{\lambda_0}$, then $\lambda_0$ is a passive parameter for the marked point $a(\lambda)$ under iteration of $f_{\lambda}$.  
\end{lemma}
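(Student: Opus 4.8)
The plan is to handle the two assertions separately, treating the activity claim first and the passivity claim second. For the activity claim I would begin with two reductions. Since replacing the marked point $a(\lambda)$ by $h(\lambda) := f^{n_0}_\lambda(a(\lambda))$ only deletes the first $n_0$ functions from the family $\{f^n_\lambda(a(\lambda))\}_{n \geq 1}$, and deleting finitely many holomorphic functions does not affect normality, the parameter $\lambda_0$ is active for $a(\lambda)$ if and only if it is active for $h(\lambda)$. The non-persistence hypothesis says exactly that $h(\lambda_0) = z_{rep}(\lambda_0)$ while $h(\lambda) \not\equiv z_{rep}(\lambda)$ near $\lambda_0$. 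If the repelling periodic point has period $p$, then $z_{rep}(\lambda_0)$ is a repelling fixed point of $F_\lambda := f^p_\lambda$, and by Lemma \ref{prop 3.1.1} it suffices to prove that $h(\lambda)$ is active for $F_\lambda$ at $\lambda_0$. Thus I am reduced to the model situation: a holomorphic family $F_\lambda$ with a persistent repelling fixed point $z_{rep}(\lambda)$ of multiplier $\rho(\lambda)$, $|\rho(\lambda_0)| > 1$, and a marked point $h(\lambda)$ with $h(\lambda_0) = z_{rep}(\lambda_0)$ but $h \not\equiv z_{rep}$.

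To prove activity in this model case I would linearize. By Koenigs' theorem there is a holomorphic change of coordinate $\phi_\lambda$, depending holomorphically on $\lambda$ and defined on a fixed disk $D$ around $z_{rep}(\lambda)$ (obtained as the locally uniform limit of $\rho(\lambda)^{-n}(F^n_\lambda(\cdot) - z_{rep}(\lambda))$), normalized so that $\phi_\lambda(z_{rep}(\lambda)) = 0$ and $\phi_\lambda \circ F_\lambda = \rho(\lambda) \cdot \phi_\lambda$ on $D$. Setting $u(\lambda) := \phi_\lambda(h(\lambda))$ gives a holomorphic function with $u(\lambda_0) = 0$ and $u \not\equiv 0$; as long as the orbit of $h(\lambda)$ stays in $D$ one has $\phi_\lambda(F^n_\lambda(h(\lambda))) = \rho(\lambda)^n u(\lambda)$. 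Shrinking to a disk $V$ about $\lambda_0$ on which $u$ vanishes only at $\lambda_0$ and $|\rho| \geq \mu > 1$, the escape-time sets $E_n := \{\lambda \in V : |\rho(\lambda)^n u(\lambda)| < \eta\}$ are open neighborhoods of $\lambda_0$ shrinking to $\{\lambda_0\}$, and on $E_n$ the linearized formula is valid. Choosing $\lambda_n \in \partial E_n$ yields $\lambda_n \to \lambda_0$ with $|\phi_{\lambda_n}(F^n_{\lambda_n}(h(\lambda_n)))| = \eta$, so that $F^n_{\lambda_n}(h(\lambda_n))$ stays a definite spherical distance from $z_{rep}(\lambda_0)$. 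Since $F^n_\lambda(h(\lambda))$ equals $z_{rep}(\lambda_0)$ at $\lambda = \lambda_0$ for every $n$, this contradicts the equicontinuity that a normal family must have at $\lambda_0$; hence the family is not normal and $\lambda_0$ is active.

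For the passivity claim I would argue by trapping. Let the attracting cycle of $f_{\lambda_0}$ have period $p$ and choose a forward-invariant trapping neighborhood $W$ of the cycle on which $f^p_{\lambda_0}$ is a uniform contraction. Because attracting cycles and their trapping neighborhoods persist under perturbation, after shrinking $W$ and passing to a parameter neighborhood $U$ of $\lambda_0$ we may assume $f_\lambda(W) \subset W$ for all $\lambda \in U$, with $W$ omitting at least three points of $\mathbb{C}_{\infty}$. Since $a(\lambda_0)$ lies in the basin, some iterate $f^N_{\lambda_0}(a(\lambda_0))$ lands in $W$; by continuity of $\lambda \mapsto f^N_\lambda(a(\lambda))$ the same holds for all $\lambda$ in a (possibly smaller) neighborhood of $\lambda_0$, and forward-invariance then gives $f^n_\lambda(a(\lambda)) \in W$ for all $n \geq N$ there. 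The tail $\{f^n_\lambda(a(\lambda))\}_{n \geq N}$ thus omits three fixed values and is normal by Montel's theorem; adding back the finitely many initial functions leaves a normal family, so $\lambda_0$ is passive.

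The hard part will be the activity claim, and within it the two technical points flagged above: establishing the holomorphic dependence of the Koenigs linearizer $\phi_\lambda$ on the parameter (which I would obtain from the explicit limit formula, or cite from the standard references on holomorphic motions and linearization), and verifying that on each escape-time set $E_n$ the entire orbit segment remains in the linearization domain $D$, so that the identity $\phi_\lambda(F^n_\lambda(h(\lambda))) = \rho(\lambda)^n u(\lambda)$ genuinely holds up to the escape time. Both follow from the uniform expansion $|\rho| \geq \mu > 1$ together with a careful choice of the radius $\eta$ and of $V$, but they are exactly where the estimates must be carried out honestly. The passivity claim, by contrast, is routine once the persistence of the trapping region is invoked.
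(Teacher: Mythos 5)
Your proposal is correct, and its overall skeleton matches the paper's: both halves reduce to the fixed-point case via Lemma \ref{prop 3.1.1}, activity is proved by exhibiting parameters $\lambda_n \to \lambda_0$ whose orbits escape to a definite distance from the repelling fixed point at time exactly $n$ (contradicting normality, since at $\lambda_0$ every iterate sits at the fixed point), and passivity is proved by a trapping argument. The difference lies in the tools used to produce the escaping parameters and to conclude normality. For activity, the paper works topologically: after a M\"obius conjugation placing the repelling point at $0$, it uses the fundamental-domain annulus $A_\lambda = f_\lambda(\mathbb{D}_r)\setminus \mathbb{D}_r$ and an intermediate-value argument to show that every escape time $n$ is realized by some $\lambda_n \in U$. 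You instead linearize via Koenigs coordinates $\phi_\lambda$ and read off escape times from the sets $E_n = \{\lambda : |\rho(\lambda)^n u(\lambda)| < \eta\}$, where $u = \phi_\lambda(h(\lambda))$ vanishes exactly at $\lambda_0$ by non-persistence. Your route trades the paper's soft topology for hyperbolic estimates; it requires the holomorphic dependence of the linearizer on $\lambda$ (standard), and it has the advantage that the monotonicity $|\rho(\lambda)^k u(\lambda)| \leq |\rho(\lambda)^n u(\lambda)|$ for $k \leq n$ makes the verification that the orbit stays in the linearization domain up to the escape time essentially automatic --- which is exactly the technical point you flag. For passivity, the paper shows the iterates converge locally uniformly to the (conjugated) attracting fixed point, whereas you trap the tail of the orbit in a neighborhood of the cycle omitting three points and invoke Montel; your version avoids reducing to the fixed-point case for that half, at the cost of citing Montel rather than exhibiting the limit explicitly.

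One correction is needed in your construction: for a \emph{repelling} fixed point the Koenigs linearizer is not the locally uniform limit of $\rho(\lambda)^{-n}\left(F^n_\lambda(\cdot)-z_{rep}(\lambda)\right)$; those expressions leave the neighborhood of the fixed point and are not even defined for large $n$. You should instead apply the Koenigs limit to the local inverse branch $G_\lambda$ of $F_\lambda$ fixing $z_{rep}(\lambda)$, which is attracting with multiplier $\rho(\lambda)^{-1}$, i.e.\ take $\phi_\lambda = \lim_{n} \rho(\lambda)^n\left(G^n_\lambda(\cdot) - z_{rep}(\lambda)\right)$; holomorphic dependence on $\lambda$ then follows from local uniform convergence of this limit jointly in $(\lambda,z)$. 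With that repair, your argument goes through as outlined.
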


Even though this lemma is well-known in complex dynamics, for the convenience of the reader we include a sketch of the proof.  Those who wish to see an alternative proof can consider \cite[Lemma 3.1(2)]{GAUTHIER2}.

\begin{proof}
It suffices to prove the statement when the (repelling or attracting) periodic cycle is a fixed point.  Indeed, Lemma \ref{prop 3.1.1} allows for a simple adaptation of both of the proofs below the case of a periodic point of higher period.

It follows from the definition of normal families that we can assume without loss of generality that $n_0 = 0$.  By the Implicit Function Theorem and conjugation of $f_\lambda$ by a suitable holomorphically varying M\"obius transformation we can also assume without loss of generality that $z = 0$ is a repelling fixed point of $f_\lambda(z)$ for all $\lambda$ in some small open neighborhood $U \subset \Lambda$ of $\lambda_0$.   If we let $r > 0$ be sufficiently small then for each $\lambda \in U$ the topological annulus 
\begin{align*}
    A_\lambda = f_\lambda(\mathbb{D}_r) \setminus \mathbb{D}_r
\end{align*}
will serve as a fundamental domain for the dynamics of $f_\lambda$ near $0$.   In other words, for any $\lambda \in U$ and any $z$ with $0 < |z| < r$ there is a unique natural number $k(\lambda,z)$ such that $f_\lambda^{k(\lambda,z)}(z) \in A_\lambda$.  Because $f_\lambda$ is Lipshitz continuous for any $\lambda \in U$ we have that $k(\lambda,z) \rightarrow \infty$ as $z \rightarrow 0$.

Our assumption that $f_\lambda^{n_0}$ maps $a(\lambda_0)$ non-persistently onto the repelling fixed
point and our assumption that $n_0 = 0$ implies that $a(\lambda_0) = 0$ and $a(\lambda) \neq 0$ for all $\lambda \neq \lambda_0$ chosen sufficiently close to $\lambda_0$.   In particular for any $\lambda$ sufficiently close to $\lambda_0$ there exists a natural number $\ell(\lambda) = k(\lambda,a(\lambda))$ such that $f_\lambda^{\ell(\lambda)}(a(\lambda)) \in A_\lambda$. For each $\lambda \in U$ and any $n \in \mathbb{N}$ we have that $f^{-n}_\lambda(A_\lambda)$ is also a topological annulus surrounding $0$.   Therefore, it follows from the intermediate value theorem applied to $|a(\lambda)|$ (or other basic topological considerations) that for any $n \in \mathbb{N}$ there exists $\lambda_n \in U$ such that $\ell(\lambda_n) = n$.

We now suppose for contradiction that $\lambda_0$ is a passive parameter.   Restricting $U$ to a smaller neighborhood of $\lambda_0$, if necessary, we can suppose that $\lambda \mapsto f_\lambda^n(a(\lambda))$ is a normal family on $U$.   In particular this
implies that there is some subsequence $(n_j)_j$ of the sequence of all natural numbers such that $f_\lambda^{n_j}(a(\lambda))$ converges uniformly on compact subsets of $U$ to a holomorphic function $g(\lambda)$.   Note that for $\lambda = \lambda_0$ we have that $f_{\lambda_0}^{n_j}(a(\lambda_0)) = f_{\lambda_0}^{n_j}(0) = 0$ for all $j$ implying that $g(0) = 0$.   On the other hand, as explained in the previous paragraph, for each $j$ there exists $\lambda_j \in U$ such that $\ell(\lambda_j) = n_j$.  Moreover
we must have $\lambda_j \rightarrow 0$ as $j \rightarrow \infty$ because for all $\lambda$ outside of a given neighborhood of $\lambda_0$ we have that $|a(\lambda)|$ is uniformly bounded below, implying that  $\ell(\lambda)$ is uniformly bounded above for such parameters.  Because the convergence of
$f_\lambda^{n_j}(a(\lambda))$ to $g(\lambda)$ is uniform on compact subsets of $U$ and because $f_{\lambda_k}^{n_k}(a(\lambda_k)) \in A_\lambda$ for each index $k$ we therefore have
\begin{align*}
|g(0)| = \lim_{k \rightarrow \infty} |f_{\lambda_k}^{n_k}(a(\lambda_k))| \geq r > 0.
\end{align*}
This contradicts that $g(0) = 0$.

Now suppose that $a(\lambda_0)$ is in the basin of attraction of an attracting fixed point for $f_{\lambda_0}(z)$.   As in the previous case, we can assume without loss of generality that $z=0$ is an attracting fixed point for $f_\lambda(z)$ for all $\lambda$ in some sufficiently small neighborhood $U$ of $\lambda_0$.   Restricting $U$ further if necessary, there is a uniform choice of $r > 0$ such that for
all $\lambda \in U$ the disc $\mathbb{D}_r(0)$ is in the basin of attraction of $0$ under iteration of $f_\lambda(z)$.   By hypothesis
there is a natural number $n_0$ such that $|f_{\lambda_0}^{n_0}(a(\lambda_0))| < r/2$.  By continuity we will then have
that $|f_{\lambda_0}^{n_0}(a(\lambda))| < 3r/4$ for all $\lambda$ in some open neighborhood $V \subset U$ of $\lambda_0$.  It is then
immediate that the sequence of functions $\lambda \mapsto f_{\lambda}^n(a(\lambda))$ converges uniformly to $g(\lambda) \equiv 0$ on
all of $V$.  Therefore $\lambda_0$ is a passive parameter.
\end{proof}

\begin{lemma}\label{LEM:2PERIODS_IMPLIES_ACTIVE} Let $k$ and $\ell$ be distinct natural numbers.   Suppose that in any open neighborhood of parameter $\lambda_0$ there
exist parameters $\lambda_1$ and $\lambda_2$ such that 
\begin{itemize}
    \item[(i)] $a(\lambda_1)$ is in the basin of attraction for an attracting periodic
point of prime period $k$ for $f_{\lambda_1}(z)$ and
\item[(ii)] $a(\lambda_2)$ is in the basin of attraction for an attracting periodic
point of prime period $\ell$ for $f_{\lambda_2}(z)$.
\end{itemize}
Then $\lambda_0$ is an active parameter for the marked point $a(\lambda)$  under iteration of $f_\lambda$.
\end{lemma}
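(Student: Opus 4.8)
The plan is to argue by contradiction, assuming $\lambda_0$ is passive for the marked point $a(\lambda)$ and deriving $k=\ell$. By Definition \ref{def 3.1.1}, passivity furnishes a \emph{connected} open neighborhood $U$ of $\lambda_0$ on which the family $g_n(\lambda):=f_\lambda^n(a(\lambda))$ is normal; after shrinking the neighborhoods provided by the hypothesis we may assume $\lambda_1,\lambda_2\in U$. The heart of the argument is to manufacture, out of the period-$k$ attracting behavior at $\lambda_1$, a globally defined holomorphic periodic-point motion on all of $U$, and then to read off what that motion must do at $\lambda_2$.

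First I would exploit persistence of attracting cycles near $\lambda_1$. Since $a(\lambda_1)$ lies in the basin of an attracting cycle of prime period $k$, that $k$-cycle continues holomorphically and stays attracting for $\lambda$ near $\lambda_1$ (implicit function theorem applied to $f_\lambda^k(w)=w$, using that an attracting multiplier is $\neq 1$), and membership in the basin is an open condition. Consequently there is a holomorphic function $p(\lambda)$, defined near $\lambda_1$, with $f_\lambda^k(p(\lambda))=p(\lambda)$ and $f_\lambda^{mk}(a(\lambda))\to p(\lambda)$ locally uniformly as $m\to\infty$. In other words, the full subsequence $(g_{mk})_m$ already converges near $\lambda_1$ to this holomorphic solution of the functional relation $f_\lambda^k(w)=w$.

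Next I would globalize using normality and the identity theorem. Because $(g_{mk})_m$ is a subfamily of the normal family $(g_n)$, some subsequence $g_{m_j k}$ converges locally uniformly on the whole connected set $U$ to a holomorphic function $G$. Near $\lambda_1$ the full sequence converges to $p$, so $G\equiv p$ there; in particular the holomorphic function $f_\lambda^k(G(\lambda))-G(\lambda)$ vanishes on an open subset of $U$ and hence vanishes identically on $U$ by the identity theorem. Thus $G(\lambda)$ is a periodic point of $f_\lambda$ of period dividing $k$ for every $\lambda\in U$. Evaluating at $\lambda_2$, the value $G(\lambda_2)=\lim_j f_{\lambda_2}^{m_j k}(a(\lambda_2))$ is an accumulation point of the orbit of $a(\lambda_2)$, which, lying in the basin of the attracting prime-period-$\ell$ cycle, has $\omega$-limit set exactly that cycle. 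Hence $G(\lambda_2)$ has prime period $\ell$ yet is fixed by $f_{\lambda_2}^k$, which forces $\ell\mid k$.

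Finally, running the identical construction with the roles of $(\lambda_1,k)$ and $(\lambda_2,\ell)$ interchanged yields a holomorphic $H$ on $U$ with $f_\lambda^\ell(H(\lambda))=H(\lambda)$ and $H(\lambda_1)$ lying on the prime-period-$k$ cycle, giving $k\mid\ell$. Together these divisibilities force $k=\ell$, contradicting $k\neq\ell$, so $\lambda_0$ must be active. I expect the main obstacle to be precisely the globalization step: one must justify that the locally defined attracting-cycle motion $p(\lambda)$ propagates as a single-valued holomorphic solution of $f_\lambda^k(w)=w$ across the entire passive neighborhood (this is exactly where passivity enters, via normality together with analytic continuation), and that the limit value $G(\lambda_2)$ genuinely lands in the $\omega$-limit set of $a(\lambda_2)$ so that its prime period can be identified with $\ell$. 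The need to establish \emph{both} divisibilities, rather than stopping after the first, is a small but essential point.
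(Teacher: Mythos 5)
Your proof is correct, and its core machinery coincides with the paper's: assume passivity for contradiction, extract from the normal family a subsequence of iterates (adapted to one of the periods) converging locally uniformly on a connected neighborhood $U$ to a holomorphic limit, identify that limit near one parameter with the holomorphically moving attracting point, propagate a periodicity equation to all of $U$ by the Identity Theorem, and evaluate at the other parameter to reach a contradiction. Where you diverge is in the endgame. The paper first invokes Lemma \ref{prop 3.1.1} to assume without loss of generality that $k=1$ and $\ell>1$; then a single run of the argument, using the subsequence of multiples of $\ell$ and the fixed-point equation $f_\lambda(g(\lambda))=g(\lambda)$, gives an immediate contradiction, since a point of prime period $\ell>1$ cannot be fixed. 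You instead keep $k$ and $\ell$ general, work with the equation $f^k_\lambda(G(\lambda))=G(\lambda)$, and run the construction symmetrically in both directions to obtain $\ell\mid k$ and $k\mid\ell$, hence $k=\ell$ --- correctly noting that one divisibility alone does not suffice. The paper's reduction is shorter, but your symmetric version buys something real: the paper's ``without loss of generality'' step is itself slightly delicate, since replacing $f$ by $f^k$ turns the prime-period-$\ell$ cycle at $\lambda_2$ into a cycle of prime period $\ell/\gcd(k,\ell)$ for $f^k$, which exceeds $1$ only when $\ell\nmid k$; strictly one must choose whether to iterate by $k$ or by $\ell$ according to which fails to divide the other. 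Your two-divisibility argument needs no such case analysis and no appeal to Lemma \ref{prop 3.1.1} at all (normality of the subfamily of $k$-th iterates is automatic for a subfamily of a normal family). Both proofs rely on the same standard facts you cite: persistence of attracting cycles via the implicit function theorem, openness of basin membership in $(\lambda,w)$ (as in the proof of Lemma \ref{Lemma 3.1.1}), and the identification of the $\omega$-limit set of a basin point with the attracting cycle.
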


\begin{proof}
Without loss of generality and by Lemma \ref{prop 3.1.1},  we can assume $k=1$ and $\ell>1$. 
For a contradiction, let's assume that $\lambda_0$ is a passive parameter for the marked point $a(\lambda)$ under iteration of $f_\lambda(z)$. Then, there is a connected open neighborhood $U$ of $\lambda_0$ on which $( f_\lambda^n(a(\lambda)))_{n=1}^\infty$ forms a normal family. Then, $(f_\lambda^{\ell \cdot n}(a(\lambda)))_{n=1}^\infty$ has a sub-sequence $(f_\lambda^{\ell \cdot n_k}(a(\lambda)))_{k=1}^\infty$ that converges locally uniformly to a holomorphic function $g(\lambda)$ on $U$. Denote by $z_\bullet$ the attracting fixed point for $f_{\lambda_1}$ that has $a(\lambda_1)$ in its basin of attraction.   By the
Implicit Function Theorem, there is an open neighborhood $V \subset U$ of $\lambda_1$
and a holomorphic function $z_{\rm attr}: V \rightarrow \hat{\mathbb{C}}$ such
that $z_{\rm attr}(\lambda_1) = z_\bullet$ and $z_{\rm attr}(\lambda)$ is an attracting fixed point of $f_\lambda$ for all $\lambda \in V$.   It follows for some potentially smaller open neighborhood $W \subset V$ of $\lambda_1$ that we have $\displaystyle g(\lambda) = z_{\rm attr}(\lambda)$ for all $\lambda \in W$.
Meanwhile, $g(\lambda_2)$ is one of the points in the attracting cycle of period $\ell > 1$ for $f_{\lambda_2}(z)$.

Now consider the family $(f_\lambda^{\ell \cdot n+1}(a(\lambda)) )_{n=1}^\infty$. The sub-sequence  $(f_\lambda^{\ell \cdot n_k+1}(a(\lambda)))_{k=1}^\infty$ converges locally uniformly to the holomorphic function $f_\lambda \left(g(\lambda)\right)$ on $U$.
Notice that 
\begin{align*}
    f(g(\lambda)) = g(\lambda) = z_{\rm attr}(\lambda)
\end{align*}
on $W$ and hence $f(g(\lambda)) = g(\lambda)$ on all of $U$ by the Identity Theorem.   On the other hand
$f(g(\lambda_2)) \neq g(\lambda_2)$ because $g(\lambda_2)$ is a periodic point of period greater than $1$.
This is a contradiction.
\end{proof}

\begin{definition}
    We say that a point $b\in \mathbb{C}_{\infty}$ is \index{exceptional}{\em exceptional} for a rational function ${f: \mathbb{C}_{\infty}\rightarrow \mathbb{C}_{\infty}}$ if the cardinality of the set $f^{-m}(\{b\})$ is less than or equal to two for all $m\in \mathbb{N}$. A marked point $b(\lambda)$ is called \index{persistently exceptional} {\em persistently exceptional} for a holomorphic family of rational maps $\{f_\lambda\}_{\lambda \in \Lambda}$ if $b(\lambda)$ is exceptional for $f_\lambda$ for all $\lambda \in \Lambda$.
\end{definition}

The following lemma plays a key role in our proofs of Theorems A and B.

 \begin{lemma}[Lyubich, \cite{Lyubich1983}]\label{Lemma 1.9.1}
     Let $\{f_\lambda\}_{\lambda \in \Lambda}$ be a family of rational maps which depends holomorphically on the parameter $\lambda$. Let $a(\lambda)$ and $b(\lambda)$ be marked points for $f_\lambda$ with $b(\lambda)$ not being persistently exceptional for $f_\lambda$. If $\lambda_0$ is an active parameter for $a(\lambda)$ under $f_\lambda$, then
     \[\lambda_0\in \overline{\{\lambda\in \Lambda: f^m_\lambda(a(\lambda))=b(\lambda)\ \text{for some}\ m\in \mathbb{N}\}}.\]
 \end{lemma}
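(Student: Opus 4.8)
The plan is to argue by contradiction, combining the failure of normality at an active parameter with Montel's theorem for three (moving) omitted values. Write
\[
S := \{\lambda \in \Lambda : f_\lambda^m(a(\lambda)) = b(\lambda)\ \text{for some}\ m \in \mathbb{N}\}
\]
and suppose the conclusion fails, so that $\lambda_0 \notin \overline{S}$. Then there is an open neighborhood $U$ of $\lambda_0$ with $U \cap S = \emptyset$; equivalently, the functions $g_n(\lambda) := f_\lambda^n(a(\lambda))$ satisfy $g_n(\lambda) \neq b(\lambda)$ for every $n \in \mathbb{N}$ and every $\lambda \in U$. Thus the moving target $b(\lambda)$ is omitted by the entire family $(g_n)$ on $U$. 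Since a single omitted target does not force normality, the heart of the argument is to produce two further omitted targets.

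First I would manufacture two additional moving targets lying in the backward orbit of $b$. Because $b(\lambda)$ is not persistently exceptional, there is a parameter $\lambda_1 \in \Lambda$ at which the total backward orbit $\bigcup_{m \geq 0} f_{\lambda_1}^{-m}(b(\lambda_1))$ contains at least three points; in particular $\deg f_{\lambda_1} \geq 2$, since for a M\"obius map every point is exceptional. When $b(\lambda_0)$ is itself non-exceptional for $f_{\lambda_0}$ the construction is immediate: one selects two distinct, simple (non-critical) preimages $\beta_1,\beta_2 \neq b(\lambda_0)$ with $f_{\lambda_0}^{k_i}(\beta_i) = b(\lambda_0)$, and continues them by the Implicit Function Theorem to holomorphic functions $b_1(\lambda), b_2(\lambda)$ on a smaller neighborhood $U$ satisfying $f_\lambda^{k_i}(b_i(\lambda)) = b(\lambda)$; after shrinking $U$ the graphs of $b, b_1, b_2$ become pairwise disjoint.

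The main obstacle is the case in which $b(\lambda_0)$ is exceptional for $f_{\lambda_0}$, so that its backward orbit collapses to at most two points exactly at $\lambda_0$. The hypothesis still helps: the locus of parameters at which $b(\lambda)$ is exceptional is a countable union of proper analytic subsets of $\Lambda$ (each cut out by the algebraic condition that $f_\lambda^k(w) = b(\lambda)$ acquire a root of full multiplicity), none of which is all of $\Lambda$; hence parameters arbitrarily close to $\lambda_0$ carry at least three distinct backward-orbit points. To extract holomorphic targets I would work over a punctured neighborhood $U \setminus \{\lambda_0\}$ and, to remove the monodromy and ramification of the backward-orbit correspondence at $\lambda_0$, pass to a local branched cover $\tilde{U} \to U$ on which at least three branches become single-valued and pairwise disjoint. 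Normality is then transported back across the puncture, an isolated point being removable for normal families of maps into $\mathbb{C}_\infty$. Controlling this branching is the one genuinely technical step.

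With three pairwise-disjoint holomorphic moving targets $b(\lambda), b_1(\lambda), b_2(\lambda)$ in hand on $U$, I would finish by a dichotomy. Choose the holomorphically varying M\"obius family $M_\lambda$ with $M_\lambda(b(\lambda)) = \infty$, $M_\lambda(b_1(\lambda)) = 0$, and $M_\lambda(b_2(\lambda)) = 1$, which reduces the moving-target situation to the classical one. If every $g_n$ omits all three targets on $U$, then $M_\lambda \circ g_n$ omits $\{0,1,\infty\}$, so by Montel's theorem \cite{MILNOR_BOOK} the family $(M_\lambda \circ g_n)$, and hence $(g_n)$, is normal on $U$; by Definition \ref{def 3.1.1} this makes $\lambda_0$ passive, contradicting that it is active. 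Otherwise some $g_n(\lambda_*) = b_i(\lambda_*)$ with $\lambda_* \in U$ and $i \in \{1,2\}$, whence $f_{\lambda_*}^{\,n + k_i}(a(\lambda_*)) = f_{\lambda_*}^{k_i}(b_i(\lambda_*)) = b(\lambda_*)$, so $\lambda_* \in S \cap U$, contradicting $U \cap S = \emptyset$. Since both alternatives are impossible, $\lambda_0 \in \overline{S}$, as claimed.
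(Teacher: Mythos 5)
Your overall strategy --- contradiction, three moving targets in the backward orbit of $b$, and Montel's theorem after a M\"obius normalization --- is the standard route to this result; note that the paper itself gives no proof, deferring to \cite{Lyubich1983} and to \cite[Lemma V.2]{ChangRoederShrock}, so your attempt must be judged on its own. In the case where $b(\lambda_0)$ is \emph{not} exceptional for $f_{\lambda_0}$, your argument is essentially right, with one repairable slip: simple (non-critical) preimages need not exist at all. For example, if $f_\lambda(z)=z^2+c(\lambda)$ and $b(\lambda)=c(\lambda)$ is the critical value, then \emph{every} point of the backward orbit of $b(\lambda_0)$ is a multiple root of $f^k_{\lambda}(w)=b(\lambda)$ at $\lambda_0$, so the Implicit Function Theorem never applies directly. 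The fix is the base change $\lambda=\lambda_0+s^N$ that you reserve for the other case: since the three chosen preimage points are \emph{distinct at $\lambda_0$}, the Puiseux branches become holomorphic sections with pairwise disjoint graphs over a full $s$-disk (including $s=0$), Montel applies there, and normality descends to a neighborhood of $\lambda_0$ because $s\mapsto \lambda_0+s^N$ is a proper surjection of the full disk onto a full neighborhood --- branched covers of the full disk are harmless.

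The genuine gap is in the exceptional case, at the sentence ``normality is then transported back across the puncture, an isolated point being removable for normal families of maps into $\mathbb{C}_\infty$.'' This removability claim is false: the family $g_n(\lambda)=n\lambda$ on the unit disk is normal on the punctured disk (it converges locally uniformly to $\infty$ there), each member is holomorphic across $0$ as a map to $\mathbb{C}_\infty$, yet the family is not normal at $0$, since any locally uniform limit would have to equal $\infty$ off $0$ and equal $0$ at $0$. Moreover the failure is essential, not technical: if $b(\lambda_0)$ is exceptional, then every backward-orbit section must take its value at $\lambda_0$ inside the backward orbit of $b(\lambda_0)$, which has at most two points, so among any three sections at least two collide at $\lambda_0$; three pairwise-disjoint targets over a full neighborhood of $\lambda_0$ simply do not exist (on any branched cover either), and the one target that survives over all of $U$, namely $b$ itself, is exactly as weak as the single omitted value $\infty$ in the counterexample above. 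What is needed here is dynamical, not function-theoretic, input. The standard repair: choose three distinct repelling periodic points of $f_{\lambda_0}$ (repelling points are never exceptional, and their holomorphic continuations stay distinct and repelling near $\lambda_0$); non-normality at $\lambda_0$ plus Montel forces $f^n_\lambda(a(\lambda))$ to hit one of them, non-persistently, at parameters $\lambda_1$ arbitrarily close to $\lambda_0$, and each such $\lambda_1$ is \emph{active} by Lemma \ref{Lemma 3.1.1}. Since the set of parameters where $b(\lambda)$ is exceptional is a discrete subset of (connected) $\Lambda$, one may take $\lambda_1\neq\lambda_0$ with $b(\lambda_1)$ non-exceptional, and then your non-exceptional argument applied at $\lambda_1$ produces solutions of $f^m_\lambda(a(\lambda))=b(\lambda)$ arbitrarily close to $\lambda_0$. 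Without an ingredient of this kind, your proof does not cover the exceptional case.
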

\noindent
 In addition to the original source \cite{Lyubich1983}, we also refer the reader to \cite[Lemma V.2]{ChangRoederShrock} for a proof of Lemma \ref{Lemma 1.9.1}


\subsection{Applications of the Active-Passive dichotomy to the renormalization map $R_{z,t,q}(w)$.}\label{SUBSEC:APPLICATIONS_TO_RG_MAP}

We will now interpret our renormalization mapping $R_{z,t,q}(w)$ and Theorem D in the context of Section \ref{section 1.9}.     Note that the degree of $R_{z,t,q}(w)$ in $w$ drops if either $z=0$ or $t \in \{1,1/(1-q)\}$.   Throughout this paper we will only work with $R_{z,t,q}(w)$ for fixed choices of $t \in (0,1) \cup (1,\infty)$ and $q \in \mathbb{N}_{\geq 2}$.   For such fixed $t,q$ we will let $z \in \mathbb{C} \setminus \{0\}$ serve the role of the varying parameter.   These restrictions on the parameters allow us to avoid any possible drop in degree and therefore we obtain a holomorphic family of rational maps in the sense of Section \ref{section 1.9} with $z$ serving the role of the parameter $\lambda$ varying over the parameter space $\Lambda = \mathbb{C} \setminus \{0\}$.

Consider two marked points $a(z) := z$ and $b(z) := 1/(1-q)$.   Theorem D gives that the Lee-Yang zeros
for the level $n$ rooted and unrooted Cayley tree correspond to 
\begin{align*}
\{z \in \mathbb{C} \setminus \{0\} \, : \, R_{z,t,q}^n(a(z)) = b(z)\} \quad \mbox{and} \quad \left\{z \in \mathbb{C} \setminus \{0\} \, : \, \left(\hat{R}_{z,t,q} \circ R_{z,t,q}^{(n-1)}\right)(a(z)) = b(z)\right\},
\end{align*}
respectively.
Note that we have used that $z=0$ is never a Lee-Yang zero for when $t > 0$ which is why we are able to assume $z \in \mathbb{C} \setminus \{0\}$ in the above formula.
Therefore, one has the following expressions for the sets $\mathcal{B}(t,q)$ and $\hat{\mathcal{B}}(t,q)$ that were defined in Equation (\ref{DEF_SETS_B}) above:
\begin{align*}
\mathcal{B}(t,q) &= \overline{\{z \in \mathbb{C} \setminus \{0\} \, : \, R_{z,t,q}^n(a(z)) = b(z) \, \mbox{for some $n \in \mathbb{N}$}\}} \quad \mbox{and} \\
\hat{\mathcal{B}}(t,q) &= \overline{\left\{z \in \mathbb{C} \setminus \{0\} \, : \, \left(\hat{R}_{z,t,q} \circ R_{z,t,q}^{(n-1)}\right)(a(z)) = b(z) \, \mbox{for some $n \in \mathbb{N}$}\right\}},
\end{align*}
respectively.

For fixed $t \in (0,1) \cup (1,\infty)$ and $q \in \mathbb{N}_{\geq 2}$ we denote the active locus of the marked point $a(z) = z$ under the holomorphic family $R_{z,t,q}(w)$ by $\mathcal{A}(t,q)$.  

\begin{lemma} \label{LEM:RELATING_ACTIVITY_LOCUS_TO_ACCUMULATION_LOCUS}
    For any $t \in (0,1) \cup (1,\infty)$ and $q \in \mathbb{N}_{\geq 2}$ we have $\mathcal{A}(t,q) \subset \mathcal{B}(t,q)$ and $\mathcal{A}(t,q) \subset \hat{\mathcal{B}}(t,q)$.
\end{lemma}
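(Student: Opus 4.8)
The plan is to derive both inclusions from Lyubich's Lemma (Lemma~\ref{Lemma 1.9.1}), applied to the holomorphic family $f_z := R_{z,t,q}(w)$ over the parameter space $\Lambda = \mathbb{C}\setminus\{0\}$ with the marked point $a(z) = z$, whose active locus is $\mathcal{A}(t,q)$ by definition. For the rooted inclusion I would take the second marked point to be $b(z) := 1/(1-q)$, exactly as in the description of $\mathcal{B}(t,q)$ from Section~\ref{SUBSEC:APPLICATIONS_TO_RG_MAP}. Fixing any $z_0 \in \mathcal{A}(t,q)$, Lemma~\ref{Lemma 1.9.1} would then give $z_0 \in \overline{\{z \in \Lambda : R_{z,t,q}^m(a(z)) = b(z)\ \text{for some } m \in \mathbb{N}\}} = \mathcal{B}(t,q)$, \emph{provided} $b(z)$ is not persistently exceptional. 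Thus the rooted case reduces entirely to verifying that one hypothesis.

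To check that $b(z) = 1/(1-q)$ is not persistently exceptional, it suffices to exhibit a single parameter $z_\ast$ for which $b$ is not exceptional for $R_{z_\ast,t,q}$. Here I would use two standard facts: a rational map of degree two has at most two exceptional points, and a point is exceptional precisely when its total backward orbit is finite, which for degree two forces that backward orbit to contain at most two points. Writing $R_{z,t,q}(w) = z\,\phi(w)^2$ with $\phi$ the M\"obius map $\phi(w) = (t + (1+(q-2)t)w)/(1+(q-1)tw)$, a short computation identifies the two critical points of $R_{z,t,q}$ as $\phi^{-1}(0)$ and the pole of $\phi$, with critical values $0$ and $\infty$ respectively. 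Since $b = 1/(1-q) \notin \{0,\infty\}$ for finite $q \geq 2$, it is not a critical value and hence has two distinct preimages $w_1(z), w_2(z)$; moreover $w_i(z) \in \{0,\infty\}$ only for finitely many $z$, namely the solutions of $z t^2 = b$ and of $R_{z,t,q}(\infty) = b$. Choosing $z_\ast$ outside this finite set, the four points of $R_{z_\ast,t,q}^{-2}(b)$ are distinct, so the backward orbit of $b$ has more than two points and $b$ is not exceptional for $R_{z_\ast,t,q}$. This verification, though elementary, is the step I expect to require the most care.

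For the unrooted inclusion I would iterate the \emph{same} family $R_{z,t,q}$ from the \emph{same} source $a(z) = z$, but replace the target by a solution of $\hat{R}_{z,t,q}(\cdot) = b$. Concretely, near a fixed $z_0 \in \mathcal{A}(t,q)$, and using $z_0 \neq 0$ and $b \neq 0$, I would choose a holomorphic branch $\beta(z)$ of $\hat{R}_{z,t,q}^{-1}(b)$, which exists because extracting a cube root and inverting $\phi$ are locally holomorphic away from $0$. The backward-orbit argument of the previous paragraph applies verbatim to show that $\beta(z)$ is not persistently exceptional for the family $R_{z,t,q}$ on a neighborhood $\Omega$ of $z_0$: the locus where $\beta(z)$ is exceptional is again a proper (hence discrete) analytic subset of $\Omega$, since $\beta(z) \equiv 0$ or $\beta(z) \equiv \infty$ would force $\hat{R}_{z,t,q}(0) = b$ or $\hat{R}_{z,t,q}(\infty) = b$ identically in $z$, which fails.

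Applying Lemma~\ref{Lemma 1.9.1} to the restricted family over $\Omega$, with the active source $a(z) = z$ and the non-persistently-exceptional target $\beta(z)$, I would obtain $z_0 \in \overline{\{z \in \Omega : R_{z,t,q}^m(a(z)) = \beta(z)\ \text{for some } m \in \mathbb{N}\}}$. Because any such parameter $z$ satisfies $\hat{R}_{z,t,q}\!\left(R_{z,t,q}^m(a(z))\right) = \hat{R}_{z,t,q}(\beta(z)) = b$, that is $\left(\hat{R}_{z,t,q} \circ R_{z,t,q}^{(n-1)}\right)(a(z)) = b$ with $n = m+1 \geq 2$, these parameters lie in the set defining $\hat{\mathcal{B}}(t,q)$; this subset of the defining set is enough to conclude $z_0 \in \hat{\mathcal{B}}(t,q)$. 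Since $z_0 \in \mathcal{A}(t,q)$ was arbitrary, this establishes $\mathcal{A}(t,q) \subset \mathcal{B}(t,q)$ and $\mathcal{A}(t,q) \subset \hat{\mathcal{B}}(t,q)$.
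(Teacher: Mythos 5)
Your proposal is correct and takes essentially the same approach as the paper: both inclusions are deduced from Lyubich's Lemma (Lemma \ref{Lemma 1.9.1}), with non-persistent exceptionality of $b(z)=1/(1-q)$ verified by showing $R^{-2}_{z,t,q}(b)$ has four points (the paper does this at the specific parameter $z=1$ using the explicit critical values of the second iterate, you at a generic $z_\ast$ avoiding a finite set), and the unrooted inclusion obtained by replacing the target with a preimage of $b$ under $\hat{R}_{z,t,q}$. The only cosmetic difference is that you fix a local holomorphic branch $\beta(z)$ of $\hat{R}^{-1}_{z,t,q}(b)$ and show its exceptional locus is discrete, whereas the paper selects one of the three preimages of $b$ that is non-exceptional; both devices serve the same purpose.
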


\begin{proof}
We will first prove the containment $\mathcal{A}(t,q) \subset \mathcal{B}(t,q)$.
By Lemma \ref{Lemma 1.9.1} it suffices to show that $b(z) = 1/(1-q)$ is not persistently exceptional for $R_{z,t,q}(w)$.  We will show that for $z=1$ the point $b(1)$ is not exceptional
for $R_{1,t,q}(w)$.   The critical values of $R_{1,t,q}(w)$ are $0$ and $\infty$ and therefore
the critical values of the second iterate $R^{2}_{1,t,q}(w)$ consist of 
\begin{align*}
0, \quad \infty, \quad R_{1,t,q}(0) = t^2, \quad \mbox{and} \quad \frac{((q-2) t+1)^2}{(q-1)^2 t^2},
\end{align*}
each of which is non-negative (or infinite).
Therefore $b(1) = 1/(1-q) < 0$ is not a critical value of $R^{2}_{1,t,q}(w)$ implying that it has
four preimages under $R^{2}_{1,t,q}(w)$.   We conclude that $b(1)$ is not exceptional
for $R_{1,t,q}(w)$ and therefore that $b(z)$ is not persistently exceptional for $R_{z,t,q}(w)$.

The containment $\mathcal{A}(t,q) \subset \hat{\mathcal{B}}(t,q)$ will also follow from Lemma \ref{Lemma 1.9.1}.   Note that for any $z \neq 0$ the critical values of $\hat{R}_{z,t,q}(w)$ are $0$ and infinity.   Since $\hat{R}_{z,t,q}(w)$ is a rational map of degree three
and $b(q) = 1/(1-q)$ is not a critical value, one of the three preimages of $b(q)$ will not be exceptional for $R_{z,t,q}(w)$.   Denoting
that preimage by $c(q)$ we apply Lemma \ref{Lemma 1.9.1} to see that 
\begin{align*}
\mathcal{A}(t,q) \subset \overline{\{z \in \mathbb{C} \setminus \{0\} \, : \, R_{z,t,q}^{(n-1)}(a(z)) = c(z) \, \mbox{for some $n \in \mathbb{N}$}\}} \subset \hat{\mathcal{B}}(t,q).
\end{align*}
\end{proof}

\begin{lemma} \label{LEM:RELATING_ACTIVITY_LOCUS_TO_ACCUMULATION_LOCUS2}
For any $t \in (0,1) \cup (1,\infty)$ and $q \in \mathbb{N}_{\geq 2}$ we have 
\begin{align*}
   \mathcal{B}(t,q) \cap (0,\infty) = \hat{\mathcal{B}}(t,q) \cap (0,\infty) = \mathcal{A}(t,q) \cap (0,\infty).  
\end{align*}
\end{lemma}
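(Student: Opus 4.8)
The plan is to obtain the two inclusions of Lemma~\ref{LEM:RELATING_ACTIVITY_LOCUS_TO_ACCUMULATION_LOCUS} for free and then prove the matching reverse inclusions on the ray by a normality argument that exploits a sign obstruction. Indeed, Lemma~\ref{LEM:RELATING_ACTIVITY_LOCUS_TO_ACCUMULATION_LOCUS} already gives $\mathcal{A}(t,q) \subset \mathcal{B}(t,q)$ and $\mathcal{A}(t,q) \subset \hat{\mathcal{B}}(t,q)$, so intersecting with $(0,\infty)$ yields $\mathcal{A}(t,q) \cap (0,\infty) \subseteq \mathcal{B}(t,q) \cap (0,\infty)$ and $\mathcal{A}(t,q) \cap (0,\infty) \subseteq \hat{\mathcal{B}}(t,q) \cap (0,\infty)$. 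Thus the entire content of the lemma is the two reverse inclusions $\mathcal{B}(t,q) \cap (0,\infty) \subseteq \mathcal{A}(t,q)$ and $\hat{\mathcal{B}}(t,q) \cap (0,\infty) \subseteq \mathcal{A}(t,q)$, which I would establish in contrapositive form: a \emph{passive} parameter $z_0 \in (0,\infty)$ cannot be an accumulation point of Lee-Yang zeros of either the rooted or the unrooted tree.

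First I would record the key positivity fact. For $q \geq 2$, $t > 0$, $z > 0$, and $w > 0$ one has $t + w + (q-2)tw > 0$ and $1 + (q-1)tw > 0$, so both $R_{z,t,q}(w)$ and $\hat{R}_{z,t,q}(w)$ lie in $(0,\infty)$; in particular the ray $(0,\infty)$ is forward invariant under each map and meets no pole. Since the marked point $a(z) = z$ is itself positive for $z > 0$, the orbit functions $g_n(z) := R_{z,t,q}^n(a(z))$ and $\hat{g}_n(z) := \bigl(\hat{R}_{z,t,q} \circ R_{z,t,q}^{n-1}\bigr)(a(z))$ take values in $(0,\infty)$ whenever $z$ is real and positive. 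By contrast, the target marked point is $b(z) = 1/(1-q) < 0$. This incompatibility of nonnegative orbit values with a strictly negative target is the mechanism driving the whole proof.

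Now let $z_0 \in (0,\infty)$ be passive, so by Definition~\ref{def 3.1.1} the family $\{g_n\}$ is normal on some disc $D$ centred at $z_0$, which I shrink so that $D \cap \mathbb{R} \subset (0,\infty)$. (For the unrooted statement I would first observe that $\{R_{z,t,q}^{n-1}(a(z))\}_{n \geq 1} = \{R_{z,t,q}^{m}(a(z))\}_{m \geq 0}$ is normal on $D$ by passivity, and that post-composing a normal family by the holomorphically varying, spherically continuous map $\hat{R}_{z,t,q}$ preserves normality; hence $\{\hat{g}_n\}$ is normal on $D$ and the argument below applies to it verbatim.) Suppose for contradiction that some Lee-Yang zeros accumulate at $z_0$, i.e.\ there exist $z_j \to z_0$ and naturals $n_j$ with $g_{n_j}(z_j) = b(z_j)$. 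By normality, after passing to a subsequence, $g_{n_j} \to g$ locally uniformly on $D$ in the spherical metric, where $g \colon D \to \mathbb{C}_\infty$ is holomorphic (possibly the constant $\infty$). Locally uniform convergence together with $z_j \to z_0$ forces $g(z_0) = \lim_j g_{n_j}(z_j) = \lim_j b(z_j) = b(z_0) = 1/(1-q)$, a finite negative number. On the other hand, for every real $z \in D$ we have $g_{n_j}(z) \in (0,\infty)$, so $g(z)$ lies in the spherical closure $[0,\infty]$ of $(0,\infty)$ along the real diameter of $D$; in particular $g(z_0) \in [0,\infty]$. This contradicts $g(z_0) = 1/(1-q) < 0$, completing the contrapositive and hence the two reverse inclusions.

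I expect the only delicate points to be the bookkeeping of the normal-family limit on the Riemann sphere---allowing $g$ to equal $\infty$ or $0$ at some real points while still concluding $g(z_0) \in [0,\infty]$, and handling uniformly the case where the indices $n_j$ are bounded (in which a single orbit function $g_{n}$ would have to satisfy $g_n(z) = b(z)$ on an accumulating set, again impossible since $g_n > 0$ on $D \cap \mathbb{R}$)---together with the verification that post-composition by $\hat{R}_{z,t,q}$ preserves normality in the unrooted case. None of these obstructions is serious: the underlying sign argument is robust, so the three sets agree after intersecting with $(0,\infty)$.
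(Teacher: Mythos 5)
Your proposal is correct and follows essentially the same route as the paper: the forward inclusions come from Lemma~\ref{LEM:RELATING_ACTIVITY_LOCUS_TO_ACCUMULATION_LOCUS}, and the reverse inclusions come from the same normality-plus-sign contradiction (positive orbit values $g_{n_k}(z) > 0$ on the ray versus the negative target $1/(1-q)$), with the unrooted case handled by the analogous sequence $\hat{g}_n$. Your extra care with spherical limits, bounded index sequences, and the preservation of normality under post-composition by $\hat{R}_{z,t,q}$ only fills in details the paper leaves to the reader.
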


\begin{proof}
Lemma \ref{LEM:RELATING_ACTIVITY_LOCUS_TO_ACCUMULATION_LOCUS} gives $\mathcal{A}(t,q) \cap (0,\infty) \subset \mathcal{B}(t,q) \cap (0,\infty) $ and $\mathcal{A}(t,q) \cap (0,\infty) \subset \hat{\mathcal{B}}(t,q) \cap (0,\infty)$.   Therefore, it suffices to show that
if $z \in (0,\infty) \setminus \mathcal{A}(t,q)$ then $z \not \in  \mathcal{B}(t,q)$ and $z \not \in  \hat{\mathcal{B}}(t,q)$.

We first show that if $z \in (0,\infty) \setminus \mathcal{A}(t,q)$ then $z \not \in  \mathcal{B}(t,q)$.
Let $z_{\bullet}\in (0,\infty)$ be a passive parameter for the marked point $a(z)=z$ under $R_{z,t,q}(w)$. Then, by definition, there is an open neighborhood $U$ of $z_{\bullet}$ such that the sequence $g_n:U\rightarrow \mathbb{C}_{\infty}$ of functions defined by 
     \[{g_n(z):= R^n_{z,t,q}(a(z))=R^n_{z,t,q}(z)}\] forms a normal family. For the sake of contradiction, let's assume that  $z_{\bullet}\in \mathcal{B}(t,q)$. Then we can find a sub-sequence $(n_k)_{k\in \mathbb{N}}$ and points $z_{n_k}\in U$ such that:
     \begin{enumerate}
         \item [(i)] $(g_{n_k})$ converges uniformly on compact subsets of $U$,
         \item[(ii)] $g_{n_k}(z_{n_k})=R^{n_k}_{z_{n_k},t,q}(z_{n_k})= 1/(1-q)$ for all $k\in \mathbb{N}$, and
         \item[(iii)] $\lim_{k\rightarrow \infty}z_{n_k}=z_{\bullet}$. 
     \end{enumerate}
     Let $g(z)$ be the uniform limit of $g_{n_k}(z)$. Notice $t > 0, q\geq 2$ and $z_\bullet >0$ imply that
     $g_{n_k}(z_\bullet) > 0$ for each $k$ and hence that the limit satisfies $g(z_\bullet) \geq 0$.   On the other hand, because $g_{n_k}$ converges uniformly to $g$, and $z_{n_k}\rightarrow z_\bullet$, we have ${g_{n_k}(z_{n_k})\rightarrow g(z_{\bullet})}$.  This is a contradiction because $g(z_\bullet) \geq 0$ and $g_{n_k}(z_{n_k}) = 1/(1-q) <0$ for each $k$. 

     The proof that if $z \in (0,\infty) \setminus \mathcal{A}(t,q)$ then $z \not \in  \hat{\mathcal{B}}(t,q)$ is quite similar, 
     except that one replaces the sequences of functions $g_n(z)= R^n_{z,t,q}(a(z))$ with a new sequence $\hat{g}_n(z) = \left(\hat{R}_{z,t,q} \circ R^{(n-1)}_{z,t,q}\right)(z)$.   We leave the details to the reader.
\end{proof}

\subsection{Plotting Numerical Approximations to Lee-Yang Zeros Using The {Active-Passive} Dichotomy.}\label{section 1.10}

Lemma \ref{LEM:RELATING_ACTIVITY_LOCUS_TO_ACCUMULATION_LOCUS} gives that $\mathcal{A}(t,q) \subset \mathcal{B}(t,q)$ and $\mathcal{A}(t,q) \subset \hat{\mathcal{B}}(t,q)$.   Although we are primarily interested in the sets $\mathcal{B}(t,q)$ and $\hat{\mathcal{B}}(t,q)$, the active locus $\mathcal{A}(t,q)$ for the marked point $a(z) = z$ under the holomorphic family $R_{z,t,q}(w)$ can be studied dynamically.   Moreover, we can also use the computer to make plots of $\mathcal{A}(t,q)$ that are far more detailed than the plots of Lee-Yang zeros given in Figure \ref{Fig 1.3}.  While the plots of $\mathcal{A}(t,q)$ may be missing some points of $\mathcal{B}(t,q)$ or $\hat{\mathcal{B}}(t,q)$ they still may be informative about those two sets of Lee-Yang zeros.

Arnaud Ch\'eritat helped us by using his computer software to plot the active and passive parameters of the marked point $a(z)=z$ for $R_{z,t,q}$, for fixed $q$ and $t$, in the complex $z$-plane. Figures \ref{AC1} and \ref{AC2} show the active loci related to $t=0.26$ and $t=0.5$ for the $3$-state Potts model. The points in black are supposed to be active parameters and the points in white are supposed to be passive parameters. The reader should compare Figures \ref{AC1} and \ref{AC2} with Figure \ref{Fig 1.3}.

\begin{remark}\label{REM:CONJ_DESCR}
We have made computer images of the active locus for $q = 3$ at many values of $t \in (0,1)$.  They all seem to have the following
property:
\begin{itemize}
\item[(i)] If $0 < t \leq t_2(3)$ we have that $(\mathbb{C}\setminus \{0\})
\setminus \mathcal{A}(t,3)$ consists of two connected components, one
of which contains $z=0$ in its closure and the other which contains $z = \infty$
in its closure.   See Figure \ref{AC1} for an example.
\item[(ii)] If $t > t_2(3)$ then $(\mathbb{C}\setminus \{0\}) 
\setminus \mathcal{A}(t,3)$ consists of a single connected component that contains both $z=0$ and $z= \infty$ in its closure.
See Figure \ref{AC2} for an example.
\end{itemize}
Since the active locus $\mathcal{A}(t,q) \subset \mathcal{B}(t,q)$ is expected to provide a ``good approximation'' to $\mathcal{B}(t,q)$, one may also expect that Properties {\rm (i)} and {\rm (ii)} should also hold
for $\mathcal{B}(t,q)$.  For the $q \geq 3$ state Potts model on the classical $\mathbb{Z}^d$ lattices, this behaviour for the accumulation locus of Lee-Yang zeros
is expected to be true.   It is a generalization of  
the conjectural discription for the accumulation locus of Lee-Yang zeros for the Ising Model, which is believed to be the whole unit circle for $0 < t < t_c$ and a single arc on the circle containing $z=-1$ for $t > t_c$.   See, e.g. \cite[Section 1.3]{Limiting}.

For the Ising Model on the 
the Cayley Tree, the conjectural description for the
accumulation locus of Lee-Yang zeros was proved in \cite[Theorem A]{Limiting}.
We intend to further study whether {\rm (i)} and {\rm (ii)} hold for the
accumulation locus of Lee-Yang zeros associated to the $q \geq 3$ state Potts model on the Cayley
Tree in a future work.  
\end{remark}

It is difficult to numerically compute the Lee-Yang zeros of the partition function when $t>1$. Therefore, we did not include any images analogous to Figure \ref{Fig 1.3} when $t>1$ (the antiferromagnetic case). Instead, we include the plots of the active locus for the marked point $a(z)=z$ for $R_{z,t,3}(w)$ at $t=6$ in Figure \ref{t=6a} and Figure \ref{t=6b}. The Lee-Yang zeros for $\Gamma_n$ accumulate to this active locus as $n\rightarrow \infty$.

We briefly describe here the method used by Ch\'eritat to numerically compute the active locus for a marked point $a(\lambda)$ under a holomorphic family of mappings $f_\lambda(z)$.   A rectangular region in the complex $\lambda$ plane is divided into pixels.  Each pixel
is interpreted as a small square in the complex plane whose side length is $r > 0$.   One also picks a parameter $0 < \theta < 1$, a small
parameter $\epsilon > 0$, and a threshold $m_0 \in \mathbb{N}$.  These are ``tuned'' by the user to get a reasonably looking plot.

To determine if the pixel associated to $\lambda = \lambda_0$ should be colored black (active), white (passive), or red (undecided) the program checks each of the following three conditions for $m=1,2,3,\ldots$.  Once one of the conditions below is met for the pixel 
associated to $\lambda_0$ the program moves on to check the next pixel.

\begin{enumerate}
\item  The program computes $F_m(\lambda) := f_\lambda^m(a(\lambda))$ together with a numerical approximation to the derivative $\left\vert \frac{d(F_m(\lambda) - \lambda)}{d\lambda}\right\vert_{\lambda = \lambda_0}$.   If
\begin{align*}
    |F_m(\lambda_0) - a(\lambda_0)| < r \cdot \theta \left| \frac{d(F_m(\lambda) - \lambda)}{d\lambda} \vert_{\lambda = \lambda_0}\right|
\end{align*}
then it is assumed that $a(\lambda_0)$ is close to being periodic and, when that happens it is likely that $a(\lambda_0)$ is periodic repelling.  In this case the program declares $\lambda_0$ to be active and the pixel is colored black.

\item If Step 1 failed at iterate $m$ then the program checks if 
\begin{align*}
\left|  \frac{d f^m_\lambda(z)}{d z} \vert_{z = a(\lambda_0)}\right| < \epsilon.
\end{align*}
If this holds then the program decides that $a(\lambda_0)$ is probably in the basin of attraction of an attracting periodic cycle.   In this case, the program declares $\lambda_0$ to be passive and the pixel is colored white.

\item If Steps 1 and 2 both fail and if $m$ has reached the threshold $m_0$ then the program stops and colors the pixel red, indicating that it could not decide if $a(\lambda)$ should be interpreted as being active or passive at the
parameter $\lambda_0$.
\end{enumerate}

\begin{figure}[h]
    \centering
    \includegraphics[width=0.35\linewidth]{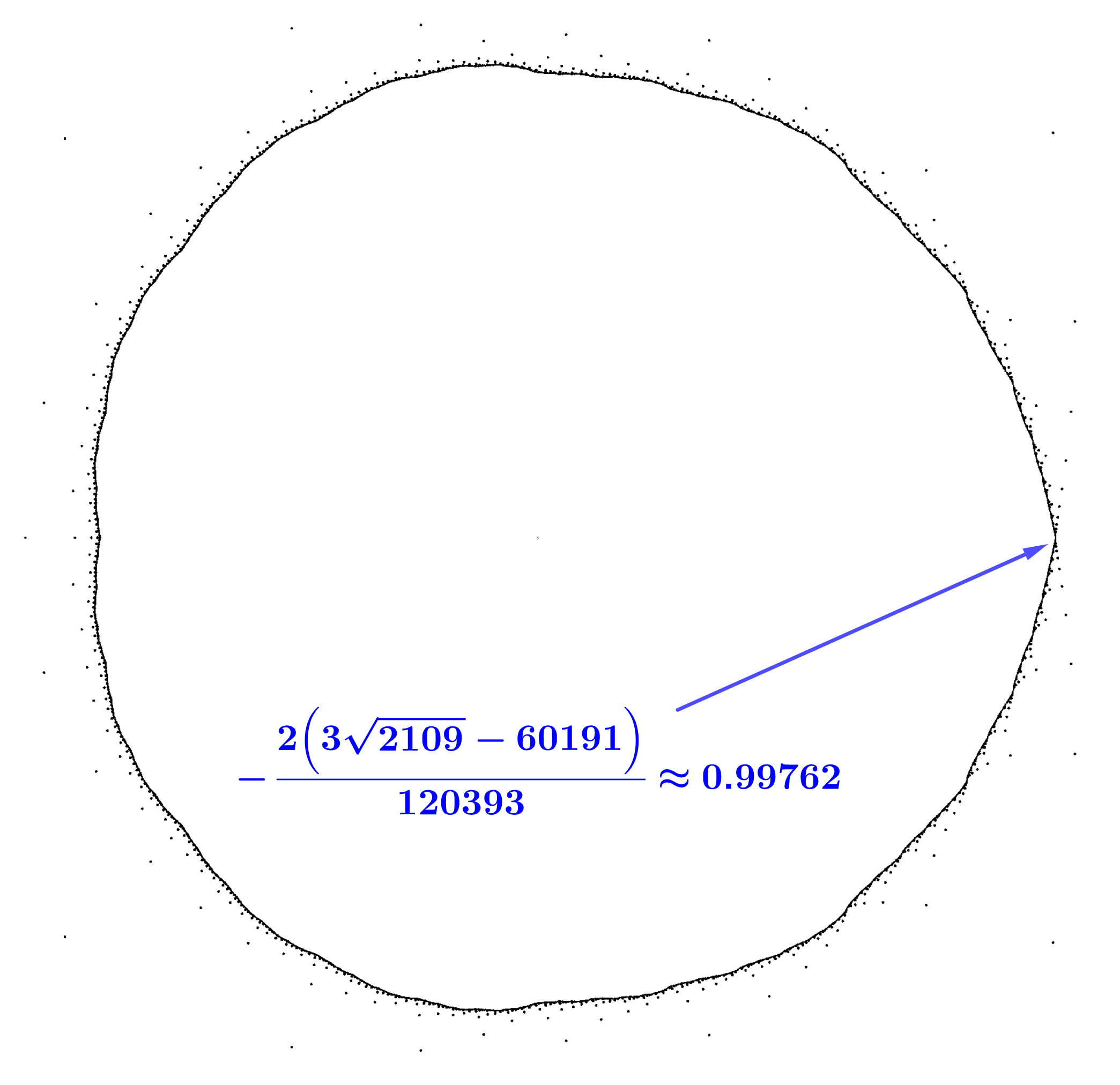}
    \caption{The active locus for the marked point $a(z)=z$ for $R_{z,t,3}(w)$ at $t=0.26$. Points in the active locus are black and the points in the passive locus are white. As explained in Section \ref{section 1.10} the Lee-Yang zeros at $t=0.26$ accumulate to the active locus (black) as $n\rightarrow \infty$.}
    \label{AC1}
\end{figure}

\begin{figure}[h]
    \centering
    \includegraphics[width=0.355\linewidth]{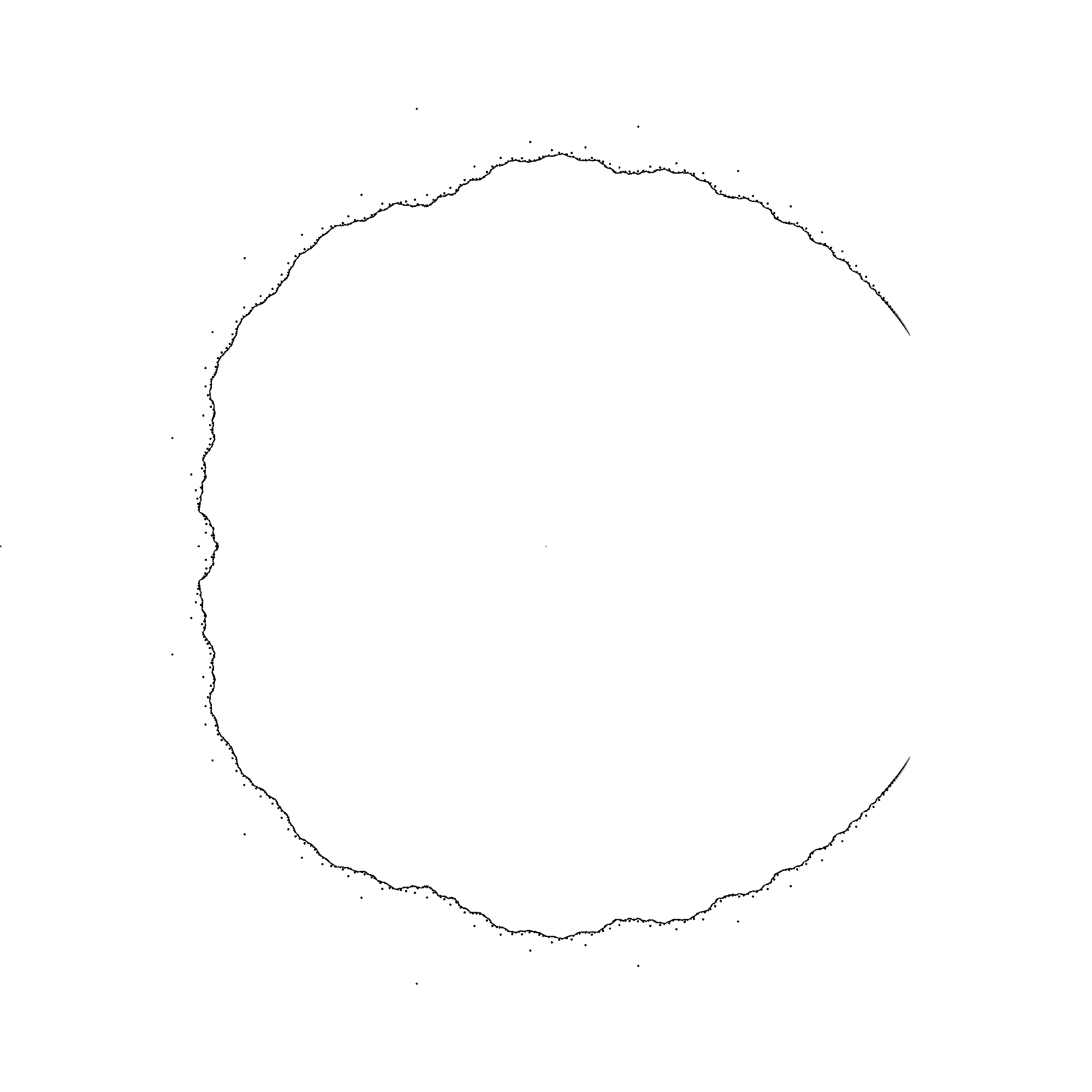}
    \caption{The active locus for the marked point $a(z)=z$ for $R_{z,t,3}(w)$ at $t=0.5$. Points in the active locus are black and the points in the passive locus are white. As explained in Section \ref{section 1.10} the Lee-Yang zeros at $t=0.25$ accumulate to the active locus (black) as $n\rightarrow \infty$.}
    \label{AC2}
\end{figure}

\begin{figure}[h]
    \centering
    \includegraphics[width=0.45\linewidth]{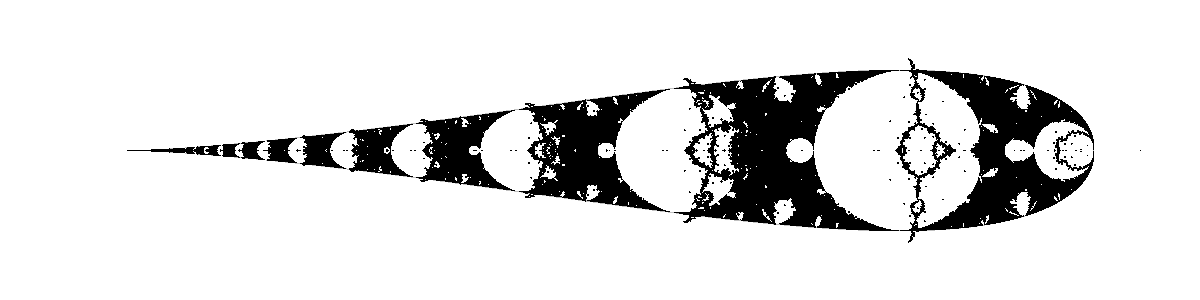}
    \caption{The active locus for the marked point $a(z)=z$ for $R_{z,t,3}(w)$ at $t=6$. Points in the active locus are black and the points in the passive locus are white. The following Figure \ref{t=6b} is a zoomed-in version of this image around $z=1$.}
    \label{t=6a}
\end{figure}

\FloatBarrier

\begin{figure}[h]
    \centering
    \includegraphics[width=0.35\linewidth]{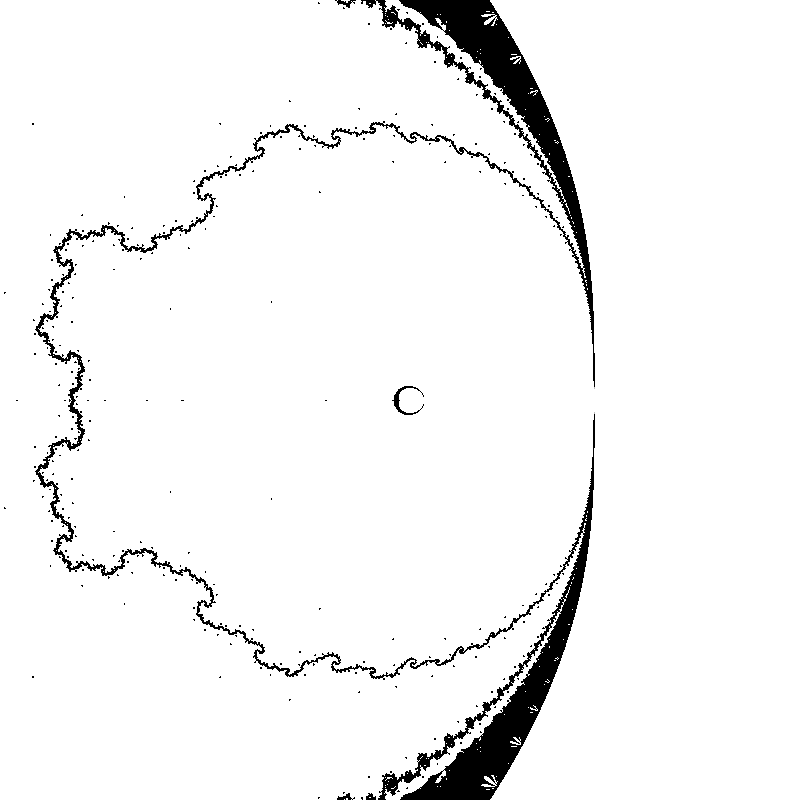}
    \caption{The active locus for the marked point $a(z)=z$ for $R_{z,t,3}(w)$ at $t=6$. Points in the active locus are black and the points in the passive locus are white. This is a zoomed-in version of above Figure \ref{t=6a} around $z=1$.}
    \label{t=6b}
\end{figure}

 \section{PROOF OF THEOREM A}\label{SECTION_PROOF_OF_THMA}

 Note that the cases when $t=0$ and $t=1$ are easily handled using the explicit formula (\ref{EQN:PARTITION_FXN_WHEN_T_EQUALS_01}).   We therefore restrict our attention to $0 < t < 1$ throughout the remainder
 of this section.   We refer the reader to Equation (\ref{EQN:DEF_T1_T2}) for the definitions of $t_1(q)$ and $t_2(q)$.    By Lemma \ref{LEM:RELATING_ACTIVITY_LOCUS_TO_ACCUMULATION_LOCUS2} the proof of Theorem A reduces to the proof of the following theorem.

\begin{theorem}\label{theorem 2}
Suppose $0 <  t < 1$ and let $\mathcal{A}(t,q)$ be the active locus for the marked point $a(z) = z$ under $R_{z,t,q}(w)$. Then 
    \[\mathcal{A}(t,q)\cap (0,\infty) =\left\{
	\begin{array}{ll}
		  \{1\}\ & \mbox{if }\ 0 < t \leq t_1(q),\\
  \{\mathcal{Z}_q(t)\}\ & \mbox{if }\ t_1(q) < t \leq t_2(q),\\
		\ \emptyset\ & \mbox{if }\ t_2(q) < t < 1.
	\end{array}
\right.\]
      Here $\mathcal{Z}_q(t)$ is given in Equation (\ref{EQN:CALZ_Q}) in the statement of Theorem $A$.
   \end{theorem}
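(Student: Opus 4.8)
The plan is to push everything down to the real one--dimensional dynamics of $R_{z,t,q}$ on the invariant ray $(0,\infty)$ and then read off activity from the fixed--point picture. First I would verify that for $0<t<1$ and $z>0$ the map $w\mapsto R_{z,t,q}(w)$ carries $(0,\infty)$ into itself and is strictly increasing there: a direct computation gives $\frac{d}{dw}\left[\frac{t+w+(q-2)tw}{1+(q-1)tw}\right]=\frac{(1-t)(1+(q-1)t)}{(1+(q-1)tw)^2}>0$, so $R_{z,t,q}'(w)>0$ on $(0,\infty)$. Consequently every real orbit is monotone and converges to a fixed point, and there are no real cycles of period greater than $1$. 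Thus for a real parameter $z_0$ the marked point $a(z_0)=z_0$ either converges to an attracting fixed point---in which case $z_0$ is passive by the second part of Lemma \ref{Lemma 3.1.1}---or it sits on, or limits onto, a non--attracting fixed point. The whole theorem therefore reduces to locating, as $t$ varies, the finitely many real parameters at which the marked point fails to be captured by an attracting fixed point.

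To organize the fixed points I would invert the fixed--point equation: $w$ is a fixed point of $R_{z,t,q}$ exactly when $z=\Phi(w):=w\left(\frac{1+(q-1)tw}{t+w+(q-2)tw}\right)^2$. Differentiating the relation $R_{z,t,q}(w)=w$ along this curve yields $\Phi'(w)=(1-\mu)\,z/w$, where $\mu=R_{z,t,q}'(w)$ is the multiplier; hence attracting fixed points lie on the increasing branches of $\Phi$, repelling ones on its decreasing branch, and the critical points of $\Phi$ are exactly the parabolic (saddle--node) fixed points. Since $\Phi(0^+)=0$ and $\Phi(+\infty)=+\infty$, the graph of $\Phi$ is either monotone (one attracting fixed point for every $z$) or carries a single N--shaped fold (a bistable window $z\in(m,M)$ with two attracting fixed points flanking a repelling one). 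The equation $\Phi'(w)=0$ reduces to one quadratic in $w$ whose discriminant factors as $(9(q-1)t^2-(q-2)t-1)(t-1)((q-1)t+1)$; for $0<t<1$ this is positive precisely when $t<t_2(q)$, so the fold exists for $t<t_2(q)$ and degenerates at $t=t_2(q)$. I would also record the one universal feature: since the numerator minus denominator equals $(1-t)(w-1)$, the diagonal $w=z$ meets the fixed--point curve only at $(1,1)$, so $z=1$ is always a fixed point, and its multiplier there is $\frac{2(1-t)}{1+(q-1)t}$, which is $>1$, $=1$, or $<1$ according as $t<t_1(q)$, $t=t_1(q)$, or $t>t_1(q)$ (cf. (\ref{EQN:DEF_T1_T2})).

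With this structure the classification runs as follows. For $0<t<t_1(q)$ the point $(1,1)$ lies on the repelling branch, so $a(z)=z$ lands (with $n_0=0$, and non--persistently because the repelling fixed point is not identically equal to $z$) on a repelling fixed point; Lemma \ref{Lemma 3.1.1} makes $z=1$ active, while every other real $z$ is passive since the marked point is trapped in the basin of one of the attracting fixed points (in particular at the two saddle--node parameters the marked point converges to the surviving attracting fixed point). For $t_1(q)\le t\le t_2(q)$ the point $(1,1)$ has crossed onto the upper increasing branch, so $z=1$ is passive; the marked point follows that attracting fixed point as $z$ decreases until, at the lower saddle--node value $z=m$, it collides with the repelling fixed point and disappears. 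Here I would prove activity directly from the definition: if $\{R_{z,t,q}^n(z)\}$ were normal near $m$, a subsequential holomorphic limit $g$ would have to equal the attracting fixed point $w_3(z)$ for real $z>m$ and the other attracting fixed point $w_1(z)$ for real $z<m$, forcing the continuous $g$ to take two distinct values at $m$---a contradiction; one checks that the companion saddle--node $z=M$ is passive because the marked point stays in the surviving basin there, so $m$ is the only new active parameter. For $t_2(q)<t<1$ the fold is gone, $\Phi$ is monotone, every $z$ has a unique attracting fixed point, and Lemma \ref{Lemma 3.1.1} makes all of $(0,\infty)$ passive, giving $\mathcal{A}(t,q)\cap(0,\infty)=\emptyset$.

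Finally, to obtain the closed form I would set $\mathcal{Z}_q(t)=m=\Phi(\beta)$ with $\beta$ the relevant root of $\Phi'(w)=0$ and eliminate $w$ between $\Phi'(w)=0$ and $z=\Phi(w)$; the two saddle--node values $m<M$ are the roots of a single quadratic in $z$ (with positive leading coefficient $4t((q-2)t+1)^3$), and the smaller root---the $-\sqrt{\,\cdot\,}$ branch---is exactly $\mathcal{Z}_q(t)$ in (\ref{EQN:CALZ_Q}); its discriminant is the radicand there, which vanishes at $t_2(q)$, and the continuity value $\mathcal{Z}_q(t_1(q))=1$ matches the first case. The main obstacle is the global bookkeeping of the fold as $t$ ranges over $(0,1)$: one must show $\Phi$ has exactly the claimed monotone/N--shaped profile (only two critical points, both real and positive precisely for $t<t_2(q)$), that $(1,1)$ sits on the repelling branch for $t<t_1(q)$ and on the correct increasing branch with $m<1<M$ for $t>t_1(q)$, and that the marked point lies in the disappearing basin at $z=m$ but the surviving basin at $z=M$. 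This is a sign/root analysis of the relevant quadratics, which I expect to be exactly the content of the technical lemma in Appendix \ref{APPENDIX_TECHNICAL_LEMMAS}; the two borderline parameters $t=t_1(q)$ (parabolic fixed point at $z=1$) and $t=t_2(q)$ (degenerate fold) require a short separate limiting argument, using that the active locus is closed, while the algebraic simplification of $\Phi(\beta)$ into (\ref{EQN:CALZ_Q}) is lengthy but routine.
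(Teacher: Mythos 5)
Your proposal is correct, and its outer skeleton is the same as the paper's: reduce to the monotone real dynamics of $R_{z,t,q}$ on $(0,\infty)$ (so every real orbit converges to a fixed point and activity can only arise from a repelling fixed point at the marked point, which forces $z=1$ and $t\le t_1(q)$, or from a parameter carrying a neutral fixed point, which forces $z=N_\pm(t,q)$ and $t\le t_2(q)$); prove passivity elsewhere via the attracting-basin half of Lemma \ref{Lemma 3.1.1}; prove activity at the lower saddle-node value by showing the orbit limits jump across that parameter; and capture the borderline temperatures $t=t_1(q)$, $t=t_2(q)$ by closedness of the active locus. Where you genuinely differ is in how the fixed-point bookkeeping is organized. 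The paper locates neutral parameters by taking the discriminant of the polynomial obtained from $R_{z,t,q}(w)=w$ (Lemma \ref{lemma 4.0.7}), producing the explicit quadratic whose roots are $N_\pm(t,q)$, and then settles which fixed point captures the marked point at $z=N_\pm$ via Lemma \ref{LEM:TECHNICAL}, proved in Appendix \ref{APPENDIX_TECHNICAL_LEMMAS} by dividing the $(t,q)$-plane into regions, arguing the ordering of $w_A$, $w_N$, and the marked point is constant on each region, and evaluating at sample parameters. You instead invert the fixed-point relation, $z=\Phi(w)$, and exploit the identity $\Phi'(w)=(1-\mu)z/w$, so that attracting/repelling/parabolic fixed points correspond to increasing/decreasing/critical points of $\Phi$; the needed orderings then follow from two soft facts (the curve meets the diagonal only at $(1,1)$, and lies above the diagonal near $w=0$) rather than from sample-point continuation, and eliminating $w$ from $\Phi'(w)=0$, $z=\Phi(w)$ reproduces the same quadratic and hence (\ref{EQN:CALZ_Q}). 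This is a cleaner and more conceptual route to exactly the content of Lemma \ref{LEM:TECHNICAL}; its cost is that the global shape of $\Phi$ (exactly two critical points, both positive, precisely for $t<t_2(q)$, with $m<1<M$ when $t>t_1(q)$) must still be verified by the sign/root analysis you flag, so the computational burden is comparable. Finally, your non-normality argument at $z=m=N_-(t,q)$ — any locally uniform subsequential limit must agree with the pointwise limits $w_3(z)$ for real $z>m$ and $w_1(z)$ for real $z<m$, hence be discontinuous at $m$ — is a slight streamlining of the paper's explicit failure-of-equicontinuity estimate, but it rests on the same two dynamical inputs (Lemma \ref{LEM:TECHNICAL}(ii)(b) and (iii)).
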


We begin with several lemmas.

\begin{lemma}\label{lemma before lemma 4.0.3}
    Let $0<t<1$, $z>0$, and $q\geq 2$. Then the map $\displaystyle R_{z,t,q}(w)$ has the following properties:
    \begin{enumerate}
        \item [(i)] $R_{z,t,q}(w)$ is increasing on $[0,\infty)$,
        \item[(ii)] $R_{z,t,q}(w)$ has at least one fixed point on $[0,\infty)$, and 
        \item[(iii)] $R_{z,t,q}\left([0,\infty)] \right)\subseteq [c,d]$ with $c,d\in \mathbb{R}$ and $0<c<d$. 
    \end{enumerate}
\end{lemma}

\begin{proof}
Simple calculations show that
        \[\frac{\partial R_{z,t,q}}{\partial w}(w) = \frac{2 z(1-t) ((q-1) t+1) ((q-2) t w+t+w)}{((q-1) t w+1)^3}.\] Because $t\in (0,1)$, $z>0$, and $q\geq 2$, we get $\displaystyle \frac{\partial R_{z,t,q}}{\partial w}(w)>0$ for all $w>0$. Thus, $R_{z,t,q}(w)$ is increasing on $[0,\infty)$.  This proves Claim (i).

        Notice that $R_{z,t,q}(0)=zt^2>0$. Furthermore, the horizontal asymptote of $R_{z,t,q}(w)$, \[\displaystyle \lim_{w\rightarrow \infty}R_{z,t,q}(w)=z\left[\frac{1+(q-2)t}{(q-1)t}\right]^2\] is positive and finite. Thus, $R_{z,t,q}(w)$ must intersect the diagonal at least once. Hence, $R_{z,t,q}(w)$ has at least one positive fixed point, thus proving Claim (ii).

Thus, if we let \[c:= R_{z,t,q}(0)=zt^2\ \ \ \text{and}\ \ \ d:=\lim_{w\rightarrow \infty}R_{z,t,q}(w)=z\left[\frac{1+(q-2)t}{(q-1)t}\right]^2,\] then Claim (iiii) follows from Claims (i) and (ii). 
\end{proof}

 Let $c,d\in \mathbb{R}$ with $c<d$. Let $f:[c,d]\rightarrow [c,d]$ be a differentiable function such that $f'(x)>0$ for all $x \in [a,b]$, and with $f(a) > a$, and $f(b) < b$.    A fixed point $x_{\rm fix}$ for $f$ is
 {\em repelling} if $f'(x_{\rm fix}) > 1$, {\em neutral} if $f'(x_{\rm fix}) = 1$, and {\em attracting} if $f'(x_{\rm fix}) < 1$.  The following Lemma is well-known, so we omit the proof.

\begin{lemma}\label{Lemma 4.0.3}
    Let $f:[c,d]\rightarrow [c,d]$ be as described above.  Let $A_f$, $N_f$ and $R_f$ be the set of all attracting, neutral, and repelling fixed points of $f$ respectively. Then for any $x_0\in (c,d)\backslash (R_f\cup N_f\cup A_f)$, the orbit of $x_0$ under $f$ converges to a point in $A_f\cup N_f$. Furthermore, if there is a point $x_N\in N_f$ and an open interval $I\ni x_0, x_N$ with no fixed point of $f$ in $I\backslash \{x_N\}$ such that 
    \begin{enumerate}
        \item [(i)] $x_0<x_N\ and\ f\vert_I(x)>x$ 
        or
        \item[(ii)] $x_0>x_N\ and\ f\vert_I(x)<x$,
    \end{enumerate}
    then the orbit $\{f^n(x_0)\}_{n=0}^\infty$ of $x_0$ converges to $x_N$.
\end{lemma}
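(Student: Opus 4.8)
The plan is to exploit the single structural feature that makes everything elementary: because $f$ is continuous and strictly increasing, \emph{every} orbit $\{f^n(x_0)\}$ is a monotone sequence, so the statement reduces to the theory of bounded monotone sequences together with a sign analysis of the auxiliary function $g(x):=f(x)-x$. Concretely, fix $x_0\in(c,d)$ that is not a fixed point, so either $f(x_0)>x_0$ or $f(x_0)<x_0$; assume the former (the other case is the mirror image). Applying the increasing map $f$ repeatedly gives $f^{n+1}(x_0)>f^n(x_0)$ for all $n$ by an immediate induction, so the orbit is increasing and, being trapped in $[c,d]$, converges to some $x_\infty$; continuity of $f$ then forces $f(x_\infty)=x_\infty$. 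This already shows that the orbit converges to a fixed point.

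Next I would identify $x_\infty$ precisely and exclude the repelling case. Since $g(x_0)>0$ while $g(d)=f(d)-d<0$ (using the boundary hypothesis at the endpoint $d$), the intermediate value theorem produces a fixed point in $(x_0,d)$; let $y$ be the smallest such one, which exists because the fixed-point set is closed and $g>0$ on a neighborhood of $x_0$ keeps $y$ bounded away from $x_0$. On $(x_0,y)$ there are no fixed points, so $g$ has constant sign there, necessarily $g>0$ since $g(x_0)>0$; monotonicity of $f$ then confines the whole increasing orbit to $[x_0,y)$, which gives $x_\infty=y$. Finally $x_\infty$ cannot be repelling: if $f'(x_\infty)>1$ then $g'(x_\infty)=f'(x_\infty)-1>0$, so $g<0$ immediately below $x_\infty$, contradicting $g>0$ on $(x_0,x_\infty)$. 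Hence $x_\infty\in A_f\cup N_f$, proving the first assertion.

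For the ``furthermore'' statement I would simply specialize this analysis to the interval $I$. In case (i) the hypothesis $f|_I(x)>x$ gives $f(x_0)>x_0$, so the orbit increases to the smallest fixed point exceeding $x_0$; since $I$ is an interval containing $x_0<x_N$ whose only fixed point is $x_N$, the subinterval $(x_0,x_N)\subset I$ is fixed-point-free, forcing that smallest fixed point to be exactly $x_N$, and so $\{f^n(x_0)\}\to x_N$. Case (ii) follows by the symmetric argument with a decreasing orbit converging to the largest fixed point below $x_0$, again equal to $x_N$.

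I expect no serious obstacle here---this is precisely why the authors call the lemma well-known---but the step demanding genuine care is the identification of $x_\infty$ with a specific fixed point together with the exclusion of repelling limits. This is exactly where strict monotonicity of $f$ (rather than mere continuity) is indispensable: it is monotonicity that both makes the orbit one-sided, so that it converges to the \emph{adjacent} fixed point, and pins the sign of $g$ on the trapping interval, which is what rules out a repelling limit and, in the second part, pins the limit down to the prescribed neutral point $x_N$.
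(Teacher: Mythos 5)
Your proof is correct, and there is nothing in the paper to compare it against: the authors explicitly omit the proof of this lemma, stating only that it is ``well-known.'' Your argument is exactly the standard one that justifies that omission --- strict monotonicity of $f$ makes every orbit monotone, hence convergent to a fixed point; the sign analysis of $g(x)=f(x)-x$ on the fixed-point-free interval between $x_0$ and the adjacent fixed point identifies the limit; and the observation that $g'(x_\infty)>0$ would force $g<0$ on the approach side correctly rules out a repelling limit, which is the one step requiring genuine care. The specialization to the interval $I$ in the ``furthermore'' part is also handled correctly, since convexity of $I$ gives that $(x_0,x_N)$ is fixed-point-free, pinning the adjacent fixed point down to $x_N$.
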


\begin{lemma}\label{lemma 4.0.5}
    Let $t\in (0,1)$ be fixed. Suppose $z>0$ is an active parameter for the marked point $a(z)=z$ under $R_{z,t,q}(w)$. Then one of the following two cases occurs.
    \begin{enumerate}
        \item[(i)] The marked point $a(z)=z$ is a repelling fixed point of $R_{z,t,q}(w)$.
        \item[(ii)]  $R_{z,t,q}(w)$ has a positive neutral fixed point.
    \end{enumerate}
\end{lemma}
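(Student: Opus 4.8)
The plan is to argue by contraposition: assuming that neither conclusion holds, I will show that $z$ is a \emph{passive} parameter, contradicting the hypothesis that it is active. Concretely, negating (i) and (ii) means that $R_{z,t,q}$ has no positive neutral fixed point and that, \emph{if} $z$ happens to be a fixed point, it is not repelling. My goal under these assumptions is to exhibit an attracting fixed point of $R_{z,t,q}$ whose basin of attraction contains the marked point $a(z) = z$, so that the passivity half of Lemma~\ref{Lemma 3.1.1} applies.

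First I would reduce everything to one-dimensional real dynamics on a compact interval. By Lemma~\ref{lemma before lemma 4.0.3}, the map $f := R_{z,t,q}|_{[0,\infty)}$ is strictly increasing and carries $[0,\infty)$ into $[c,d]$ with $c = zt^2$, $d = z[(1+(q-2)t)/((q-1)t)]^2$, and $0 < c < d$. Since $[c,d] \subset [0,\infty)$, the interval $[c,d]$ is forward-invariant, so I may regard $f$ as a self-map $f:[c,d]\to[c,d]$. I would then verify the two boundary inequalities needed to invoke Lemma~\ref{Lemma 4.0.3}: because $f$ is strictly increasing, $f(c) = R_{z,t,q}(zt^2) > R_{z,t,q}(0) = c$, and $f(d) < \lim_{w\to\infty} R_{z,t,q}(w) = d$. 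I also need to locate the marked point strictly inside the open interval; using $t \in (0,1)$, a direct comparison gives $c = zt^2 < z$ (since $t^2 < 1$) and $z < d$ (since $t < 1$ forces $(1+(q-2)t)/((q-1)t) > 1$), so $z \in (c,d)$.

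Next I would split on whether the marked point is itself a fixed point. If $R_{z,t,q}(z) = z$, then by the negation of (i) it is not repelling and by the negation of (ii) it is not neutral, so its multiplier satisfies $0 < f'(z) < 1$ and $z$ is an attracting fixed point lying in its own basin. If instead $z$ is not a fixed point, then $z \in (c,d)\setminus(R_f\cup N_f\cup A_f)$, and Lemma~\ref{Lemma 4.0.3} shows that the orbit $\{f^n(z)\}_{n=0}^\infty$ converges to a fixed point in $A_f \cup N_f$; since $N_f = \emptyset$ by the negation of (ii), the limit is an attracting fixed point whose basin contains $z$. In either case the marked point lies in the basin of attraction of an attracting fixed point of $R_{z,t,q}$, so the second assertion of Lemma~\ref{Lemma 3.1.1} forces $z$ to be passive, giving the desired contradiction.

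The argument is conceptually straightforward once the earlier lemmas are in hand, so the only genuinely delicate points are bookkeeping. The first is verifying the strict boundary inequalities $f(c) > c$ and $f(d) < d$ together with the inclusion $z \in (c,d)$ — this is precisely where the hypothesis $t \in (0,1)$ is essential and where the degenerate cases $t = 1$ (or $t > 1$) would break the monotone-interval picture. The second is the passage from real to complex dynamics: I must note that a fixed point which is attracting for the increasing real map $f$ has real multiplier in $(0,1)$ and is therefore genuinely attracting in $\mathbb{C}_\infty$, so that its immediate basin is an open subset of $\mathbb{C}$ containing the convergent real orbit. Granting these two checks, the passivity conclusion of Lemma~\ref{Lemma 3.1.1} applies verbatim and closes the proof.
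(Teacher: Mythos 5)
Your proposal is correct and follows essentially the same route as the paper's proof: reduce to the monotone interval dynamics of Lemma \ref{lemma before lemma 4.0.3}, apply the trichotomy of Lemma \ref{Lemma 4.0.3}, and conclude passivity via the attracting-basin half of Lemma \ref{Lemma 3.1.1}. Your contrapositive formulation, the explicit boundary checks $f(c)>c$, $f(d)<d$, $z\in(c,d)$, and the remark that a real multiplier in $(0,1)$ gives genuine attraction in $\mathbb{C}_\infty$ are just more detailed bookkeeping of the same argument.
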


\begin{proof}
    Let $t\in (0,1)$ be fixed. Then for any fixed $z>0$, by Lemma \ref{lemma before lemma 4.0.3}, the function $R_{z,t,q}(w)$ is increasing on $[0,\infty)$ with at least one fixed point. Moreover, $R_{z,t,q}(w)$ maps the compact interval $\displaystyle \left[c,d\right]$ to itself. Therefore, from Lemma \ref{Lemma 4.0.3} we see that $a(z)=z$ is a repelling fixed point of $R_{z,t,q}(w)$, the orbit of $z$ under $R_{z,t,q}$ converges to a neutral fixed point, or the orbit of $z$ under $R_{z,t,q}$ converges to an attracting fixed point of $R_{z,t,q}(w)$.   By Lemma \ref{Lemma 3.1.1}, the third option corresponds to passive behavior of the marked point $a(z)$, so that (i) or (ii) must occur.
\end{proof}

\begin{lemma}\label{lemma 4.0.6}
    Let $t \in (0,\infty)\setminus \{1\}$ and $z > 0$. Then the marked point $a(z)=z$ is a fixed point of $R_{z,t,q}(w)$ if and only if $z=1$.
\end{lemma}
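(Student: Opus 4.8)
The plan is to solve the fixed-point equation $R_{z,t,q}(z)=z$ directly and show that, under the hypotheses $t>0$, $t\neq 1$, $z>0$, and $q\geq 2$, its only positive solution is $z=1$. The two implications of the ``if and only if'' then follow, with the forward direction carrying all the content and the converse being a routine substitution.

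First I would write the fixed-point equation explicitly as
\[
z\left[\frac{t+z+(q-2)tz}{1+(q-1)tz}\right]^2 = z .
\]
Since $z>0$ we may cancel the common factor $z$, reducing the problem to
\[
\left[\frac{t+z+(q-2)tz}{1+(q-1)tz}\right]^2 = 1 ,
\]
which holds precisely when the bracketed fraction equals $+1$ or $-1$. The argument thus splits into these two sign cases.

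Next I would treat the two cases. In the $+1$ case, clearing denominators and cancelling the terms carrying a factor $tz$ collapses the equation to $z(1-t)=1-t$; since $t\neq 1$ the factor $1-t$ is nonzero and this forces $z=1$. In the $-1$ case, clearing denominators produces an equation whose left-hand side equals $1+t+z+(2q-3)\,tz$, a sum of strictly positive quantities because $t>0$, $z>0$, and $2q-3\geq 1$ for $q\geq 2$. Hence the $-1$ branch can never vanish for positive $z$ and contributes no solution.

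Finally, for the converse implication I would simply verify that $z=1$ is indeed fixed: substituting $w=z=1$ one checks that the numerator $t+1+(q-2)t$ coincides with the denominator $1+(q-1)t$, so the bracketed factor equals $1$ and $R_{1,t,q}(1)=1$. The only step requiring any attention is the exclusion of the $-1$ branch, but the positivity of every term there makes this immediate, so I do not anticipate any genuine obstacle in carrying out the proof.
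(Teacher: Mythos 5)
Your proposal is correct and follows essentially the same route as the paper: cancel $z$, reduce the fixed-point equation to $t+z+(q-2)tz = 1+(q-1)tz$, and conclude $z(1-t)=1-t$, hence $z=1$. The only cosmetic difference is where positivity is invoked --- the paper notes the bracketed fraction is nonnegative (numerator $\geq 0$, denominator $>0$) so only the $+1$ square root can occur, whereas you clear denominators in the $-1$ branch and observe that $1+t+z+(2q-3)tz>0$; both are the same positivity argument.
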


\begin{proof}
A direct calculation gives $R_{1,t,q}(1) = 1$.   Conversely, $t \geq 0$ and $z \geq 0$ imply that $t+z+(q-2)tz \geq 0$, and $1+(q-1)tz>0$ which allows us to simplify
\[R_{z,t,q}(z)=z\left[\frac{t+z+(q-2)tz}{1+(q-1)tz}\right]^2=z \] to \[t+z+(q-2)tz=1+(q-1)tz\] by taking a square root.  This implies $z(1-t)=1-t$. Because $t\neq1$, we get $z=1$.
\end{proof}

\begin{lemma} \label{lemma 4.0.7}
    Let $z > 0$, $t\in (0,1)$ and $q \geq 2$. If $\displaystyle R_{z,t,q}(w)$ has a neutral fixed point at some $w=w_N(z)>0$, then $t \in (0,t_2(q)]$
    and 
    $z=N_{\pm}(t,q)$, where 
\begin{align*}
N_{\pm}(t,q):= 
\frac{\left(\substack{
(-27 (q-1)^2 t^4+18 \left(q^2-3 q+2\right) t^3+\left(q^2+14 q-14\right) t^2 +2 (q-2) t+1))
 \\
 \pm \sqrt{(t-1) ((q-1) t+1) \left(9 (q-1) t^2-(q-2) t-1\right)^3}
}\right)}
     {8 t ((q-2) t+1)^3}.
 \end{align*}

     Moreover, for each value of $t \in (0,t_2(q)]$ both values $N_\pm(t,q)$ are positive.
\end{lemma}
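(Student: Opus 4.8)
The plan is to translate the two conditions defining a positive neutral fixed point into polynomial equations, eliminate $z$ to obtain a single quadratic in $w$ whose real solvability forces $t \le t_2(q)$, and then recover $z = N_\pm(t,q)$ by back-substitution. Writing $P := t + (1+(q-2)t)w$ and $Q := 1 + (q-1)tw$, a positive fixed point $w = w_N(z) > 0$ satisfies $R_{z,t,q}(w) = w$, i.e.\ $zP^2 = wQ^2$, while neutrality $\frac{\partial R_{z,t,q}}{\partial w}(w) = 1$ becomes, via the derivative formula in Lemma~\ref{lemma before lemma 4.0.3}, the condition $2z(1-t)((q-1)t+1)P = Q^3$. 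Since $z$, $w$, $P$, and $Q$ are all positive for $z,w>0$, $t\in(0,1)$, and $q\ge 2$, I would divide the neutrality equation by the fixed-point equation to eliminate $z$ entirely, arriving at the clean relation $2w(1-t)((q-1)t+1) = PQ$.

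First I would expand this relation into the quadratic
\[
(q-1)t(1+(q-2)t)\,w^2 + \bigl(3(q-1)t^2 - (q-2)t - 1\bigr)\,w + t = 0,
\]
whose coefficients depend only on $t$ and $q$. The central computation is its discriminant, which I expect to factor as $\bigl((q-1)t^2 - (q-2)t - 1\bigr)\bigl(9(q-1)t^2 - (q-2)t - 1\bigr)$. Observing that $(q-1)t^2 - (q-2)t - 1 = (t-1)((q-1)t+1)$ is strictly negative for $t\in(0,1)$, nonnegativity of the discriminant (necessary for a real root $w_N$) forces $9(q-1)t^2 - (q-2)t - 1 \le 0$. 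Since $t_2(q)$ in Equation~(\ref{EQN:DEF_T1_T2}) is precisely the positive root of this last quadratic, this yields $t \in (0,t_2(q)]$, the first assertion of the lemma.

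Next I would solve the quadratic explicitly for its two roots $w_\pm$ and substitute each into the identity $z = wQ^2/P^2$ coming from the fixed-point equation; rationalizing and simplifying should reproduce exactly the formula for $N_\pm(t,q)$, keeping track of how the two sign choices in $w_\pm$ correspond to those in $N_\pm$. For the positivity claim I would apply Vieta's formulas to the quadratic: the product of its roots equals $1/\bigl((q-1)(1+(q-2)t)\bigr) > 0$, and their sum equals $\bigl(1 + (q-2)t - 3(q-1)t^2\bigr)/\bigl((q-1)t(1+(q-2)t)\bigr)$. The numerator of the sum stays positive throughout $(0,t_2(q)]$ because $t_2(q)$ lies strictly below the positive root of $3(q-1)t^2 - (q-2)t - 1$, a fact I would confirm by comparing $\sqrt{q^2+32q-32}$ with $3\sqrt{q^2+8q-8}$. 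Hence both roots $w_\pm$ are positive, and since $P,Q>0$ there, the identity $z = wQ^2/P^2$ shows $N_\pm(t,q) > 0$.

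The main obstacle will be the back-substitution in the previous step: verifying $z = N_\pm(t,q)$ requires a lengthy algebraic simplification of $wQ^2/P^2$ evaluated at $w = w_\pm$ (most reliably carried out with a computer algebra system), in which the extraneous factors must cancel so that only the lone radical $\sqrt{(t-1)((q-1)t+1)\bigl(9(q-1)t^2 - (q-2)t - 1\bigr)^3}$ survives to match the stated expression for $N_\pm$.
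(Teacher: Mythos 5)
Your proposal is correct, but it follows a genuinely different elimination strategy than the paper. The paper works with the single cubic $P_{z,t,q}(w) = z((q-2)tw+t+w)^2 - w((q-1)tw+1)^2$ and takes its discriminant in $w$ \emph{with $z$ kept as a coefficient}; that discriminant factors as $-(t-1)^2(1-t+qt)^2 z$ times a quadratic polynomial in $z$, so $N_\pm(t,q)$ drop out directly from the quadratic formula in $z$, and the constraint $t \in (0,t_2(q)]$ is read off from when the square root in that formula is real --- no back-substitution is ever needed. You instead eliminate $z$ first by dividing the neutrality equation by the fixed-point equation, obtaining the quadratic $(q-1)t(1+(q-2)t)w^2 + \bigl(3(q-1)t^2-(q-2)t-1\bigr)w + t = 0$ in $w$ alone; your claimed discriminant factorization $(t-1)((q-1)t+1)\bigl(9(q-1)t^2-(q-2)t-1\bigr)$ is correct (it checks out by direct expansion), so the threshold $t_2(q)$ emerges transparently as the positive root of $9(q-1)t^2-(q-2)t-1$, and this half of your argument is fully hand-checkable, whereas the paper's discriminant of a $z$-dependent cubic is a computation one would normally hand to a computer algebra system anyway. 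The price is exactly the step you flag: recovering $z = N_\pm(t,q)$ requires the messy substitution of $w_\pm$ into $z = wQ^2/P^2$, which the paper's route avoids entirely. The positivity arguments also differ: the paper uses continuity of $N_\pm$ on $(0,t_2(q)]$, positivity at the sample point $t = 1/(1+q)$, and the observation that $z=0$ never produces a multiple root; your Vieta argument (product of roots $1/\bigl((q-1)(1+(q-2)t)\bigr)>0$, sum positive because $t_2(q)$ lies below the positive root of $3(q-1)t^2-(q-2)t-1$) is more direct. One point you should make explicit: transferring positivity of the two back-substituted $z$-values to \emph{both} formulas $N_+$ and $N_-$ requires knowing that $w_+$ and $w_-$ yield two distinct values of $z$ exhausting $\{N_+,N_-\}$ when $t < t_2(q)$; this holds because a cubic cannot have two double roots, so distinct neutral fixed points must occur at distinct parameters $z$.
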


\noindent \textit{Remark:} $\mathcal{Z}_q(t)$ from Theorem A is equal to $N_-(t,q)$.   Figure \ref{fig 4.1} shows a plot $N_\pm(t,q)$ in the case that $q = 3$.

\begin{figure}[h]
    \centering
    \includegraphics[width=0.6\linewidth]{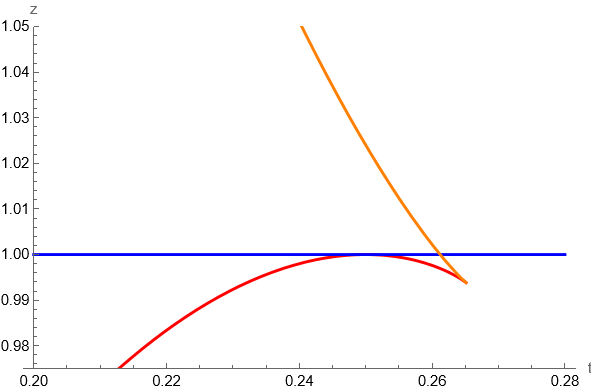}
    \caption{Plots of $N_+(t,3)$ (orange), $N_-(t,3)$ (red), and $1$ (blue) versus $t$. Compare with Figure \ref{graph of $y=Z(t)$}.}
    \label{fig 4.1}
\end{figure}

\begin{proof}
Throughout the proof, fix $q \geq 2$.
    By definition, $w_N>0$ is a neutral fixed point of $R_{z,t,q}(w)$ if and only if 
\begin{align*}
    R_{z,t,q}(w_N)=w_N \qquad \mbox{and} \qquad \frac{d}{dw}R_{z,t,q}(w)\mid_{w=w_N(z)}=\pm 1.
\end{align*}
However, by Lemma \ref{lemma before lemma 4.0.3} we find that $\displaystyle \frac{d}{dw}R_{z,t,q}(w)=1$.
Therefore the above conditions reduce to the condition that $R_{z,t,q}(w) - w$ has a multiple root at $w_N$.   Clearing denominators, this is equivalent to the following polynomial in $w$ having a multiple root at some $w_N$:
\begin{align*}
P_{z,t,q}(w):=z((q - 2)tw + t + w)^2 - w((q - 1)tw + 1)^2.
\end{align*}
Taking the discriminant of $P_{z,t,q}(w)$ with respect to the variable $w$ we find that $R_{z,t,q}$ has a neutral fixed point
if and only if
\begin{align*}
    -(-1 + t)^2 (1 - t + q t)^2 z \, Q(z,t,q) = 0,
    \end{align*}
where the last factor $Q(z,t,q)$ is the following quadratic polynomial in $z$:
\begin{align*}
  Q(z,t,q) := &  \left(4 q^3 t^4-24 q^2 t^4+48 q t^4-24 (q-2) t^3+12 (q-2) q t^3+12 q t^2-32 t^4-24 t^2+4 t\right)z^2 + \\
  & \left(-q^2 t^4+3 q^2 t^2+2 q t^4-6 (q-2) t^3+6 (q-2) q t^3-6 q t^2+6 q t-t^4+6 t^2-12 t+3\right) z + \\
  &4 q t-4 t.
\end{align*}
Since $t \in (0,1)$, $z > 0$, and $q \geq 2$ the only way that
the discriminant of $P_{z,t,q}(w)$ vanishes is when $Q(z,t,q)$ vanishes.
The reader can check that $Q(z,t,q)$ vanishes precisely under the conditions stated in the conclusion of this lemma.

The square root in the formula for 
 $N_\pm(t)$ is non-negative if and only if $t \in (0,t_2(q)]$.   Since the denominator of $N_\pm(t,q)$ does not vanish for $t > 0$ we have that $N_\pm(t,q)$ both vary continuously as $t$ varies over the interval $(0,t_2(q)]$.  When $t = 1/(1+q)$ one finds that 
 \begin{align*}
 N_-(t,q) = 1 \qquad \mbox{and} \qquad N_+(t,q) = \frac{(q-1)(q+1)^3}{(2q-1)^3},
 \end{align*}
 both of which are positive.   Meanwhile,
 substituting $z=0$ into the polynomial $P_{z,t,q}(w)$ yields $-w (1 + (-1 + q) t w)^2$ which has no solutions when $w,t > 0$.  Therefore $N_\pm(t,q)$ remain positive for all $t \in (0,t_2(q)]$.
\end{proof}

We will need the following lemma to complete the proof of Theorem \ref{theorem 2}.   Since the proof involves some technical calculations, we will give it in Appendix \ref{APPENDIX_TECHNICAL_LEMMAS}.

\begin{lemma}\label{LEM:TECHNICAL} 
We have:
\begin{enumerate}
    \item [(i)] Suppose $z=N_+(t,q)$. Then, for any $t \in (0,t_2(q))$, the orbit of the marked point $a(z) = z$ under iteration of $R_{z,t,q}$ converges to an attracting fixed point $w_A(z)$ of $R_{z,t,q}(w)$.
     \item [(ii)] Suppose $z=N_-(t,q)$.  Then:
     \begin{enumerate}
         \item [(a)] If $t \in (0,t_1(q))$, then the orbit of the marked point $a(z) = z$ under iteration of $R_{z,t,q}$ converges to an attracting fixed point $w_A(z)$ of $R_{z,t,q}(w)$.
         \item[(b)] If $t \in (t_1(q),t_2(q))$, then the orbit of the marked point $a(z) = z$ under iteration of $R_{z,t,q}$ converges to a neutral fixed point  $w_N(N_-(t,q))$ of $R_{z,t,q}(w)$.
     \end{enumerate}
     \item[(iii)] If $t \in (t_1(q),t_2(q))$ and $z < N_-(t,q)$ is sufficiently close to $N_-(t,q)$, then the orbit of $z$ under iteration of $R_{z,t,q}$ converges to an attracting fixed point $w_A(z)$ of $R_{z,t,q}(w)$.
\end{enumerate}
    \end{lemma}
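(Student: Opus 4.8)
The plan is to treat all three parts as applications of the monotone-interval dynamics recorded in Lemma \ref{Lemma 4.0.3}, after pinning down the fixed-point configuration of $R_{z,t,q}$ at the relevant parameters by means of two clean algebraic factorizations. First I would reuse the cubic $P_{z,t,q}(w) = z((q-2)tw + t + w)^2 - w((q-1)tw + 1)^2$ from the proof of Lemma \ref{lemma 4.0.7}, whose positive roots are exactly the positive fixed points of $R_{z,t,q}$ (the denominator $1+(q-1)tw$ being positive for $w>0$). At $z = N_\pm(t,q)$ the discriminant $Q$ vanishes, so $P$ acquires a double root at the neutral fixed point $w_N$ together with a third real root $w_s$, yielding $P_{z,t,q}(w) = -(q-1)^2 t^2 (w - w_s)(w - w_N)^2$. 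Since $R_{z,t,q}(w) - w = P_{z,t,q}(w)/(1+(q-1)tw)^2$, this gives a first sign rule: for $w>0$ the quantity $R_{z,t,q}(w)-w$ has the sign of $w_s - w$, so $w_s$ is a sign-changing (hence attracting) fixed point while $w_N$ is only semi-stable. Second, evaluating at the marked point and using the difference-of-squares factorization
\begin{align*}
R_{z,t,q}(z) - z = \frac{z\,(1-t)(z-1)\,\big((2q-3)tz + t + z + 1\big)}{(1+(q-1)tz)^2}
\end{align*}
gives a second sign rule: for $z>0$ and $t\in(0,1)$ the orbit of $a(z)=z$ increases when $z>1$ and decreases when $z<1$.

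With these two rules, each part reduces to locating $a(z)=N_\pm(t,q)$ relative to $w_N$ and $w_s$ and invoking Lemma \ref{Lemma 4.0.3}. For parts (i) and (ii)(a) I would establish the ordering $w_N < N_\pm < w_s$ with $w_s>0$: then by the second sign rule the orbit increases monotonically from $N_\pm$ toward the attracting fixed point $w_s = w_A$, with the semi-stable point $w_N$ sitting below the marked point and causing no obstruction, so the first clause of Lemma \ref{Lemma 4.0.3} applies. For part (ii)(b) I would instead establish the reversed ordering $w_s < w_N < N_-$: then the orbit decreases from $N_-$ and is trapped at the semi-stable fixed point $w_N$, which is precisely the situation of clause (ii) of Lemma \ref{Lemma 4.0.3} (here $x_0 = N_- > x_N = w_N$, the interval $(w_N,N_-)$ contains no fixed point, and $R_{z,t,q}(w)<w$ there). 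The conceptual reason the behavior switches between (ii)(a) and (ii)(b) exactly at $t_1(q)$ is that $N_-(t_1) = 1$, so by the second sign rule the orbit direction at $a=N_-$ reverses there.

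For part (iii) the plan is a saddle-node perturbation argument. I would check that just below $z=N_-$ the cubic $P$ has three simple real roots (the correct side of the vanishing of $Q$, equivalently of the cubic discriminant, which up to a positive factor has the sign of $-zQ(z,t,q)$), so the double root $w_N$ splits into a repelling and an attracting fixed point, the larger of which, $w_A(z)$, is attracting and tends to $w_N$ as $z\to N_-^-$. Since at $z=N_-$ the marked point lies above $w_N$ and decreases toward it, continuity forces $a(z)=z$ to remain above $w_A(z)$ for $z$ slightly below $N_-$, so its orbit decreases to $w_A(z)$ and the first clause of Lemma \ref{Lemma 4.0.3} again applies.

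I expect the main obstacle to be verifying the three root orderings from the explicit and unwieldy formulas for $N_\pm(t,q)$, $w_N$, and $w_s$ — in particular showing $w_s > N_\pm > 0$ in cases (i) and (ii)(a), and $w_s < w_N < N_-$ in case (ii)(b), uniformly over the relevant $t$-ranges and all $q \geq 2$. These comparisons do not follow from the sign rules alone and will require eliminating radicals and reducing to polynomial inequalities in $t$ and $q$, which is why the argument is deferred to the appendix. A secondary technical point is confirming for part (iii) that it is the side $z<N_-$ (rather than $z>N_-$) on which two real fixed points emerge near $w_N$, i.e.\ determining the sign of the leading coefficient of $Q(z,t,q)$ in $z$.
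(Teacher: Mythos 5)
Your framework---the two factorization sign rules plus monotone interval dynamics---is essentially the paper's own (your first sign rule is exactly its Lemma~\ref{LEM:NEUTRAL_AND_ATTRACTING_FP}), and both sign rules are correct as stated. The genuine gap is that the concrete orderings you defer to verification are false, so the deferred step would fail. For (ii)(a): one has $N_-(t,q)<1$ on all of $(0,t_1(q))$ --- e.g.\ $N_-(0.2,3)=\frac{1}{36}\left(39-\sqrt{21}\right)\approx 0.956$ while $t_1(3)=1/4$, and $N_-$ can equal $1$ only at $t=t_1(q)$ by Lemma~\ref{LEM:WHEN_IS_MARKED_PT_FIXED} --- so by your own second sign rule the orbit \emph{decreases}; the true ordering is $w_A<N_-<w_N$ (your $w_s$ is the attracting point $w_A$), not $w_N<N_-<w_s$. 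Consequently your ``conceptual reason'' for the switch at $t_1(q)$ is also wrong: the orbit direction does not reverse there; rather $N_-$ and the neutral point $w_N$ cross each other (both pass through $1$ at $t=t_1(q)$), moving the marked point from the unobstructed side of $w_N$ to the trapped side. For (i) the ordering is not uniform over $(0,t_2(q))$: $N_+$ equals $1$ (hence is fixed) exactly at $t_4(q)=\frac{1-2\sqrt{q-1}}{5-4q}\in(0,t_2(q))$ for $q>2$, and for $t\in(t_4(q),t_2(q))$ one has $w_N<w_A<N_+$ with $N_+<1$ (e.g.\ $(t,q)=(0.15,10)$ gives $w_N\approx 0.122$, $w_A\approx 0.785$, $N_+\approx 0.945$), so the orbit decreases to $w_A$ from above. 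A single polynomial inequality valid on all of $(0,t_2(q))$ therefore cannot exist. The paper sidesteps exactly this trap: using Lemmas~\ref{LEM:TRIPLE_FP} and~\ref{LEM:WHEN_IS_MARKED_PT_FIXED} it shows the ordering of $w_N,w_A,N_\pm$ is locally constant on the connected parameter regions cut out by the curves $t=t_4(q)$ (resp.\ $t=t_1(q)$) and $t=t_2(q)$, and then determines the ordering by evaluating at one explicit sample point per region.

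Part (iii) is on the wrong side of the saddle-node. Since $R_{z,t,q}(w)$ is increasing in $z$ and, by your first sign rule, at $z=N_-$ (with $t\in(t_1,t_2)$) the graph touches the diagonal from below at $w_N$ (indeed $R_{z,t,q}(w)-w<0$ for all $w>w_A$, $w\neq w_N$), decreasing $z$ pushes the graph strictly below the diagonal near $w_N$: the double root becomes a complex-conjugate pair and \emph{no} real fixed points survive near $w_N$. Equivalently, the leading coefficient of $Q(z,t,q)$ in $z$ is $4t\left((q-2)t+1\right)^3>0$ and $Q(0,t,q)=4(q-1)t>0$, so the cubic has three distinct real roots precisely when $N_-<z<N_+$; the splitting you describe happens for $z$ slightly \emph{above} $N_-$, not below. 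The correct picture for $z$ slightly below $N_-$ is that the orbit of the marked point decreases past the former neutral point and converges to the continuation of the attracting point $w_A$, which stays far below $w_N$. This identification is not cosmetic: in the proof of Theorem~\ref{theorem 2}, activity at $z=N_-$ is deduced precisely because the orbit limits for nearby $z<N_-$ remain a definite distance (about $w_N-w_A$) below the limit $w_N$ attained at $z=N_-$; under your picture, with an attracting fixed point emerging just below $w_N$, the limits would vary continuously and that normality-breaking argument would collapse.
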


\begin{proof}[Proof of Theorem \ref{theorem 2}]
Suppose $0<t< 1$. By Lemma \ref{lemma 4.0.5}, if $z>0$ is an active parameter for the marked point $a(z)=z$ under $R_{z,t,q}(w)$, then either
  \begin{enumerate}
      \item [(i).] $a(z)=z$ is a repelling fixed point for $R_{z,t,q}(w)$, and in this case, that repelling fixed point must be $w=1$ by Lemma \ref{lemma 4.0.6}, or 
\item[(ii).] $R_{z,t,q}(w)$ has a neutral fixed point $w_N(z)>0$. Then by Lemma \ref{lemma 4.0.7}, $t \in (0,t_2(q)]$ and $z=N_{\pm}(t,q)$.
 \end{enumerate}

\noindent In Case $(i)$, a simple calculation gives
 \[R'_{1,t,q}(1)=\frac{2-2 t}{(q-1) t+1}.\] By solving the inequality $ R'_{1,t,q}(1)>1$ for $t$ we get $\displaystyle 0\leq t< t_1(q)$. Moreover, when $\displaystyle 0\leq t< t_1(q)$, $R_{z,t,q}(w)$ maps $a(z)=z$ non-persistently onto this repelling fixed point $w=1$ as $a(z)=z$ is not fixed for $z\neq 1$. Then by Lemma \ref{Lemma 3.1.1}, $z=1$ is an active parameter for the marked point $a(z)=z$ under the map $R_{z,t,q}(w)$. 
 If $\displaystyle t=t_1(q)$, then $z=1$ continues to be an active parameter for the marked point $a(z)=z$ as the active locus is closed. If $\displaystyle t>t_1(q)$, then $w=1$ is an attracting fixed point. So, in that case, $a(z)=z$ is passive at $z=1$.

We now consider Case $(ii)$.  Note that when $t = t_2(q)$ we have $N_+(t,q) = N_-(t,q)$.  To finish the proof of Theorem \ref{theorem 2} we must check:
\begin{enumerate}
    \item [(I).] For any $0<t< t_2(q)$ we have that $z=N_+(t,q)$ is passive for the marked point $a(z)=z$ under $R_{z,t,q}(w)$.
    \item[(II).] For any $0<t< t_1(q)$ we have that $z=N_-(t,q)$ is passive for the marked point $a(z)=z$ under $R_{z,t,q}(w)$, and for any $t_1(q) \leq t\leq  t_2(q)$ we have that $z=N_-(t,q)$ is active for the marked point $a(z)=z$ under $R_{z,t,q}(w)$.
\end{enumerate}

Suppose $\displaystyle 0<t< t_2(q)$. If $z=N_+(t,q)$, then Lemma \ref{LEM:TECHNICAL}(i) gives that the orbit of $z$ converges to an attracting fixed point of $R_{z,t,q}(w)$. Thus, by Lemma \ref{Lemma 3.1.1}, $z=N_+(t,q)$ is a passive parameter for the marked point $a(z)=z$ under $R_{z,t,q}(w)$.   Similarly, when $0<t< t_1(q)$ and $z=N_-(t,q)$, Lemma \ref{LEM:TECHNICAL}(ii(a)) gives that the orbit of $z$ converges to an attracting fixed point of $R_{z,t,q}(w)$, making it a passive parameter for the marked point $a(z)=z$ under $R_{z,t,q}(w)$ by Lemma \ref{Lemma 3.1.1}.

 Suppose $t_1(q) <t < t_2(q)$ and $z=N_-(t,q)$.
 By Lemma \ref{LEM:TECHNICAL}(ii(b)), the orbit of $z_0=N_-(t,q)$ converges to the neutral fixed point $w_N(N_-(t,q))$. Furthermore, there exists an attracting fixed point $0 < w_A<w_N(N_-(t,q))$. This attracting fixed point varies continuously with respect to $z$ around $N_-(t,q)$. We will denote this dependence on $z$ by $w_A \equiv w_A(z)$.  By Lemma \ref{LEM:TECHNICAL}(iii) if $z< z_0$ is sufficiently close to $z_0$, then the orbit of $z$ converges to $w_A(z)$. 
Thus, for any subsequence $\left(R_{z,t,q}^{n_k}(z)\right)_{k=1}^\infty$ of $\left(R_{z,t,q}^{n}(z)\right)_{n=1}^\infty$ and any neighborhood $U$ of $z_0=N_-(t,q)$, we can find a further subsequence $\left(R_{z,t,q}^{n_{k_j}}(z)\right)_{j=1}^\infty$ of $\left(R_{z,t,q}^{n_k}(z)\right)_{k=1}^\infty$ and a point $z_1\in U$ such that for all $j \in \mathbb{N}$
\[\left|R_{z_1,t,q}^{n_{k_j}}(z_1)-R_{z_0,t,q}^{n_{j_k}}(z_0)\right| > \frac{w_N(z_0)-w_A(z_1)}{2}.\] 
Because $w_N(z_0) - w_A(z_0) > 0$ and because $w_A(z)$ depends
continuously on $z$ we can choose $z_1 \in U$ sufficiently close to $z_0$ so that
the right-hand side of the above inequality is positive.

Therefore the family $\left\{R_{z,t,q}^{n}(z)\right\}_{n=1}$ of functions is not normal at $z_0~=~N_-(t,q)$.  Hence, the marked point $a(z)=z$ under $R_{z,t,q}(w)$ is active at $z_0=N_-(t,q)$. Because the active locus is a closed set,  the marked point $a(z)=z$ under $R_{z,t,q}(w)$ is also active at $z_0=N_-(t,q)$ when $\displaystyle t= t_1(q)$ and $\displaystyle t=t_2(q)$.

This completes the proof of Theorem \ref{theorem 2}.    
\end{proof}

 \section{Proof of Theorem B (Antiferromagnetic Case, $q\geq 3$)}\label{SECTION_PROOF_OF_THMB}
Recall that Theorem B concerns the antiferromagnetic regime $J<0$. Since the physical values of temperature are $T>0$, this corresponds to the temperature-like parameter ${\displaystyle t=e^{-\frac{J}{T}}>1}$ and the field-like parameter $\displaystyle z=e^{-\frac{h}{T}}>0$.   By Lemma \ref{LEM:RELATING_ACTIVITY_LOCUS_TO_ACCUMULATION_LOCUS2} the proof of Theorem B reduces to the proof of the following theorem.

\begin{theorem}\label{Theorem 5.0.2}
Suppose $t > 1, q \geq 3$ and let $\mathcal{A}(R_{z,t,q})$ be the active locus for the marked point $a(z) = z$ under $R_{z,t,q}(w)$. Then 
\begin{align}\label{EQN:ACTIVE_LOCUS_THMB}
 \mathcal{A}(R_{z,t,q})\cap (0,\infty) =
 \begin{cases}
		  \emptyset \ & \mbox{if }\ 1 <  t < t_3(q) \\
  \{z_c^{\pm}(t,q)\}\ & \mbox{if }\ t_3(q)\leq t.
	\end{cases}
\end{align}     
Here $t_3(q) > 1$ and $z_c^{\pm}(t,q)$ are given in Equations (\ref{EQN:T3}) and (\ref{EQN:Z_C_PM}) as in the statement of Theorem~B.
%
\end{theorem}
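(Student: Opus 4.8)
The plan is to follow the template of the proof of Theorem~\ref{theorem 2}, but to account for the fundamental change in the real dynamics caused by $t > 1$. First I would record the antiferromagnetic analog of Lemma~\ref{lemma before lemma 4.0.3}: for $t > 1$, $z > 0$, and $q \geq 2$ the derivative computed there, $\frac{\partial R_{z,t,q}}{\partial w}(w) = \frac{2z(1-t)((q-1)t+1)((q-2)tw+t+w)}{((q-1)tw+1)^3}$, is now \emph{negative} for all $w > 0$, so $R_{z,t,q}$ is strictly \emph{decreasing} on $[0,\infty)$ while still mapping $[0,\infty)$ into a compact interval $[c,d] \subset (0,\infty)$. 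Consequently the second iterate $R^2_{z,t,q}$ is strictly increasing and maps $[c,d]$ into itself, which is exactly the hypothesis needed to apply the monotone-map dichotomy of Lemma~\ref{Lemma 4.0.3} to $f = R^2_{z,t,q}$.

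Next I would prove the antiferromagnetic analog of Lemma~\ref{lemma 4.0.5}. Applying Lemma~\ref{Lemma 4.0.3} to $R^2_{z,t,q}$, the orbit of the marked point $a(z) = z$ under $R^2_{z,t,q}$ converges to a fixed point of $R^2_{z,t,q}$; if that fixed point is attracting, then the orbit of $a(z)$ under $R_{z,t,q}$ lands in the basin of an attracting cycle of period one or two, so $z$ is passive by Lemma~\ref{Lemma 3.1.1}. Hence an active $z \in (0,\infty)$ forces either $a(z) = z$ to be a repelling fixed point of $R^2_{z,t,q}$ (equivalently a repelling fixed or period-two point of $R_{z,t,q}$) or $R^2_{z,t,q}$ to possess a positive neutral fixed point (equivalently a neutral fixed or period-two point of $R_{z,t,q}$). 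The repelling-\emph{fixed}-point case is eliminated immediately: by Lemma~\ref{lemma 4.0.6} the only $z$ with $a(z) = z$ fixed is $z = 1$, and there $R'_{1,t,q}(1) = \frac{2-2t}{(q-1)t+1} \in (-1,0)$ for every $q \geq 3$ and $t > 1$, so $z = 1$ is attracting, hence passive. This is precisely where $q \geq 3$ differs from the Ising case $q = 2$, for which $R'_{1,t,q}(1) = -1$ exactly at $t = 3 = t_3(2)$, producing the extra accumulation point $z = 1$ of Theorem~C.

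The heart of the argument is the analog of Lemma~\ref{lemma 4.0.7}, now with the neutral condition $R'_{z,t,q}(x^\ast) = -1$ in place of $+1$: a decreasing map has negative multiplier, so the \emph{period-doubling} bifurcation, not a tangency, governs the boundary of passivity. I would eliminate $x^\ast$ from the system $R_{z,t,q}(x^\ast) = x^\ast$, $R'_{z,t,q}(x^\ast) = -1$ by a resultant (equivalently, detect the collision of the period-two orbit with the fixed point as a multiple root of the cleared period-two factor of $R^2_{z,t,q}(w) - w$). The resulting condition on $z$ is quadratic, with roots exactly $z = z_c^{\pm}(t,q)$ of Equation~(\ref{EQN:Z_C_PM}); its discriminant carries the factor $(q-1)t^2 - 9(q-2)t - 9$, which for $t > 1$ is nonnegative precisely when $t \geq t_3(q)$ (the larger root of that quadratic, matching Equation~(\ref{EQN:T3})). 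Thus for $1 < t < t_3(q)$ there is no positive neutral fixed point, and after ruling out neutral and repelling \emph{period-two} points of $R_{z,t,q}$ on $(0,\infty)$ the active locus misses $(0,\infty)$; for $t \geq t_3(q)$ the only candidates are $z_c^{\pm}(t,q)$, whose positivity I would confirm as in the closing paragraph of Lemma~\ref{lemma 4.0.7}.

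Finally I would establish that both candidates are genuinely active via a technical lemma paralleling Lemma~\ref{LEM:TECHNICAL}: analyzing the orbit of the moving marked point $a(z) = z$, I would show that as $z$ crosses each of $z_c^{\pm}(t,q)$ the marked point switches between the basin of an attracting fixed point (prime period one) and the basin of an attracting period-two orbit (prime period two), so that Lemma~\ref{LEM:2PERIODS_IMPLIES_ACTIVE} forces both $z_c^{\pm}(t,q)$ to be active. Since the active locus is closed, the boundary value $t = t_3(q)$, where $z_c^+ = z_c^-$, is included. The main obstacle is the pair of real-dynamics tasks above: controlling the global behavior of the orbit of the \emph{moving} marked point under the decreasing map near a period-doubling parameter, and certifying that the period-two orbit born at period-doubling does not itself bifurcate on the positive axis, so that no spurious active points arise from higher-period neutral or repelling orbits in $(0,\infty)$.
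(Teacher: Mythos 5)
Your outline does track the paper's strategy at the top level: pass to the increasing second iterate $R^2_{z,t,q}$ on a compact invariant interval (this is Lemma \ref{lemma 8}), locate the parameters where $R^2_{z,t,q}$ has a positive neutral fixed point to produce $t_3(q)$ and $z_c^{\pm}(t,q)$, handle $z=1$ via $R'_{1,t,q}(1)\in(-1,0)$ for $q\geq 3$, get activity at $z_c^{\pm}$ from Lemma \ref{LEM:2PERIODS_IMPLIES_ACTIVE} (attracting period $1$ on one side, attracting period $2$ on the other), and invoke closedness of the active locus at $t=t_3(q)$. However, there are two genuine gaps. First, your detection of neutral behavior is too narrow. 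A neutral fixed point of $R^2_{z,t,q}$ need not arise from a fixed point of $R_{z,t,q}$ with multiplier $-1$: it can also be a genuine period-two orbit $\{x_1,x_2\}$ with $R'_{z,t,q}(x_1)\,R'_{z,t,q}(x_2)=1$, i.e.\ a tangency of $2$-cycles. Your resultant of the system $R_{z,t,q}(x^\ast)=x^\ast$, $R'_{z,t,q}(x^\ast)=-1$ sees only the period-doubling case, so by itself it cannot justify the claim that for $1<t<t_3(q)$ there is \emph{no} positive neutral fixed point of $R^2_{z,t,q}$; and your parenthetical ``equivalently'' is not an equivalence, since multiple roots of the period-two factor of $R^2_{z,t,q}(w)-w$ include $2$-cycle tangencies that the first system misses. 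The paper avoids this entirely (Lemma \ref{tech lemma 1 for theorem B}) by taking the discriminant of the full cleared equation $R^2_{z,t,q}(w)=w$, which factors as $z^8(t-1)^{16}((q-1)t+1)^{16}\,Q_1\,Q_2^3$, and showing $Q_1>0$ on the whole parameter range, so that \emph{every} kind of neutral fixed point of the second iterate is confined to $Q_2=0$, i.e.\ to $z=z_c^{\pm}(t,q)$ with $t\geq t_3(q)$.

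Second, the tasks you defer in your final sentence are not peripheral; they are the core of the proof, and you propose no mechanism for them. One must show (a) that off the curves $z=z_c^{\pm}(t,q)$ there are no repelling or neutral period-two orbits of $R_{z,t,q}$ in $(0,\infty)$ at all --- otherwise Lemma \ref{Lemma 4.0.3} does not force the orbit of $a(z)$ into an attracting basin, and $a(z)$ could itself lie on a repelling $2$-cycle, a possibility Lemma \ref{lemma 4.0.6} does not exclude since it only concerns fixed points of the first iterate --- and (b) that for $z>z_c^{+}(t,q)$ the marked point converges to an attracting \emph{fixed} point while for $z_c^{-}(t,q)<z<z_c^{+}(t,q)$ it converges to an attracting $2$-cycle, which is exactly the input Lemma \ref{LEM:2PERIODS_IMPLIES_ACTIVE} needs. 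The paper settles both with an idea absent from your proposal, a connectedness-plus-sample-point argument (Lemma \ref{tech lemma 2 for theorem B}): extend $q$ to real values in $[3,\infty)$, define the regions $\mathcal{R}_1$ and $\mathcal{R}_2$, prove each is path-connected (this is where claim (b) of Lemma \ref{tech lemma 1 for theorem B}, $z_c^{+}(t,q)<1$ for $q\geq3$, gets used), observe that within a region the real fixed points of $R^2_{z,t,q}$ vary continuously, remain in $(0,\infty)$ by Lemma \ref{lemma 8}(iii), and cannot change type because neutral fixed points are impossible there, and then verify the structure at one explicit parameter per region, namely $(1,3,3)$ and $(1/5,8,3)$. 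This yields that in $\mathcal{R}_1$ the map $R^2_{z,t,q}$ has a unique real fixed point, attracting and fixed by $R_{z,t,q}$, while in $\mathcal{R}_2$ it has exactly three: a repelling fixed point of $R_{z,t,q}$ plus an attracting $2$-cycle --- which proves (a) and (b) simultaneously. Without this lemma or a substitute for it, your plan identifies the candidate set $\{z_c^{\pm}(t,q)\}$ but establishes neither the emptiness statement for $1<t<t_3(q)$, nor passivity off the candidates, nor the activity of the candidates themselves.
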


Therefore, the rest of this section is devoted to the proof of Theorem \ref{Theorem 5.0.2}. The reader can see the plots of $z_c^{\pm}(t,3)$ in Figure (\ref{Fig 1.6}). The main challenge is that $R_{z,t,q}(w)$ is not increasing on $[0,\infty)$ when $t>1$.   However, the second iterate $R^2_{z,t,q}(w) = R_{z,t,q}\left(R_{z,t,q}(w)\right)$ of $R_{z,t,q}(w)$ is increasing on $[0,\infty)$ when $t>1$.   
By Lemma \ref{prop 3.1.1} the active locus for the marked point $a(z)=z$ under $R_{z,t,q}(w)$ is the same as for the second iterate $R_{z,t,q}^2(w)$ so we can therefore work extensively with $R_{z,t,q}^2(w)$ throughout the proof of Theorem \ref{Theorem 5.0.2}.

\begin{lemma} \label{lemma 8}
    Let $z>0$, $t>1$,  and $q\geq 2$. The map
    \begin{align*}
        R^2_{z,t,q}(w) :=R_{z,t,q}\left(R_{z,t,q}(w) \right)
        &= \frac{z \left(z ((q-2) t+1) ((q-2) t w+t+w)^2+t ((q-1) t w+1)^2\right)^2}{\left((q-1) t z ((q-2) t w+t+w)^2+((q-1) t w+1)^2\right)^2}
    \end{align*}
    has the following properties.
      \begin{enumerate}
        \item [(i)] $R^2_{z,t,q}(w)$ is increasing on $[0,\infty)$,
        \item[(ii)] $R^2_{z,t,q}(w)$ has at least one fixed point on $[0,\infty)$, and 
        \item[(iii)] $R^2_{z,t,q}\left([0,\infty)] \right)\subseteq [c,d]$ with $c,d\in \mathbb{R}$ and $0<c<d$. 
    \end{enumerate}
\end{lemma}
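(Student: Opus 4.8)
The plan is to exploit the single structural fact that distinguishes the antiferromagnetic regime: for $t>1$ the first iterate $R_{z,t,q}(w)$ is strictly \emph{decreasing} on $[0,\infty)$, in contrast to the increasing behavior for $t\in(0,1)$ established in Lemma~\ref{lemma before lemma 4.0.3}. The derivative formula computed there,
\[
\frac{\partial R_{z,t,q}}{\partial w}(w) = \frac{2 z(1-t) ((q-1) t+1) ((q-2) t w+t+w)}{((q-1) t w+1)^3},
\]
is an algebraic identity valid for all parameters. For $z>0$, $t>1$, $q\ge 2$ and $w\ge 0$ every factor is positive except $(1-t)<0$, and the denominator never vanishes on $[0,\infty)$ (its only zero is at $w=-1/((q-1)t)<0$); hence $R_{z,t,q}'(w)<0$ strictly. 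From this, property (i) is immediate: since $R^2_{z,t,q}=R_{z,t,q}\circ R_{z,t,q}$ and since $R_{z,t,q}$ maps $[0,\infty)$ into $(0,\infty)$ (each value is a ratio of strictly positive quantities), the chain rule gives $\tfrac{d}{dw}R^2_{z,t,q}(w)=R_{z,t,q}'(R_{z,t,q}(w))\cdot R_{z,t,q}'(w)$, a product of two strictly negative numbers, so $R^2_{z,t,q}$ is strictly increasing on $[0,\infty)$.

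For property (iii) I would track the image interval. Writing $R_{z,t,q}(0)=zt^2$ and $\ell:=\lim_{w\to\infty}R_{z,t,q}(w)=z[(1+(q-2)t)/((q-1)t)]^2$, both finite and strictly positive, the fact that $R_{z,t,q}$ is continuous and strictly decreasing shows its image on $[0,\infty)$ is the interval $(\ell,\,zt^2]$, a bounded subset of $(0,\infty)$ bounded away from $0$. Applying $R_{z,t,q}$ a second time to this interval, again using monotonicity, yields another bounded interval of positive reals. Concretely I would set $c:=R^2_{z,t,q}(0)=R_{z,t,q}(zt^2)$ and $d:=\lim_{w\to\infty}R^2_{z,t,q}(w)=R_{z,t,q}(\ell)$; both are strictly positive because $R_{z,t,q}$ takes strictly positive values, and $c<d$ because $R_{z,t,q}$ is strictly decreasing with $zt^2>\ell$. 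Hence $R^2_{z,t,q}([0,\infty))\subseteq[c,d]$ with $0<c<d$.

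Property (ii) then follows from (iii). Since $R^2_{z,t,q}$ maps $[0,\infty)$—and therefore the compact subinterval $[c,d]$—into $[c,d]$, the continuous function $w\mapsto R^2_{z,t,q}(w)-w$ is nonnegative at $w=c$ and nonpositive at $w=d$, so the Intermediate Value Theorem produces a fixed point in $[c,d]\subset[0,\infty)$.

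I expect the only genuine care to be bookkeeping rather than a real obstacle. One must check that the image of $[0,\infty)$ under the first application of $R_{z,t,q}$ stays in $(0,\infty)$, where the monotonicity computation is valid, so that the composition is honestly smooth and increasing, and one must confirm the strict inequalities $0<c<d$. The needed comparison $\ell<zt^2$, i.e. $(q-1)t^2>1+(q-2)t$ for $t>1$, is a one-line check: the quadratic $(q-1)t^2-(q-2)t-1$ vanishes at $t=1$ with positive derivative $q$ there. The case $q=2$ requires no separate treatment, since the factor $(q-2)tw+t+w$ reduces to $t+w>0$ and remains positive. None of the steps needs the explicit closed form of $R^2_{z,t,q}$; it is used only to confirm that its denominator $\big((q-1)tz((q-2)tw+t+w)^2+((q-1)tw+1)^2\big)^2$ is strictly positive on $[0,\infty)$, ruling out poles.
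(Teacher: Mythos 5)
Your proposal is correct and takes essentially the same route as the paper's proof: both establish that $R_{z,t,q}$ is strictly decreasing on $[0,\infty)$ for $t>1$ using the same derivative formula, deduce (i) by the chain rule applied to the composition, and obtain (ii) and (iii) from the positivity of $R^2_{z,t,q}(0)$, the finite positive limit as $w\to\infty$, and the Intermediate Value Theorem. The only differences are cosmetic — you prove (iii) before (ii) and identify $c=R_{z,t,q}(zt^2)$, $d=R_{z,t,q}(\ell)$ via the first iterate, but these are exactly the paper's $c=R^2_{z,t,q}(0)$ and $d=\lim_{w\to\infty}R^2_{z,t,q}(w)$ — and your extra verification that $\ell<zt^2$ is harmless but unnecessary, since $c<d$ already follows from $R^2_{z,t,q}$ being strictly increasing.
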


\begin{proof}
 The interval $[0,\infty)$ is forward invariant under $R_{z,t,q}(w)$ so, by the chain rule it suffices to prove that $\frac{\partial}{\partial w}R_{z,t,q}(w) < 0$  when $z> 0, \ t > 1,$ and $q\geq 2$ and when $w \geq 0$.  Expressing everything in terms of $z, \ s=t-1 > 0, \ p = q-2 > 0$, and $w>0$ we find:
        \begin{align*}
        \frac{\partial}{\partial w}R_{z,t,q}(w) =  \frac{-2 z s (2 + p + s + p s) (1 + s + w + p w + p s w)}{(1 + w + p w + s w + p s w)^3} < 0.
        \end{align*}
        This proves Claim (i). To prove Claim (ii), notice that
\[R^2_{z,t,q}(0) = \frac{z \left(t^2 z ((q-2) t+1)+t\right)^2}{\left((q-1) t^3 z+1\right)^2}>0,\] and 
\begin{align*}
    \lim_{w\rightarrow \infty}R^2_{z,t,q}(w) = 
    \frac{z \left(t^3 \left(q^3 z+q^2 (1-6 z)+2 q (6 z-1)-8 z+1\right)+3 (q-2)^2 t^2 z+3 (q-2) t z+z\right)^2}{(q-1)^2 t^2 \left((q-2)^2 t^2 z+t (2 q z+q-4 z-1)+z\right)^2}
\end{align*}
which is positive and finite. Hence, $R_{z,t,q}^2(w)$ has at least one positive fixed point by the Intermediate Value Theorem.   Since $R_{z,t,q}^2(w)$ is increasing on $[0,\infty)$ we can prove Claim (iii) by letting $c:=R_{z,t,q}^2(0)$ and $d:= \lim_{w\rightarrow \infty}R^2_{z,t,q}(w)$.     
\end{proof}

\begin{lemma}\label{tech lemma 1 for theorem B}
    For fixed $z > 0$, $t>1$ and $q\geq 2$, if the map $\displaystyle R^2_{z,t,q}(w)$ has a neutral fixed point, then 
    \begin{align*}
        t\geq t_{3}(q) > 1 \qquad \mbox{and} \qquad z=z_c^{\pm}(t,q),
        \end{align*}
        where the formulae for $t_{3}(q)$ and $z=z_c^{\pm}(t,q)$  are given in Equations (\ref{EQN:T3}) and (\ref{EQN:Z_C_PM}).  
        
        Furthermore, for all $t \geq t_3(q)$ we have 
        \begin{itemize}
            \item[(a)] $0< z_c^-(t,q)\leq z_c^+(t,q)$ with the equality $z_c^-(t,q) = z_c^+(t,q)$ only when $t=t_{3}(q)$.
            \item[(b)] If $q \geq 3$ then $z_c^+(t,q) < 1$.
        \end{itemize}
\end{lemma}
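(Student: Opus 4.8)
The plan is to reproduce the discriminant argument of Lemma~\ref{lemma 4.0.7}, but applied to the second iterate $R^2_{z,t,q}(w)$ rather than to $R_{z,t,q}(w)$ itself, since for $t>1$ it is $R^2_{z,t,q}$ (and not $R_{z,t,q}$) that is increasing on $[0,\infty)$ by Lemma~\ref{lemma 8}. Because $R^2_{z,t,q}$ is increasing, its multiplier at a fixed point is nonnegative, so with the convention fixed just before Lemma~\ref{Lemma 4.0.3} a neutral fixed point $w_N\in(0,\infty)$ is one at which $(R^2_{z,t,q})'(w_N)=1$; equivalently $R^2_{z,t,q}(w)-w$ has a double root at $w_N$. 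First I would clear denominators to replace $R^2_{z,t,q}(w)-w$ by a polynomial $\tilde P_{z,t,q}(w)$ of degree five in $w$; since the denominator never vanishes for $w,z,t>0$ this loses no positive roots, and the existence of a positive neutral fixed point becomes equivalent to $\tilde P_{z,t,q}$ having a positive double root.

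Next I would compute the discriminant of $\tilde P_{z,t,q}(w)$ with respect to $w$ and factor it, just as the cubic discriminant was factored in Lemma~\ref{lemma 4.0.7}. Under the standing hypotheses $t>1$, $z>0$, $q\ge 2$, the nuisance factors — powers of $t$, $t-1$, $(q-1)t+1$, $(q-2)t+1$ and $z$, together with the discriminant of the fixed-point equation $R_{z,t,q}(w)=w$ — either do not vanish or else force the repeated root to lie at a nonpositive or nonreal value of $w$. Discarding these leaves a single quadratic $\tilde Q(z,t,q)=0$ in the variable $z$ whose two roots are precisely $z_c^{\pm}(t,q)$. The discriminant of $\tilde Q$ with respect to $z$ equals, up to a positive factor, the radicand in (\ref{EQN:Z_C_PM}), namely $(t-1)^3\big((q-1)t+1\big)^3\big((q-1)t^2-9(q-2)t-9\big)$. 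For $t>1$ the first two factors are positive, so a real $z$ can satisfy $\tilde Q(z,t,q)=0$ only when the quadratic-in-$t$ factor $(q-1)t^2-9(q-2)t-9$ is nonnegative. This factor equals $8(1-q)<0$ at $t=1$ and opens upward, so $t=1$ lies strictly between its two roots and the factor is nonnegative for $t>1$ exactly when $t\ge t_3(q)$, with $t_3(q)>1$ its larger root (matching (\ref{EQN:T3})). This proves the main implication: a neutral fixed point forces $t\ge t_3(q)>1$ and $z=z_c^{\pm}(t,q)$.

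I expect the isolation and factoring of this degree-five discriminant to be the principal obstacle, since $\tilde P_{z,t,q}$ is considerably more complicated than the cubic of Lemma~\ref{lemma 4.0.7} and some care is needed to confirm that the nuisance factors contribute no \emph{positive} real double root, so that $\tilde Q=0$ really accounts for every positive neutral fixed point. The geometry behind this is a period-doubling, at which the unique fixed point of the decreasing map $R_{z,t,q}$ acquires multiplier $-1$; this is already transparent at $q=2$, where $R'_{1,t,2}(1)=\tfrac{2-2t}{t+1}=-1$ precisely at $t=3=t_3(2)$, which is exactly the source of the additional accumulation point $z=1$ recorded in Theorem~C.

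Finally I would read off the two quantitative statements from $\tilde Q$. For (a), the two roots coincide exactly when the discriminant-in-$z$ vanishes, that is precisely at $t=t_3(q)$, so $z_c^-(t,q)=z_c^+(t,q)$ there and $z_c^-(t,q)<z_c^+(t,q)$ for $t>t_3(q)$; positivity of both roots follows from Vieta's formulae, by checking that the sum $z_c^-+z_c^+$ and product $z_c^-z_c^+$ are positive throughout $t\ge t_3(q)$. That the product cannot reach $0$ (so that the roots do not change sign) is corroborated exactly as in Lemma~\ref{lemma 4.0.7}: substituting $z=0$ into $\tilde P_{z,t,q}$ gives $-w\big((q-1)tw+1\big)^4$, which has no positive root, so $z=0$ supports no positive neutral fixed point. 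For (b), I would evaluate $\tilde Q(1,t,q)$ and compare the location of the vertex of $\tilde Q$ to $z=1$; for $q\ge 3$ and $t>1$ these sign computations place $z=1$ strictly to the right of both roots, giving $z_c^+(t,q)<1$. The hypothesis $q\ge 3$ should enter precisely at this last computation, consistently with the degeneracy at $q=2$ noted above.
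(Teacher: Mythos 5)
Your plan follows the same route as the paper's own proof: clear denominators in $R^2_{z,t,q}(w)=w$ to get a quintic in $w$, take its discriminant, isolate the quadratic-in-$z$ factor whose roots are $z_c^{\pm}(t,q)$, read $t_3(q)$ off the radicand, and settle (a) and (b) by sign analysis of that quadratic. Indeed your radicand computation (the factor $(q-1)t^2-9(q-2)t-9$, its value $8(1-q)$ at $t=1$, and $t_3(q)$ as its larger root), your substitution $z=0$ giving $-w\bigl((q-1)tw+1\bigr)^4$, and your claim (b) strategy all match the paper: for (a) the paper evaluates the double root at $t=t_3(q)$ and runs an Intermediate Value Theorem argument using $Q_2(0,t,q)=4(q-1)t>0$, and for (b) it shows the quadratic is strictly positive for all $z\geq 1$, $t\geq1$, $q\geq 3$ by expanding in $z=1+r$, $t=1+s$, $q=3+p$ and checking all coefficients are positive — your ``evaluate $\tilde Q(1,t,q)$ and locate the vertex'' step in different clothing.

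The genuine gap is in how you dispose of the other nontrivial discriminant factor. In the paper's notation (Equation (\ref{EQN:DISCRIM})) the discriminant equals $z^8(t-1)^{16}\bigl((q-1)t+1\bigr)^{16}Q_1Q_2^3$, where $Q_1$ is, as you anticipated, the factor accounting for multiple fixed points of the \emph{first} iterate. You propose that this factor ``either does not vanish or else forces the repeated root to lie at a nonpositive or nonreal value of $w$,'' and that this lets you discard it. The second branch is a non sequitur: the discriminant is a single scalar, so if $Q_1$ vanished at some $(z,t,q)$ with $z>0$, $t>1$ because the first-iterate cubic has a nonreal or negative multiple root, then the discriminant of the quintic would vanish there regardless of $\tilde Q$, and the existence of a positive neutral fixed point of $R^2_{z,t,q}$ at that parameter would no longer imply $\tilde Q(z,t,q)=0$; the chain ``positive double root $\Rightarrow$ discriminant zero $\Rightarrow$ $\tilde Q=0$'' breaks exactly there. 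Your proposed repair (``confirm that the nuisance factors contribute no positive real double root'') has the same defect — it constrains where the zeros of $Q_1$ would come from, not whether $Q_1$ vanishes, and only the latter makes the discriminant argument conclusive. What is actually needed, and what the paper proves, is that $Q_1$ has \emph{no} zeros at all on $\{z>0,\ t>1,\ q\geq 2\}$: Equation (\ref{EQN:BIG_Q1_EXPRESSION}) rewrites $Q_1$ in the variables $s=t-1$, $p=q-2$ and observes that every coefficient is positive. This strict positivity is the one ingredient missing from your plan. (An alternative repair, closer to your dynamical remark that $R'<0$ on $(0,\infty)$: write the quintic as $P^{(1)}S$ with $P^{(1)}$ the first-iterate cubic and $S$ quadratic, use ${\rm disc}(P^{(1)}S)={\rm disc}(P^{(1)})\,{\rm disc}(S)\,{\rm Res}(P^{(1)},S)^2$, note that a positive double root cannot be a double root of $P^{(1)}$ since that would force $R'=1>0$ there, and then verify that ${\rm disc}(S)$ and ${\rm Res}(P^{(1)},S)$ are each, up to nonvanishing factors, powers of $\tilde Q$ — but that verification is additional work your plan does not contain.)
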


\begin{proof}
The proof is quite similar to the proof of Lemma \ref{lemma 4.0.7} given in the previous section.   We express the condition that
$R^2_{z,t,q}(w)$ has a neutral fixed point as the existence of a multiple root
of a polynomial
$P_{z,t,q}(w)$ which is obtained from the equation $R^2_{z,t,q}(w) = w$ by clearing denominators.
The discriminant of $P_{z,t,q}(w)$ with respect to $w$ equals
\begin{align}\label{EQN:DISCRIM}
    z^8 (t-1)^{16} ((q-1)t+1)^{16}\, Q_1(z,t,q) Q_2(z,t,q)^3,
    \end{align}
where 
{\footnotesize
\begin{align*}
    Q_1(z,t,q) = z [t (27 (q-1)^2 t^3-18 (q-2) (q-1) t^2-q (q+14) t-2 q+14 t+4)-1]
    +4 t z^2 ((q-2) t+1)^3+4 (q-1) t, 
\end{align*}
}
and
{\footnotesize
\begin{align}\label{EQN:Q2}
    Q_2(z,t,q) = z (-(q-1)^2 t^4+6 (q-2) (q-1) t^3+3 ((q-2) q+2) t^2+6 (q-2) t+3)
    +4 t z^2 ((q-2) t+1)^3+4 (q-1) t.
\end{align}
}
Therefore, we find that $R_{z,t,q}$ has a neutral fixed point
if and only if one of the factors in Equation~(\ref{EQN:DISCRIM}) vanishes. 
 The first three factors clearly don't vanish for $z > 0, t > 1$, and $q \geq 2$.   We claim that $Q_1(z,t,q)$ also does not vanish for this range
 of the parameters $z,t,q$.  Expressing everything in terms of $z, s=t-1 > 0, p = q-2 \geq 0$, and $w>0$ we find:
 \begin{align}\label{EQN:BIG_Q1_EXPRESSION}
 Q_1 &= 4 p^3 s^4 z^2+16 p^3 s^3 z^2+24 p^3 s^2 z^2+16 p^3 s z^2+4 p^3
   z^2+27 p^2 s^4 z+12 p^2 s^3 z^2+90 p^2 s^3 z  \nonumber \\
   & +36 p^2 s^2 
   z^2 
   +107 p^2 s^2 z+36 p^2 s z^2+52 p^2 s z+12 p^2 z^2+8 p^2
   z+54 p s^4 z+198 p s^3 z  \nonumber \\
   &+12 p s^2 z^2+252 p s^2 z+24 p s
   z^2 
   +124 p s z+4 p s+12 p z^2+16 p z+4 p+27 s^4 z \\
   &+108 s^3
   z+144 s^2 z+4 s z^2+72 s z+4 s+4 z^2+8 z+4 > 0. \nonumber
 \end{align}
 Note that $Q_2(z,t,q)$ is quadratic in $z$ with the coefficients depending on $t$ and $q$.   Solving $Q_2(z,t,q)~=~0$ for $z$ in terms of $t$ and $q$ we find precisely the values $z_c^\pm(t,q)$ given in Formula (\ref{EQN:Z_C_PM}).   Note that the term under the square root in formula for $z_c^\pm(t,q)$ vanishes at $t = t_3(q)$ and that it is negative for $1 < t < t_3(q)$ and strictly positive for $t > t_3(q)$.   In particular, when $t > t_3(q)$ we have $z_c^-(t,q) < z_c^+(t,q)$.

We now prove the additional claim (a) by checking that for all $t \geq t_3(q)$ we have $0 < z_c^-(t,q)$.
 When $t=t_3(q)$ we have
 \begin{align}\label{EQN:Z_C_PM_TEQUALS_T3}
 z_c^+(t,q) = z_c^-(t,q) = \frac{-27 q^3+153 q^2-297 q+198+\left(9 q^2-35 q+35\right) \sqrt{9 q^2-32 q+32}}{2 (q-1)},
 \end{align}
 which one can check is positive for all $q \geq 2$.   It is clear from the formulae for $z_c^\pm(t,q)$ that for fixed $q \geq 2$ they are continuous functions of $t \geq t_3(q)$.   Suppose for contradiction that there were some
 $t_4 > t_3(q)$ with $z_c^-(t_4,q)~<~0$.  In this case the Intermediate Value Theorem would give that
 there exists $t_5 \in [t_3(q),t_4]$ with $z_c^-(t_5,q) = 0$.   We obtain a contradiction by noting that we would then have
 $Q_2\left(z_c^-(t_5,q),t_5,q\right) = Q_2(0,t_5,q) = 0$ while a direct calculation gives that
 $Q_2(0,t,q) = 4(q-1)t > 0$ for all $q \geq 2$ and $t > 0$.

 To prove additional claim (b) it suffices to show that $Q_1(z,t,q) > 0$ for all $z \geq 1$, $t \geq 1$, and $q \geq 3$.  If one computes $Q_1(1+r,1+s,3+p)$ one finds a polynomial whose constant term is $72$ and all of whose monomials appear with a ``plus" sign.   (We have omitted the many line expression which is similar to Equation (\ref{EQN:BIG_Q1_EXPRESSION}) above, but the reader can check it in a computer algebra software package.)   
\end{proof}

In the following statement, we extend the range of allowable values of $q$ from $\mathbb{N}_{\geq 3}$ to the interval $[3,\infty)$ so that
methods involving continuity can be used in the proof.   

\begin{lemma}\label{tech lemma 2 for theorem B}
    Define sets $\mathcal{R}_1$ and $\mathcal{R}_2$ as follows:
    \begin{align*}
        \mathcal{R}_1&:= \{(z,t,q)\in \mathbb{R}^3: t > 1, \ q\geq 3, \ \mbox{and if $t \geq t_3(q)$ then} \ 0 < z<z_c^-(t,q)\ \text{or}\ z>z_c^+(t,q)\}\\        
        \mathcal{R}_2&:= \{(z,t,q)\in \mathbb{R}^3: t \geq t_3(q), \ q\geq 3, \ z_c^-(t,q)<z<z_c^+(t,q)\}.
    \end{align*}
Then,
\begin{enumerate}
    \item[(i)] For all $(z,t,q)\in \mathcal{R}_1$ the second iterate of the renormalization map, $R^2_{z,t,q}(w)$, has only one real fixed point. This fixed point is in $(0,\infty)$ and it is an attracting fixed point of $R_{z,t,q}(w)$.

    \item[(ii)] For all $(z,t,q)\in \mathcal{R}_2$ the second iterate of the renormalization map, $R^2_{z,t,q}(w)$, has three real fixed points each of which is in $(0,\infty)$. One of these fixed points is a repelling fixed point of $R_{z,t,q}(w)$. The other two fixed points correspond to an attracting period 2-cycle of $R_{z,t,q}(w)$.
\end{enumerate}
\end{lemma}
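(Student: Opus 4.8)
The plan is to let the one-dimensional dynamics do almost all of the work, reducing both claims to a single count of real fixed points of $R^2_{z,t,q}$, and then to pin that count down using the discriminant factorization already recorded in Lemma~\ref{tech lemma 1 for theorem B}. First I would record the skeleton. For $z>0,\ t>1,\ q\ge 2$ the map $R_{z,t,q}$ is strictly decreasing on $[0,\infty)$ (this is exactly the inequality $\partial_w R_{z,t,q}<0$ established inside the proof of Lemma~\ref{lemma 8}), with $R_{z,t,q}(0)=zt^2>0$ and a finite positive horizontal asymptote, so $R_{z,t,q}(w)-w$ is strictly decreasing and positive at $w=0$ and $R_{z,t,q}$ has exactly one fixed point $w_*\in(0,\infty)$. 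Moreover every real fixed point, and every real point of a $2$-cycle, is automatically positive, since $R_{z,t,q}(w)=z(\cdots)^2\ge 0$ forces any real solution of the fixed-point or $2$-cycle equations to be nonnegative, while $w=0$ is never fixed because $R^2_{z,t,q}(0)>0$ by Lemma~\ref{lemma 8}. As a degree-two rational map, $R_{z,t,q}$ has three fixed points and $R^2_{z,t,q}$ has five; the three fixed points of $R_{z,t,q}$ are $w_*$ together with a complex-conjugate pair (there is only one real one), and the remaining two fixed points of $R^2_{z,t,q}$ form its unique $2$-cycle. Hence the number $N$ of real fixed points of $R^2_{z,t,q}$ satisfies $N=1+2\cdot\mathbf 1[\text{the }2\text{-cycle is real}]\in\{1,3\}$, and all of them lie in $(0,\infty)$.

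Granting the count, the stability classification is automatic. By Lemma~\ref{lemma 8}, $R^2_{z,t,q}$ is increasing on $[0,\infty)$ with $R^2_{z,t,q}(0)>0$ (graph above the diagonal at $0$) and a finite limit at $+\infty$ (graph eventually below the diagonal); so its fixed points are transversal crossings whose stabilities alternate, beginning and ending attracting. If $N=1$ the single fixed point is $w_*$, attracting for $R^2_{z,t,q}$, whence $|R'_{z,t,q}(w_*)|<1$ and $w_*$ is attracting for $R_{z,t,q}$, giving~(i). If $N=3$, label the fixed points $w_1<w_2<w_3$: since $R_{z,t,q}$ is decreasing it reverses their order, so $w_2=w_*$ is the fixed point of $R_{z,t,q}$ and $\{w_1,w_3\}$ is the $2$-cycle, while alternation makes $w_1,w_3$ attracting and $w_2$ repelling for $R^2_{z,t,q}$; translated back this says $w_*$ is repelling for $R_{z,t,q}$ and $\{w_1,w_3\}$ is an attracting $2$-cycle, which is~(ii). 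Thus everything reduces to showing $N=1$ on $\mathcal R_1$ and $N=3$ on $\mathcal R_2$.

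To compute $N$ I would view it as the number of real roots of the degree-five polynomial $P_{z,t,q}(w)$ obtained by clearing denominators in $R^2_{z,t,q}(w)=w$; its leading coefficient is positive and its degree is constant on the parameter region, so $N$ is locally constant off the zero set of $\mathrm{disc}_w P$. By Lemma~\ref{tech lemma 1 for theorem B} this discriminant equals $z^8(t-1)^{16}((q-1)t+1)^{16}\,Q_1(z,t,q)\,Q_2(z,t,q)^3$ with $Q_1>0$ throughout, so $N$ can change only across the surfaces $Q_2=0$, i.e. across $z=z_c^{\pm}(t,q)$. Both $\mathcal R_1$ and $\mathcal R_2$ are connected: in the $(t,z)$-plane $\mathcal R_2$ is the open tongue $z_c^-<z<z_c^+,\ t>t_3(q)$, and $\mathcal R_1$ is its complement, which is connected by passing around the tip at $t=t_3(q)$ where the two branches meet (and the tongue stays below $z_c^+<1$ by Lemma~\ref{tech lemma 1 for theorem B}(b)). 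Hence $N$ is constant on each. On $\mathcal R_1$ I would evaluate at small $z$, where $R^2_{z,t,q}$ is a strong contraction into a neighborhood of $0^+$ with a unique fixed point, giving $N=1$. For $\mathcal R_2$, note that for $t>t_3(q)$ the quadratic $Q_2(\cdot,t,q)$ has the two simple roots $z_c^-<z_c^+$, so $Q_2$ and hence $Q_2^3$ change sign as $z$ crosses $z_c^-$; since the other factors are strictly positive, $\mathrm{disc}_w P$ changes sign there. A sign change of the discriminant of a real polynomial of constant degree forces the number of real roots to change by an odd multiple of two; as $N\le 3$, the change is exactly two, so $N$ jumps from $1$ on the $\mathcal R_1$ side to $3$ on $\mathcal R_2$.

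I expect the main obstacle to be the bookkeeping that makes the discriminant step airtight: confirming $Q_1>0$ and the positivity of the remaining prefactors (the content of the all-positive-monomial expansions in Lemma~\ref{tech lemma 1 for theorem B}), checking that $\deg_w P$ and the sign of its leading coefficient are constant so that no real root escapes to infinity across $z_c^{\pm}$, and verifying the connectivity of $\mathcal R_1$ from the geometry of the curves $z_c^{\pm}(t,q)$. The purely dynamical ingredients---monotonicity of $R^2_{z,t,q}$, order-reversal by $R_{z,t,q}$, and the ``$\deg+1$'' count of fixed points---are routine once this is in place. As a cross-check, and as an alternative to the discriminant-parity step, one can either verify $N=3$ at a single explicit base point of $\mathcal R_2$, or identify the crossing of $z_c^{\pm}$ as the period-doubling locus $R'_{z,t,q}(w_*)=-1$, which by Lemma~\ref{tech lemma 1 for theorem B} is contained in $\{z=z_c^{\pm}\}$.
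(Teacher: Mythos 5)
Your proposal is correct, but it reaches the conclusion by a genuinely different route than the paper. The paper's proof has two ingredients: (1) a hands-on verification that $\mathcal{R}_1$ and $\mathcal{R}_2$ are path connected (including explicit paths in the $q$-direction), and (2) the observation that, since neutral fixed points of $R^2_{z,t,q}$ occur only on the excluded surfaces $z=z_c^{\pm}(t,q)$, the number and nature of the real fixed points cannot change within a connected region --- so everything reduces to explicit computations at the two base points $(1,3,3)\in\mathcal{R}_1$ and $(1/5,8,3)\in\mathcal{R}_2$. You instead extract the count structurally: since $R_{z,t,q}$ is order-reversing on $[0,\infty)$ it has a unique real fixed point $w_*$ (necessarily simple, as $R'(w_*)<0$), the degree count $3+2=5$ shows $R^2$ carries exactly one $2$-cycle, so $N\in\{1,3\}$ with the middle point forced to be $w_*$ by order reversal; you then read off $N$ on each region from the sign of the discriminant $z^8(t-1)^{16}((q-1)t+1)^{16}Q_1Q_2^3$, using $Q_1>0$ and the sign of the quadratic $Q_2$ together with the parity rule $\mathrm{sign}(\mathrm{disc})=(-1)^{c}$ ($c$ = number of conjugate pairs). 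Indeed, pushed to its sharpest form your argument is pointwise: on $\mathcal{R}_2$ one has $Q_2<0$, hence $\mathrm{disc}<0$, hence $c$ odd, hence $c=1$ and $N=3$ at every point, with no connectivity or base point needed at all; on $\mathcal{R}_1$, $\mathrm{disc}>0$ gives $N\in\{1,5\}$, and your structural bound $N\le 3$ finishes it. This buys elegance and eliminates the computer-algebra verifications; what the paper's route buys is that it never needs the dynatomic-type factorization of the degree-$5$ fixed-point polynomial nor the discriminant-parity fact, only continuity. A few soft spots in your writeup, all repairable: the leading coefficient of the cleared fixed-point polynomial is negative, not positive (immaterial, since only non-vanishing and constancy of degree matter); the ``strong contraction'' justification of $N=1$ at small $z$ is hand-waved (it can be replaced by the root-perturbation limit $z\to 0^+$, by the pointwise parity argument above, or by the paper's evaluation at $(1,3,3)$); transversality of the crossings is asserted in your second paragraph but is only justified later by $\mathrm{disc}\neq 0$ on the regions; and the factorization of the discriminant with $Q_1>0$ is established in the \emph{proof} of Lemma \ref{tech lemma 1 for theorem B}, not in its statement, so your proof must cite that computation rather than the lemma as stated.
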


\begin{proof}
We first prove that $\mathcal{R}_1$ and $\mathcal{R}_2$ are each path connected.
We begin with $\mathcal{R}_1$.   One can easily check from the formula (\ref{EQN:T3}) that for $q \geq 3$ one has $t_3(q) \geq 3$.  Therefore, we have the containment
\begin{align*}
B:=\{(z,t,q) \in \mathbb{R}^3 \, : \, z > 0, 1 < t < 3, \mbox{ and } q \geq 3\} \subset \mathcal{R}_1.
\end{align*}
Note that $B$ is convex and hence path connected.   Moreover, for each
fixed choice of real $q = q_0 \geq 3$ the slice
\begin{align*}
\mathcal{R}_1 \cap \{q = q_0\} &:= \{(z,t,q_0)\in \mathbb{R}^3: t > 1 \ \mbox{and if $t \geq t_3(q_0)$ then} \ 0 < z<z_c^-(t,q_0)\ \text{or}\ z>z_c^+(t,q_0)\}
\end{align*}
has non-trivial intersection with $B$.   Therefore it suffices to show
 for each real $q_0 \geq 3$ that the slice $\mathcal{R}_1 \cap \{q = q_0\}$ is path connected.
 We will describe an explicit path within $\mathcal{R}_1 \cap \{q = q_0\}$ from any point $(z_0,t_0,q_0) \in \mathcal{R}_1 \cap \{q = q_0\}$ to the point $(1,2,q_0) \in B$.   There are three possibilities:

\vspace{0.1in}
\noindent
{\bf Option 1:} $1 < t_0 < t_3(q_0)$.   The straight line path between $(z_0,t_0,q_0)$ and $(1,2,q_0)$ remains in $\mathcal{R}_1 \cap \{q = q_0\}$.

\vspace{0.1in}
\noindent
{\bf Option 2:} $t_0 \geq t_3(q_0)$ and $z_0 > z_c^+(t_0,q_0)$.   Since $q_0 \geq 3$, Lemma \ref{tech lemma 1 for theorem B} gives that $z_c^+(t_0,q_0) < 1$.   We can therefore form a path from $(z_0,t_0,q_0)$ to $(1,2,q_0)$ by first connecting $(z_0,t_0,q_0)$ to $(1,t_0,q_0)$ using a straight-line path (varying only $z$) and then connecting from $(1,t_0,q_0)$ to $(1,2,q_0)$ by a straight line path (varying only $t$).  Each of these line segments is in $\mathcal{R}_1 \cap \{q = q_0\}$.

\vspace{0.1in}
\noindent
{\bf Option 3:} $t_0 \geq t_3(q_0)$ and $0 < z_0 < z_c^-(t_0,q_0)$.   By Lemma \ref{tech lemma 1 for theorem B} we have that $z_c^-(t,q_0) > 0$ for all $t \geq t_3(q_0)$.  Moreover, it is clear from the expression for $z_c^-(t,q_0)$ that it is a continuous function of $t \geq t_3(q)$.   Therefore, there exists a positive $M > 0$ such that for all $t \in [t_3(q_0),t_0]$ we have $z_c^-(t,q_0) \geq M$.   We can therefore form a path within $\mathcal{R}_1 \cap \{q = q_0\}$ from $(z_0,t_0,q_0)$ to $(1,2,q_0)$ by a concatenation of three straight line paths.   One first
connects from $(z_0,t_0,q_0)$ to $(M/2,t_0,q_0)$ by varying only $z$.  One then connects from $(M/2,t_0,q_0)$ to $(M/2,2,q_0)$ by varying only $t$.   One finally connects from $(M/2,2,q_0)$ to $(1,2,q_0)$ by varying only $z$.

\vspace{0.1in}
We have therefore proved that $\mathcal{R}_1$ is path connected.   We will now prove that $\mathcal{R}_2$ is path-connected.   For each $t \geq t_3(q)$ let 
\begin{align*}
    z_c^{\rm mid}(t,q) = \frac{1}{2}(z_c^-(t,q) + z_c^+(t,q)).
\end{align*}
For each $t > t_3(q)$ one has $z_c^{\rm mid}(t,q) \in \mathcal{R}_2$.   In particular, given any $3 \leq q_0 < q_1$
we can form a path in $\mathcal{R}_2$ connecting the slice $\mathcal{R}_2 \cap \{q = q_0\}$ to the slice
$\mathcal{R}_2 \cap \{q = q_1\}$ using the mapping $q \mapsto (z_c^{\rm mid}(t_3(q)+1,q),t_3(q)+1,q)$ where $q$ varies over the interval $[q_0,q_1]$.
Therefore, it suffices to prove for each $q = q_0 \geq 3$ that $\mathcal{R}_2 \cap \{q = q_0\}$ is path-connected.   Given any $(z_0,t_0,q_0) \in \mathcal{R}_2 \cap \{q = q_0\}$ we can connect it using a path
within $\mathcal{R}_2 \cap \{q = q_0\}$ to $(z_c^{\rm mid}(t_3(q_0)+1),t_3(q_0)+1,q_0)$ as follows.  One first varies just the $z$-coordinate to connect from $(z_0,t_0,q_0)$ to $(z_c^{\rm mid}(t_0,q_0),t_0,q_0)$.   One then connects from $(z_c^{\rm mid}(t_0,q_0),t_0,q_0)$ to $(z_c^{\rm mid}(t_3(q_0)+1),t_3(q_0)+1,q_0)$ by varying $t$ between $t_0$ and $t_3(q_0)+1$ and letting $z = z_c^m(t,q_0)$ as one does so.

\vspace{0.1in}
We now prove Statements (i) and (ii).   The regions $\mathcal{R}_1$ and $\mathcal{R}_2$ were chosen so that the second iterate of the renormalization mapping $R^2_{z,t,q}(w)$ has only attracting or repelling (but not neutral) real fixed points for all parameters $(z,t,q) \in \mathcal{R}_1 \cap \mathcal{R}_2$.  Therefore, as one varies the parameters $(z,t,q)$ within a given region $\mathcal{R}_1$ or $\mathcal{R}_2$ these real fixed points each vary continuously and their nature (attracting or repelling) remains unchanged.   (This also implies that the same holds for fixed points of the first iterate $R_{z,t,q}(w)$.)  Moreover, Lemma \ref{lemma 8}(iii) gives for all $(z,t,q) \in \mathcal{R}_1 \cup \mathcal{R}_2$ that $R_{z,t,q}^2\left((0,\infty)\right)$ is contained in a compact subset of $(0,\infty)$.  Therefore, as $(z,t,q)$ varies over
$\mathcal{R}_1$ or over $\mathcal{R}_2$ any fixed point of $R_{z,t,q}^2(w)$ that lies in $(0,\infty)$ of one choice of parameters $(z,t,q)$ moves
to a fixed point of $R_{z,t,q}^2(w)$ that again lies in $(0,\infty)$ for the other choice of parameters.
Therefore it suffices to check the assertions of each of the claims (ii) and (iii) for a single choice of parameters in each region.

\vspace{0.1in}
\noindent
{\bf Region $\mathcal{R}_1$:}
Note that $(z,t,q) = (1,3,3) \in \mathcal{R}_1$.
One finds that 
\begin{align*}
    R_{1,3,2}(w)  = \frac{(4 w+3)^2}{(6 w+1)^2} \qquad \mbox{and} \qquad
    R_{1,3,2}^2(w)  =\frac{\left(172 w^2+132 w+39\right)^2}{\left(132 w^2+156 w+55\right)^2}.
\end{align*}
By solving $R^2_{1,3,3}(w)=w$ for $w$ one finds that the only real solution is $w=1$
and that it is also a fixed point of $R_{1,3,3}(w)$.   Moreover, $|R_{1,3,2}'(w)| = \frac{4}{7} < 1$, so the only real fixed point of $R^2_{1,3,3}(w)$ is actually an attracting fixed point of the
first iterate $R_{1,3,3}(w)$.

\vspace{0.1in}
\noindent
{\bf Region $\mathcal{R}_2$:}  
Note that $(z,t,q) = (1/5,8,3) \in R_2$ because
\begin{align*}
z_c^-(8,3) = \frac{9229-119 \sqrt{5593}}{46656} \approx 0.0071 \qquad \mbox{and} \qquad z_c^+(8,3) = \frac{9229-119 \sqrt{5593}}{46656} \approx 0.3886.
\end{align*} We have
\begin{align*}
    R_{1/5,8,3}(w) = \frac{(9 w+8)^2}{5 (16 w+1)^2} \ \ \ \text{and} \qquad 
     R_{1/5,8,3}^2(w) = \frac{\left(1567 w^2+368 w+88\right)^2}{5 \left(368 w^2+352 w+147\right)^2}.
\end{align*} 
By solving $ R_{1/5,8,3}^2(w)=w$ for $w$ one finds the following three real roots: 
\begin{align*}
    w_1 &:= \frac{939-17 \sqrt{2165}}{1058} \approx 0.1399, \qquad
    w_2 :=  \frac{939+17 \sqrt{2165}}{1058} \approx 1.6352, \quad \mbox{and} \\
    w_3 &\approx 0.4412, \quad \mbox{which is the unique real root of $p(x) = -64-139x +79x^2+1280 x^3$.}
\end{align*}
The only real root of $R_{1/5,8,3}(w) =w$ is $w=w_3$ and one finds that $ \left|(R_{1/5,8,3})'(w_3)\right| \approx 1.088>1$. Thus, $w_3$ is a repelling fixed point of $R_{1/5,8,3}(w)$.
Furthermore, one can check that $R_{1/5,8,3}(w_1) = w_2$, $R_{1/5,8,3}(w_2) = w_1$, and
\begin{align*}
\left| (R_{1/5,8,3}^2)'(w_1) \right| = \left| (R_{1/5,8,3}^2)'(w_2) \right| \approx 0.7003 <1.
\end{align*}
Thus, $w_1$ and $w_2$ correspond to an attracting period $2$-cycle of $R_{1/5,8,3}(w)$.
\end{proof}

\begin{proof}[Proof of Theorem \ref{Theorem 5.0.2}]
We will first prove that if $(z,t,q) \in \mathcal{R}_1 \cup \mathcal{R}_2$ then the marked point $a(z) = z$ is passive under
iteration of $R_{z,t,q}(w)$.   By Lemma \ref{prop 3.1.1} it suffices to prove that $a(z)$ is passive under the second iterate $R^2_{z,t,q}(w)$.   Moreover, by Lemma \ref{Lemma 3.1.1}, it suffices to prove that for such choices of parameters we have that $a(z)$ is in the basin of attraction of an attracting fixed point for~$R^2_{z,t,q}(w)$.

Suppose $(z,t,q) \in \mathcal{R}_1 \cup \mathcal{R}_2$.
By Lemma~\ref{lemma 8}, we see that $R^2_{z,t,q}(w)$ satisfies the hypotheses of Lemma \ref{Lemma 4.0.3}.  Lemma \ref{tech lemma 2 for theorem B} implies that $R^2_{z,t,q}(w)$ has no neutral fixed points and also that if $R^2_{z,t,q}(w)$ has a repelling fixed point then it is actually a repelling fixed point of the first iterate $R_{z,t,q}(w)$.   Meanwhile, Lemma \ref{lemma 4.0.6} gives that if $a(z)$ is a fixed point for $R_{z,t,q}(w)$ then $z = 1$.   As in the proof of Theorem \ref{theorem 2} we compute that  
\begin{align*}
R'_{1,t,q}(1)=\frac{2-2 t}{(q-1) t+1}.
\end{align*}
One can then check that for $t > 1$ and $q \geq 3$ we have $-1 < R'_{1,t,q}(1) < 0$.  Therefore, $a(z) = z$ cannot be a repelling fixed point of $R_{z,t,q}(w)$.   By Lemma \ref{Lemma 4.0.3} we therefore have that if $(z,t,q) \in \mathcal{R}_1 \cup \mathcal{R}_2$ then the marked point $a(z) = z$  is in the basin of attraction of an attracting fixed point of $R^2_{z,t,q}(w)$, as desired.   

It remains to show that if $t \geq t_3(q)$ and $z=z_c^\pm(t,q)$ then the marked point $a(z) = z$ is active under $R_{z,t,q}(w)$.   Since the active locus is closed, we can suppose that $t > t_3(q)$.     Then, for any $z > z^+_c(t,q)$ we have that $(z,t,q) \in \mathcal{R}_1$ and Lemma \ref{tech lemma 2 for theorem B} and the discussion in the previous paragraph gives that $a(z)$ is in the basin of attraction of an attracting fixed point for $R_{z,t,q}(w)$.  On the other hand for any $z_c^-(t,q) < z < z_c^+(t,q)$ we have that $a(z)$ is in the basin of attracting of a period two attracting cycle of $R_{z,t,q}(w)$.   Lemma \ref{LEM:2PERIODS_IMPLIES_ACTIVE} then implies that the marked point $a(z)$ is active at the parameter $z = z_c^+(t,q)$.

The proof that $a(z)$ is active at $z = z_c^-(t,q)$ is completely analogous.
\end{proof}

\section{Proof of Theorem C (Antiferromagnetic Case, $q=2$)}\label{SEC:PROOF_THMC}

Like the proof of Theorem B, the proof of Theorem C reduces to the following statement.

\begin{theorem}\label{THM:STATEMENT_ABOUT_ACTIVE_LOCUS_THMC}
Suppose $t > 1, q = 2$ and let $\mathcal{A}(R_{z,t,2})$ be the active locus for the marked point $a(z) = z$ under $R_{z,t,2}(w)$. Then 
\begin{align}\label{EQN:ACTIVE_LOCUS_THCB}
 \mathcal{A}(R_{z,t,2})\cap (0,\infty) =
 \begin{cases}
		  \emptyset \ & \mbox{if }\ 1 < t < 3 \\
  \{1\} \cup \{z_c^{\pm}(t,2)\}\ & \mbox{if }\ 3\leq t.
	\end{cases}
\end{align}     
Here $t_3(2) = 3$ and $z_c^{\pm}(t,2)$ are given in Equations (\ref{EQN:T3}) and (\ref{EQN:Z_C_PM_QIS2}) as in the statement of Theorem~C.
\end{theorem}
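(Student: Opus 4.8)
The plan is to follow the proof of Theorem \ref{Theorem 5.0.2} almost verbatim, isolating the two features that distinguish the Ising case $q=2$. For $q=2$ the renormalization map of Equation (\ref{EQN:RENORM_MAP}) simplifies to $R_{z,t,2}(w)=z\left(\frac{t+w}{1+tw}\right)^2$. Lemmas \ref{lemma 8} and \ref{tech lemma 1 for theorem B} were already stated for all $q\geq 2$, so $R^2_{z,t,2}(w)$ is increasing on $[0,\infty)$ with image in a compact subinterval, and it has a neutral fixed point exactly when $t\geq t_3(2)=3$ and $z=z_c^{\pm}(t,2)$ (Equation (\ref{EQN:Z_C_PM_QIS2})). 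The only part of Lemma \ref{tech lemma 1 for theorem B} that fails for $q=2$ is part (b): instead of $z_c^+<1$, the Ising map enjoys a reflection symmetry. A direct computation gives $R_{1/z,t,2}=\iota\circ R_{z,t,2}\circ\iota$ with $\iota(w)=1/w$, so $z\mapsto 1/z$ conjugates the family to itself and sends the marked point $a(z)=z$ to $\iota(a(z))$; hence $\mathcal{A}(R_{z,t,2})$ is symmetric under $z\mapsto 1/z$. Moreover the quadratic $Q_2(z,t,2)$ of Equation (\ref{EQN:Q2}) has equal leading and constant coefficients $4t$, so $z_c^-(t,2)\,z_c^+(t,2)=1$, giving
\[
0<z_c^-(t,2)<1<z_c^+(t,2)\quad\text{for }t>3,\qquad z_c^-(3,2)=z_c^+(3,2)=1.
\]
This is the decisive difference from the case $q\geq 3$: for the Ising model the value $z=1$ lies \emph{inside} the period-doubling window $(z_c^-(t,2),z_c^+(t,2))$.

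Next I would re-prove the analog of Lemma \ref{tech lemma 2 for theorem B}. Fixing $q=2$, let $\mathcal{S}_1$ consist of the parameters with $1<t<3$ (any $z>0$) together with those with $t\geq 3$ and $0<z<z_c^-(t,2)$ or $z>z_c^+(t,2)$, and let $\mathcal{S}_2$ consist of the parameters with $t\geq 3$ and $z_c^-(t,2)<z<z_c^+(t,2)$. By Lemma \ref{tech lemma 1 for theorem B} the map $R^2_{z,t,2}(w)$ has no neutral fixed point in the interior of either region, so by Lemma \ref{lemma 8} and the compact-invariant-interval argument the number and type of its real fixed points in $(0,\infty)$ is constant on each connected component; the symmetry $z\mapsto 1/z$ identifies the two pieces $\{z<z_c^-\}$ and $\{z>z_c^+\}$ of $\mathcal{S}_1$, so one representative computation suffices in each region. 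At $(z,t)=(1,2)\in\mathcal{S}_1$ the only real fixed point is $w=1$ with $R'_{1,2,2}(1)=-\tfrac23$, while at $(z,t)=(1,4)\in\mathcal{S}_2$ the map $R_{1,4,2}$ has the single repelling fixed point $w=1$ (with $R'_{1,4,2}(1)=-\tfrac65$) and $R^2_{1,4,2}$ has in addition the attracting $2$-cycle $\{\tfrac{7-3\sqrt5}{2},\tfrac{7+3\sqrt5}{2}\}$. Thus on $\mathcal{S}_1$ the map $R^2_{z,t,2}$ has one fixed point in $(0,\infty)$, attracting for $R_{z,t,2}$, and on $\mathcal{S}_2$ it has three, the middle one repelling for $R_{z,t,2}$ and the outer two an attracting $2$-cycle.

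The passive/active classification then proceeds as in Theorem \ref{Theorem 5.0.2}. On $\mathcal{S}_1$ the marked point $a(z)=z$ is attracted to the unique attracting fixed point and is passive. On $\mathcal{S}_2\setminus\{z=1\}$, Lemma \ref{lemma 4.0.6} shows $z$ is never equal to the (middle) repelling fixed point, so $a(z)$ lies in the basin of the attracting $2$-cycle and is again passive. For $t>3$ the points $z=z_c^{\pm}(t,2)$ are active by Lemma \ref{LEM:2PERIODS_IMPLIES_ACTIVE}, since every neighborhood meets both $\mathcal{S}_1$ (attracting period one) and $\mathcal{S}_2$ (attracting period two) with the marked point in the respective basin. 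The genuinely new point is $z=1$: one computes $R'_{1,t,2}(1)=\frac{2-2t}{t+1}$, which is $<-1$ exactly when $t>3$, so for $t>3$ the fixed point $w=1$ is repelling; since $a(1)=1$ and $a(z)=z$ is fixed only at $z=1$, the marked point lands non-persistently on this repelling fixed point, and Lemma \ref{Lemma 3.1.1} makes $z=1$ active. Together with the analysis of $\mathcal{S}_1,\mathcal{S}_2$ this yields $\mathcal{A}(R_{z,t,2})\cap(0,\infty)=\emptyset$ for $1<t<3$ and $\{1\}\cup\{z_c^{\pm}(t,2)\}$ for $t>3$.

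I expect the main obstacle to be the boundary value $t=3$, where $z_c^-(3,2)=z_c^+(3,2)=1$ and $R'_{1,3,2}(1)=-1$, so the three points collapse to the single parabolic parameter $z=1$, to which neither Lemma \ref{Lemma 3.1.1} nor Lemma \ref{LEM:2PERIODS_IMPLIES_ACTIVE} applies directly (for real $z\neq 1$ at $t=3$ the map has only an attracting fixed point). To handle it I would argue by normality. Solving $R_{z,3,2}(w)=w$ gives $z=G(w):=\frac{w(1+3w)^2}{(3+w)^2}$ with $G'(1)=2\neq 0$, so the fixed point $w(z)=G^{-1}(z)$ is holomorphic near $z=1$ with $w(1)=1$, and its multiplier $\lambda(z)=R'_{z,3,2}(w(z))$ is holomorphic and non-constant with $\lambda(1)=-1$. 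Since $\lambda$ is non-constant, every neighborhood of $z=1$ contains parameters with $|\lambda(z)|<1$ (where $w(z)$ attracts the marked point) and parameters with $|\lambda(z)|>1$ (where $w(z)$ is repelling). If $z=1$ were passive, then by Lemma \ref{Lemma 4.0.3} we have $R^n_{z,3,2}(z)\to w(z)$ on the real axis near $z=1$, so any subsequential normal limit of $\left(R^n_{z,3,2}(z)\right)_n$ agrees with $w(z)$ on that real segment and hence with $w(z)$ throughout by the Identity Theorem; but at a (generic) parameter $z$ where $w(z)$ is repelling the marked point $z\neq w(z)$ is repelled, so $R^n_{z,3,2}(z)\not\to w(z)$ there, a contradiction. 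Hence $z=1$ is active at $t=3$, where $\{1\}\cup\{z_c^{\pm}(3,2)\}=\{1\}$, completing the proof.
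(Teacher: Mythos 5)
Your proposal is correct, and its skeleton is the same as the paper's (sketched) proof: keep Lemmas \ref{lemma 8} and \ref{tech lemma 1 for theorem B} for $q=2$, replace the regions of Lemma \ref{tech lemma 2 for theorem B} by the sets $\mathcal{S}_1,\mathcal{S}_2$, pin down the fixed-point structure by one representative computation per region, and then classify: passivity on $\mathcal{S}_1$ and on $\mathcal{S}_2\setminus\{z=1\}$ via Lemma \ref{Lemma 3.1.1}, activity at $z_c^{\pm}(t,2)$ via Lemma \ref{LEM:2PERIODS_IMPLIES_ACTIVE}, and activity at $z=1$ via non-persistent landing on a repelling fixed point. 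Two things you do differently are worth recording. First, in place of the paper's monotonicity argument ($z_c^+(t,2)$ increasing, $z_c^-(t,2)$ decreasing) you use the involution $\iota(w)=1/w$: the conjugacy $R_{1/z,t,2}=\iota\circ R_{z,t,2}\circ\iota$ together with the equality of the leading and constant coefficients of $Q_2(z,t,2)$ gives $z_c^-(t,2)\,z_c^+(t,2)=1$; this simultaneously explains structurally why $z=1$ sits inside the period-doubling window precisely when $q=2$ and halves the representative computations (your data at $(1,2)$ and $(1,4)$ check out, including the $2$-cycle $\{(7\pm 3\sqrt5)/2\}$). Second, your treatment of $t=3$ is a genuine addition rather than a repetition: the paper's sketch asserts that the marked point lands non-persistently on a \emph{repelling} fixed point ``for all $t\geq 3$,'' but at $t=3$ the multiplier of $w=1$ is exactly $-1$, so the fixed point is parabolic and Lemma \ref{Lemma 3.1.1} does not apply; the fallback in the template of Theorem \ref{Theorem 5.0.2} is a closedness-of-the-active-locus assertion invoked as $t$ varies, which is delicate since activity is defined slice-wise in $z$ for fixed $t$. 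Your argument---the fixed-point branch $w(z)$ is holomorphic near $z=1$, its multiplier $\lambda(z)$ is nonconstant with $\lambda(1)=-1$, so a normal limit of $R^n_{z,3,2}(z)$ would equal $w(z)$ by the Identity Theorem, contradicting repulsion at nearby parameters with $|\lambda(z)|>1$---settles this boundary case directly.

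Two small points should be made explicit to close the argument. (a) The symmetry identifies the two lobes $\{t\geq 3,\ z<z_c^-\}$ and $\{t\geq 3,\ z>z_c^+\}$ of $\mathcal{S}_1$ with each other, but your only representative, $(1,2)$, lies in the strip $\{1<t<3\}$; you still need one lobe to lie in the same connected component as that strip, which follows from continuity of $z_c^-$ and $z_c^-(3,2)=1$. (b) In the $t=3$ argument, the word ``generic'' hides a necessary step: an orbit can converge to a repelling fixed point only by landing on it exactly, so you must exclude parameters with $R^N_{z,3,2}(z)=w(z)$ for some $N$. This is easily done: for real $z\neq 1$ the map $R^N_{z,3,2}$ is injective on $[0,\infty)$ (a composition of strictly monotone maps there), so $R^N_{z,3,2}(z)=w(z)=R^N_{z,3,2}(w(z))$ would force $z=w(z)$, hence $z=1$ by Lemma \ref{lemma 4.0.6}; therefore $R^N_{z,3,2}(z)-w(z)\not\equiv 0$, each zero set is discrete, and their countable union avoids some parameter with $|\lambda(z)|>1$, which is all your contradiction requires.
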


\begin{proof}[Sketch of proof]
The proof is quite similar to the proof of Theorem \ref{Theorem 5.0.2} with a couple of small variations.  First note that Lemmas \ref{lemma 8} and \ref{tech lemma 1 for theorem B} both hold for $q = 2$.  We did not include the case $q=2$ in the statement of Lemma \ref{tech lemma 2 for theorem B} because $z_c^+(t,2)$ crosses $z = 1$ when $t = 3$.  This would have complicated our proof that $R_1$ and $R_2$ are connected (as $q$ is varied), which played an important role in the proof of Lemma \ref{tech lemma 2 for theorem B}.   However, if one
considers the following sets 
 \begin{align*}
        \mathcal{S}_1 &:= \{(z,t)\in \mathbb{R}^2: t > 1, \ \mbox{and if $t \geq 3$ then} \ 0 < z<z_c^-(t,2)\ \text{or}\ z>z_c^+(t,2)\}\\        
        \mathcal{S}_2 &:= \{(z,t)\in \mathbb{R}^2: t > 3, \ z_c^-(t,q)<z<z_c^+(t,2)\}.
    \end{align*}
then it is straightforward to prove the analogous claims for them as in Lemma \ref{tech lemma 2 for theorem B}.  (As before, one proves that each set is connected and then checks the claims (i) and (ii) for a single choice of parameters in each set.   The proof that
$S_1$ and $S_2$ are connected is rather straightforward, since one can directly check that $z_c^+(t,2)$ is an increasing function and $z_c^-(t,2)$ is a decreasing function of $t \geq 3$.) 

The proof of Theorem \ref{THM:STATEMENT_ABOUT_ACTIVE_LOCUS_THMC} then concludes in almost the same way as the proof of Theorem \ref{Theorem 5.0.2}, except that since $q =2$ one can have that the marked point $a(z) = z$ does non-persistently 
land on a repelling fixed point $w=1$ when $z=1$.   It does this for all $t \geq 3$, leading to the additional point on $(0,\infty)$ at which the Lee-Yang zeros can accumulate that was described in Theorem C.
\end{proof}

\appendix

\section{Proof of Lemma \ref{LEM:TECHNICAL}.}\label{APPENDIX_TECHNICAL_LEMMAS}

We start with two simple lemmas that will be used in the proof.   In each of these
statements we extend the range of allowable values of $q$ from $\mathbb{N}_{\geq 2}$ to the interval $[2,\infty)$ so that
methods involving continuity can be used in the proofs.   

\begin{lemma}\label{LEM:TRIPLE_FP}
Suppose $z > 0$, $t \in [0,1)$, and $q \geq 2$.   If the renormalization mapping $R_{z,t,q}(w)$ has triple fixed
point (i.e.\ $w$ is a triple root of $R_{z,t,q}(w) = w$) then $t = t_2(q) = \frac{q-2 + \sqrt{q^2 + 32q - 32}}{18(q - 1)}$.
\end{lemma}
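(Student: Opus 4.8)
The plan is to reuse the cubic polynomial
\begin{align*}
P_{z,t,q}(w) = z\,(t+w+(q-2)tw)^2 - w\,(1+(q-1)tw)^2
\end{align*}
introduced in the proof of Lemma \ref{lemma 4.0.7}, whose roots (for fixed $z,t,q$) are exactly the fixed points of $R_{z,t,q}(w)$; a triple fixed point is by definition a triple root of $P_{z,t,q}$. Two structural features drive the argument. First, for $t\in(0,1)$ and $q\ge 2$ the leading coefficient $a_3=-(q-1)^2t^2$ is nonzero, so $P_{z,t,q}$ is a genuine cubic in $w$. (When $t=0$ the degree drops to two and there is no triple root, so the statement is vacuous there; hence I assume $t\in(0,1)$.) Second, $P_{z,t,q}(w)$ is \emph{affine} in the parameter $z$. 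Writing $P_{z,t,q}(w)=a_3w^3+a_2w^2+a_1w+a_0$ one reads off
\begin{align*}
a_3=-(q-1)^2t^2,\qquad a_2=(1+(q-2)t)^2\,z-2(q-1)t,\qquad a_1=2t(1+(q-2)t)\,z-1,\qquad a_0=t^2z.
\end{align*}

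First I would record that, since $a_3\neq 0$, the cubic $P_{z,t,q}$ has a triple root if and only if $P_{z,t,q}(w)=a_3\,(w-w_0)^3$ for some $w_0$. Matching coefficients (the $w^2$-coefficient fixes $w_0=-a_2/(3a_3)$, i.e.\ the inflection point) shows this is equivalent to the two polynomial conditions
\begin{align*}
a_2^2=3\,a_1a_3 \qquad\text{and}\qquad a_2^3=27\,a_0a_3^2 .
\end{align*}
The first is quadratic in $z$ and the second cubic in $z$. I would then eliminate $z$ by forming the resultant $\operatorname{Res}_z$ of these two expressions, producing a polynomial relation in $t$ and $q$ alone. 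The expectation --- to be confirmed with a computer algebra system --- is that this resultant factors into a product of powers of $t$, $(q-1)$, $((q-2)t+1)$ (all strictly positive for $t\in(0,1)$, $q\ge 2$) times the factor $9(q-1)t^2-(q-2)t-1$. Since the positive root of $9(q-1)t^2-(q-2)t-1=0$ is precisely $t_2(q)$, this yields $t=t_2(q)$, as claimed.

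A useful conceptual cross-check, which also guides the computation, is available through Lemma \ref{lemma 4.0.7}. A triple root is in particular a double root, so that lemma already forces $z\in\{N_+(t,q),N_-(t,q)\}$ and $t\in(0,t_2(q)]$; indeed, for $z>0$ the values $N_\pm(t,q)$ are the \emph{only} parameters at which $P_{z,t,q}$ has a repeated root, being the two roots of the quadratic $Q(z,t,q)$. A triple root is the further degeneracy in which all three fixed points collide, forcing the repeated-root values $N_+(t,q)$ and $N_-(t,q)$ to coincide. From the formula for $N_\pm$ this happens exactly when the quantity $(t-1)((q-1)t+1)(9(q-1)t^2-(q-2)t-1)^3$ under the square root vanishes, and since neither $(t-1)$ nor $((q-1)t+1)$ vanishes for $t\in(0,1)$, this again reduces to $9(q-1)t^2-(q-2)t-1=0$, i.e.\ $t=t_2(q)$.

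The main obstacle I anticipate is entirely computational rather than conceptual: carrying out the elimination of $z$ cleanly and then factoring the resulting bivariate polynomial in $(t,q)$ so as to isolate $9(q-1)t^2-(q-2)t-1$ and to verify that each of the remaining factors is of a single sign, hence nonvanishing, throughout the range $t\in(0,1)$, $q\ge 2$. As with the discriminant computations in the proofs of Lemmas \ref{lemma 4.0.7} and \ref{tech lemma 1 for theorem B}, this is best handled in a computer algebra package.
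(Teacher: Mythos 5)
Your main argument is precisely the paper's proof: both recast the fixed-point equation as the cubic $P_{z,t,q}(w)$, reduce ``triple root'' to two coefficient identities (your $a_2^2=3a_1a_3$ and $a_2^3=27a_0a_3^2$ are exactly the paper's monic conditions $b=a^2/3$, $c=a^3/27$), and then eliminate $z$ by a resultant whose factorization leaves $9(q-1)t^2-(q-2)t-1$ as the only factor that can vanish for $z>0$, $t\in(0,1)$, $q\geq 2$, giving $t=t_2(q)$ --- and your explicit disposal of the degenerate case $t=0$ is a small improvement on the paper, which silently assumes the cubic is genuinely cubic. Your secondary ``cross-check'' via Lemma \ref{lemma 4.0.7} is a nice sanity check but could not replace the computation as written, since its key step --- that a triple root forces $N_+(t,q)=N_-(t,q)$ --- is asserted rather than proved (it is true, because a triple root at $z_0$ makes $z_0$ a root of multiplicity at least two of the discriminant in $z$, hence a double root of the quadratic factor $Q(z,t,q)$, but that argument would need to be supplied).
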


\begin{proof}
A polynomial $w^3 + aw^2 + bw+c = 0$ has a triple root if and only if $b = \frac{a^2}{3}$ and $c = \frac{a^3}{27}$.
Using that $z > 0$, $t \in [0,1)$, and $q \geq 2$, we can rewrite the condition that $R_{z,t,q}(w) = w$ as a monic cubic polynomial
of the above form.   It has a triple root if and only if 
\begin{align*}
   \frac{[2 t z ((q-2) t+1)-1]}{ -(q-1)^2 t^2} &= \frac{1}{3}\left(\frac{2 t z ((q-2) t+1)-1}{ -(q-1)^2 t^2} \right)^2 \, \text{and} \,
\frac{t^2 z}{ -(q-1)^2 t^2}  = \frac{1}{27}\left(\frac{2 t z ((q-2) t+1)-1}{ -(q-1)^2 t^2} \right)^3.
\end{align*}
Clearing denominators, the above two equations can be simplified to the conditions that $P_1(z,t,q)~=~0$ and $P_2(z,t,q)~=~0$ where $P_1$ and $P_2$ are polynomials in $z, q$ and $t$. One can then use elimination theory (resultants) to
eliminate $z$ from the above two equations.  One finds the following factorized polynomial in $q$ and $t$:
\[(q-1)^4 (t-1)^2 t^4 (q t-t+1)^2 \left(9 q t^2-q t-9 t^2+2 t-1\right) = 0.\]
Since $z > 0$, $t \in [0,1)$, and $q \geq 2$, the only relevant root comes from the last
factor and it is $t = t_2(q)$.
\end{proof}

\begin{lemma}\label{LEM:WHEN_IS_MARKED_PT_FIXED}
    For any $q \geq 2$ and  $t \in (0,t_2(q)]$
 we have:
    \begin{itemize}
        \item[(a)] when $z = N_+(t,q)$, the point $w=z$ is fixed by $R_{z,t,q}(w)$ if and only if $t = \frac{1-2\sqrt{q-1}}{5-4q}$, and
        \item[(b)]when $z = N_-(t,q)$, the point $w=z$ is fixed by $R_{z,t,q}(w)$ if and only if $t = t_1(q) = \frac{1}{q+1}$.
    \end{itemize} 
\end{lemma}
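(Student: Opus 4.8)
The plan is to reduce both parts to a statement purely about the explicit formulas for $N_\pm(t,q)$, and then to resolve that statement by locating the roots of a single polynomial in $t$. For the reduction I would invoke Lemma \ref{lemma 4.0.6}: for $t\in(0,\infty)\setminus\{1\}$ and $z>0$, the point $w=z$ is fixed by $R_{z,t,q}(w)$ if and only if $z=1$. Substituting $z=N_\pm(t,q)$, the hypotheses of (a) and (b) become simply $N_+(t,q)=1$ and $N_-(t,q)=1$, respectively. Thus the lemma reduces to determining exactly the values of $t\in(0,t_2(q)]$ at which $N_+(t,q)=1$ and at which $N_-(t,q)=1$.

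Next I would exploit that, by the proof of Lemma \ref{lemma 4.0.7}, the quantities $N_+(t,q)$ and $N_-(t,q)$ are precisely the two roots in $z$ of the quadratic $Q(z,t,q)$. Hence $1\in\{N_+(t,q),N_-(t,q)\}$ if and only if $Q(1,t,q)=0$. I would therefore substitute $z=1$ and factor $Q(1,t,q)$ as a polynomial in $t$ (equivalently, set the discriminant in $w$ of $P_{1,t,q}(w)$ to zero and factor it). The factorization produces the linear factor $(q+1)t-1$, whose root is $t_1(q)=\tfrac{1}{q+1}$, together with the quadratic factor $(5-4q)t^2-2t+1$, whose roots are $\tfrac{1\pm2\sqrt{q-1}}{5-4q}$. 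Since $5-4q<0$ for $q\geq 2$, the root $\tfrac{1+2\sqrt{q-1}}{5-4q}$ is negative and may be discarded, whereas $\tfrac{1-2\sqrt{q-1}}{5-4q}$ is positive; moreover, because $Q(\cdot,t,q)$ has real coefficients and admits $1$ as a real root exactly at these $t$, its other root is then real too, forcing $t\leq t_2(q)$. So the only $t\in(0,t_2(q)]$ for which $1$ is one of $N_\pm(t,q)$ are $t_1(q)$ and $\tfrac{1-2\sqrt{q-1}}{5-4q}$.

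It remains to decide, at each of these two values, which of $N_+$ or $N_-$ equals $1$. Here I would use that on $(0,t_2(q)]$ the quantity under the square root in $N_\pm(t,q)$ is non-negative and the common denominator $8t((q-2)t+1)^3$ is positive, so $N_+(t,q)\geq N_-(t,q)$, with equality only at $t=t_2(q)$. At $t=t_1(q)$ the computation already recorded in the proof of Lemma \ref{lemma 4.0.7} gives $N_-(t_1(q),q)=1$ and $N_+(t_1(q),q)=\tfrac{(q-1)(q+1)^3}{(2q-1)^3}$; using the identity $(q-1)(q+1)^3-(2q-1)^3=q(q-2)^3\geq 0$ (with equality only at $q=2$) one gets $N_+(t_1(q),q)\geq 1$, so the root equal to $1$ is $N_-$, proving (b). At $t=\tfrac{1-2\sqrt{q-1}}{5-4q}$ I would substitute directly into the explicit formula for $N_+(t,q)$ and verify $N_+=1$; since $N_+\geq N_-$ this forces $N_-\leq 1$, so here the root equal to $1$ is $N_+$, proving (a). At $q=2$ all three candidate values collapse to $t_1(2)=t_2(2)=\tfrac13$, where $N_+=N_-=1$, consistently with both statements.

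The conceptual content is light, and I expect the main obstacle to be computational and twofold. First, one must carry out the symbolic factorization of $Q(1,t,q)$ (or of the $w$-discriminant of $P_{1,t,q}$) while keeping $q$ a free parameter; this is mechanical via resultants but is exactly where sign and coefficient errors tend to arise. Second, and more delicate, is the root identification at $t=\tfrac{1-2\sqrt{q-1}}{5-4q}$: the $t_1(q)$ case is disposed of cleanly by the identity $(q-1)(q+1)^3-(2q-1)^3=q(q-2)^3$, but confirming that the other candidate yields $N_+=1$ rather than $N_-=1$ appears to require an explicit, somewhat tedious substitution into the formula for $N_+(t,q)$, best checked with a computer algebra system.
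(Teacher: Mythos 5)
Your proposal is correct and is essentially the paper's own argument: the paper likewise combines the positivity of $N_\pm(t,q)$ (Lemma \ref{lemma 4.0.7}) with Lemma \ref{lemma 4.0.6} to reduce both parts to solving $N_\pm(t,q)=1$, and then simply asserts that the result follows by solving that equation, so the computation you outline (passing to the quadratic whose roots are $N_\pm$, locating its roots in $t$, and then identifying which of $N_\pm$ equals $1$ at each candidate, e.g.\ via $(q-1)(q+1)^3-(2q-1)^3=q(q-2)^3$) is exactly the step the paper leaves implicit.

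One practical warning: carry out that computation from the $w$-discriminant of $P_{1,t,q}$, as you allow for, rather than from the paper's displayed $Q(z,t,q)$, whose coefficient of $z$ is misprinted (it duplicates the antiferromagnetic polynomial of Equation (\ref{EQN:Q2}) and is inconsistent with the displayed $N_\pm$ --- for instance it does not vanish at $z=1$, $t=1/3$, $q=2$, even though $N_\pm(1/3,2)=1$); with the correct discriminant one finds
$Q(1,t,q)=\left((q+1)t-1\right)^2\left((4q-5)t^2+2t-1\right)$,
so your linear factor in fact occurs with multiplicity two, which changes none of your conclusions.
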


\begin{proof}
Note that the marked points $a(z) = N_\pm(t,q)$ are positive by Lemma
\ref{lemma 4.0.7}.    
Lemma~\ref{lemma 4.0.6} therefore gives that they are fixed points of $R_{z,t,q}(w)$ if and only if they equal $1$.  The result then follows by solving $N_\pm(t,q)=1$.
\end{proof}

\begin{lemma}\label{LEM:NEUTRAL_AND_ATTRACTING_FP}
    For any $q \geq 2$ and  $t \in (0,t_2(q))$ if $z = N_\pm(t,q)$ then $R_{z,t,q}(w)$ has exactly two fixed points and the non-neutral fixed point is attracting.
\end{lemma}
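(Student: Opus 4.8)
The plan is to analyze the fixed-point equation $R_{z,t,q}(w) = w$ directly as a polynomial equation and to read off both the count of fixed points and the multiplier at the non-neutral one from the factorization of the associated cubic.

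First I would recall that clearing denominators turns $R_{z,t,q}(w) = w$ into the cubic $P_{z,t,q}(w) := z(t + w + (q-2)tw)^2 - w(1 + (q-1)tw)^2$, exactly the polynomial appearing in the proof of Lemma \ref{lemma 4.0.7}. Its leading coefficient is $-(q-1)^2 t^2 < 0$, and since $R_{z,t,q}$ sends $\infty$ to the finite value $z[(1+(q-2)t)/((q-1)t)]^2$, the point $\infty$ is not fixed; hence every fixed point of $R_{z,t,q}$ on the sphere is a root of $P_{z,t,q}$. The hypothesis $z = N_\pm(t,q)$ is, by the discriminant computation in the proof of Lemma \ref{lemma 4.0.7}, precisely the condition that $P_{z,t,q}(w)$ has a repeated root. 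Because $t \in (0,t_2(q))$, Lemma \ref{LEM:TRIPLE_FP} rules out a triple root, so $P_{z,t,q}$ has a double root $w_N$ together with a simple root $w_*$. A double root of a real cubic must be real (a complex double root would force its conjugate to be a double root as well, giving four roots), so $w_N \in \mathbb{R}$ and hence $w_* \in \mathbb{R}$. This yields exactly two distinct fixed points, establishing the first assertion; the double root $w_N$ satisfies $R_{z,t,q}'(w_N) = 1$ (neutral) and $w_*$ is the non-neutral fixed point.

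Next I would locate $w_*$ and compute its multiplier. The product of the three roots of $P_{z,t,q}$ equals $z/(q-1)^2 > 0$, i.e.\ $w_N^2\, w_* = z/(q-1)^2 > 0$, which forces $w_* > 0$. To pin down the sign of the multiplier I examine the sign changes of $P_{z,t,q}$: with negative leading coefficient and odd degree, $P_{z,t,q}(w) \to +\infty$ as $w \to -\infty$ and $P_{z,t,q}(w) \to -\infty$ as $w \to +\infty$, so $P_{z,t,q}$ changes sign exactly once on $\mathbb{R}$. Since the double root $w_N$ produces no sign change, this unique crossing occurs at the simple root $w_*$ and goes from $+$ to $-$, whence $P_{z,t,q}'(w_*) < 0$. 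Writing $R_{z,t,q}(w) - w = P_{z,t,q}(w)/D(w)$ with $D(w) = (1+(q-1)tw)^2 > 0$ near $w_*$ (as $w_* > 0$), I obtain at this root $\frac{d}{dw}\big(R_{z,t,q}(w) - w\big)\big|_{w_*} = P_{z,t,q}'(w_*)/D(w_*) < 0$, and therefore $R_{z,t,q}'(w_*) < 1$.

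Finally, Lemma \ref{lemma before lemma 4.0.3}(i) gives that $R_{z,t,q}$ is increasing on $[0,\infty)$, so $R_{z,t,q}'(w_*) > 0$ because $w_* > 0$. Combining, $0 < R_{z,t,q}'(w_*) < 1$, so the non-neutral fixed point $w_*$ is attracting, as claimed. I do not anticipate a serious obstacle: the only points requiring care are confirming that the repeated root is genuinely a double (not triple) root and that it is real, both handled cleanly by Lemma \ref{LEM:TRIPLE_FP} and the conjugation argument for real cubics; everything else is a short sign analysis of $P_{z,t,q}$ reusing monotonicity from Lemma \ref{lemma before lemma 4.0.3}.
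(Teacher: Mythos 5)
Your proposal is correct and takes essentially the same approach as the paper: both arguments identify the fixed points with the roots of the cubic $P_{z,t,q}(w)$ (a real double root, which is the neutral fixed point, plus a real simple root $w_*$), and both deduce $0 < R_{z,t,q}'(w_*) < 1$ by combining monotonicity of $R_{z,t,q}$ on $[0,\infty)$ with the fact that the sign of $R_{z,t,q}(w)-w$ changes from positive to negative only at the simple root. The only difference is bookkeeping: you run the sign analysis on the cubic $P_{z,t,q}$ over all of $\mathbb{R}$ (using its negative leading coefficient and Vieta's formula to get $w_* > 0$, and citing Lemma \ref{LEM:TRIPLE_FP} explicitly to exclude a triple root), whereas the paper argues with the graph of $R_{z,t,q}$ against the diagonal on $[0,\infty)$, using $R_{z,t,q}(0) = zt^2 > 0$ and the finite positive limit as $w \to \infty$.
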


\begin{proof}
For $t \in (0,t_2(q))$ and $z = N_\pm(t,q)$ the mapping $R_{z,t,q}(w)$ has exactly two fixed points, with one corresponding to a double root of $R_{z,t,q}(w) = w$ and the other corresponding to a simple root of the same equation.  (See the proof of Lemma \ref{lemma 4.0.7}.)  With this range of values for the parameters $t,q$ and $z=N_\pm(t,q)$ we have that $R_{z,t,q}'(w) > 0$ for $w > 0$.   Therefore, the double root corresponds to a neutral fixed point $w_N(t,q)$ and for all $w$ in a sufficiently small interval $I$ containing $w_N(t,q)$ we either have $R_{z,t,q}(w) \geq w$ or for all $w \in I$ 
we have $R_{z,t,q}(w) \leq w$.   In other words, the graph of $R_{z,t,q}(w)$ does not cross the diagonal at the neutral fixed point $w_N(t,q)$.

Note that $R_{z,t,q}(0) = zt^2 > 0$, and that
\[ \lim_{w \to \infty} R_{z,t,q}(w) = \frac{z ((q-2)t + 1)^2}{(q - 1)^2 t^2} \in (0,\infty).\]
Therefore, if we denote the fixed point corresponding to the simple root of $R_{z,t,q}(w) = w$ by $w_2(t,q)$ we must have a small interval $J$ containing $w_2(t,q)$ such that $R_{z,t,q}(w) > w$ for all $w \in J$ satisfying $w < w_2(t,q)$
and such that $R_{z,t,q}(w) < w$ for all $w \in J$ satisfying $w > w_2(t,q)$.   This implies that $0 < R'_{z,t,q}(w_2(t,q)) < 1$
and hence that $w_2$ is an attracting fixed point for $R_{z,t,q}(w)$.  
\end{proof}

For the remainder of the proof we will denote the neutral and attracting fixed points whose existence is
given by Lemma \ref{LEM:NEUTRAL_AND_ATTRACTING_FP} by $w_N(t,q)$ and $w_A(t,q)$.

\begin{proof}[Proof of Lemma \ref{LEM:TECHNICAL}]
{\bf Proof of Claim (i):}   We suppose that $t \in (0,t_2(q))$ and that $z=N_+(t,q)$.   Rather than considering $q \geq 2$ as a natural number we consider it as a real number.  By Lemma~\ref{LEM:WHEN_IS_MARKED_PT_FIXED} the marked point $z= N_+(t,q)$ is a fixed point of $R_{z,t,q}(w)$ if and only if $t = \frac{1-2\sqrt{q-1}}{5-4q}=:t_4(q)$.   A direct calculation gives that $t_4(q) = t_2(q)$ if and only if $q=2$ and that for $q > 2$ we have $0 < t_4(q) < t_2(q)$.   Thus, the two curves $t=t_2(q)$ and $t=t_4(q)$ divide the $qt$-plane, when $t > 0$ and $q\geq 2$, into three regions:
\begin{align*}
\mbox{Region I:} \quad  0 < t < t_4(q), \qquad \mbox{Region II:} \quad t_4(q) < t < t_2(q), \qquad \mbox{and}
\quad        \mbox{Region III:} \quad t > t_2(q).
\end{align*}
Refer to Figure \ref{TechLemma} for an illustration. 
 \begin{figure}[h]
     \centering
     \includegraphics[width=0.6\linewidth]{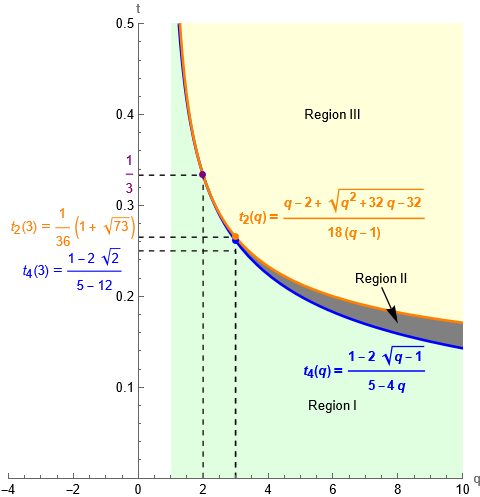}
     \caption{Plot of Regions I, II, and III in proof of Claim (i).}
     \label{TechLemma}
 \end{figure}
Note that Region III is not considered under the hypotheses of Lemma \ref{LEM:TECHNICAL}, so we will ignore it.

Note that every complex fixed point of $R_{z,t,q}(w)$ is real when $z = N_+(t,q)$ and when $(t,q)$ varies over
either Region I or Region II and that one of them is consistently a solution of multiplicity two to the equation $R_{z,t,q}(w) = w$.  Since complex solutions to $R_{z,t,q}(w) = w$ vary continuously with respect to the parameters, this implies that
$w_N(t,q)$ and $w_A(t,q)$ vary continuously as the parameters $(t,q)$ are varied over Region I or over Region II and they are never equal since that would correspond
to a fixed point of multiplicity~$3$ which only happens when $t=t_2(q)$, by Lemma~\ref{LEM:TRIPLE_FP}.  Moreover, by Lemma \ref{lemma before lemma 4.0.3} we have that $R_{z,t,q}\left([0,\infty)\right)$ is compactly contained in $(0,\infty)$ for each choice of parameters, so that any fixed points in $(0,\infty)$ for one choice of parameters cannot leave $(0,\infty)$ for a different choice of parameters in the same region.

Within Regions I and Region II we also have that the marked point $z = N_+(t,q)$ is not fixed by $R_{z,t,q}(w)$ because $t \neq t_4(q)$ in those regions.   Therefore, the order with which $w_N(t,q), w_A(t,q),$ and $N_+(t,q)$ occur on $[0,\infty)$ is constant in each of the two regions.

\vspace{0.1in}
\noindent
{\bf Region I:}
Considering $(t,q) = (0.1,5)$, which lies in Region 1.  For these parameters one can do explicit 
calculations in a computer algebra system (e.g. Mathematica) to find:
\begin{align*}
        z= N_+(0.1,5) &= \frac{141 \sqrt{329}+6457}{4394} \approx 2.051,\\
        w_N(0.1,5) &=  \frac{1}{52} \left(59-3 \sqrt{329}\right) \approx 0.0881717, \quad \mbox{and} \\
        w_A(0.1,5) &= \frac{1}{416} \left(189 \sqrt{329}+3433\right) \approx 16.4931.
\end{align*}
In particular, we find that for all $(t,q)$ in Region I we have the following order
\begin{align*}
w_N(t,q) < z=N_+(t,q) < w_A(t,q).
\end{align*}
In particular, since $R_{z,t,q}(0) > 0$ and since the graph of $R_{z,t,q}(w)$ does not cross the diagonal at $w_N(t,q)$ we find that $R_{z,t,q}(w) > w$ for all $w \in (w_N(t,q),w_A(t,q))$.  This implies that
the orbit of $z = N_+(t,q)$ under iteration of $R_{z,t,q}(w)$ converges to $w_A(t,q)$, as claimed.

\vspace{0.1in}
\noindent
{\bf Region II:}
Considering $(t,q) = (0.15,10)$ which is in Region II.
For these parameters one can again do explicit 
calculations to find:
\begin{align*}
        z= N_+(0.15,10) &= \frac{151 \sqrt{120649}+1880293}{2044416} \approx 0.945376, \\
        w_N(0.15,10) &=  \frac{637-\sqrt{120649}}{2376} \approx 0.121908, \quad \mbox{and} \\
        w_A(0.15,10) &= \frac{799 \sqrt{120649}+327037}{769824} \approx 0.78533.
\end{align*}
In particular, we find that for all $(t,q)$ in Region II we have the following order
\begin{align*}
w_N(t,q) <  w_A(t,q) < z=N_+(t,q).
\end{align*}
Since $\lim_{w \to \infty} R_{z,t,q}(w)$ is finite we have that for all $w > w_A(t,q)$ that
$R_{z,t,q}(w) < w$.  This implies that
the orbit of $z = N_+(t,q)$ under iteration of $R_{z,t,q}(w)$ converges to $w_A(t,q)$, as claimed.

 Finally, suppose that $t = t_4(q)$, corresponding to the boundary between Regions I and II.  It is clear from the above calculations and continuity that in this case $z = N_+(t,q) = w_A(t,q)$.   Thus Claim (i) is proved.

\vspace{0.1in}
{\bf Proof of Claim (ii):}
We suppose that $t \in (0,t_2(q))$ and that $z=N_-(t,q)$.    By Lemma~\ref{LEM:WHEN_IS_MARKED_PT_FIXED} the marked point $z= N_+(t,q)$ is a fixed point of $R_{z,t,q}(w)$ if and only if $t = t_1(q) = 1/(q+1)$.   A direct calculation gives that $t_1(q) = t_2(q)$ if and only if $q=2$ and that for $q > 2$ we have $0 < t_1(q) < t_2(q)$.   Thus, the two curves $t=t_2(q)$ and $t=t_1(q)$ divide the $qt$-plane, when $t > 0$ and $q\geq 2$, into three regions:
\begin{align*}
\mbox{Region I:} \quad  0 < t < t_1(q), \qquad \mbox{Region II:} \quad t_1(q) < t < t_2(q), \qquad \mbox{and}
\quad        \mbox{Region III:} \quad t > t_2(q).
\end{align*}
Refer to Figure \ref{TechLemma2} for an illustration.
\begin{figure}[h]
    \centering
    \includegraphics[width=0.6\linewidth]{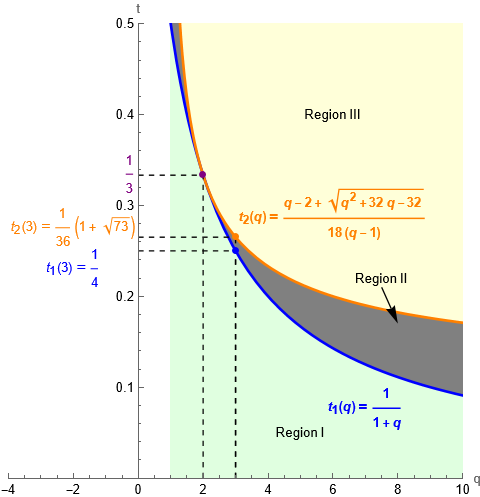}
    \caption{Plot of Regions I, II, and III in proof of Claim (ii).}
    \label{TechLemma2}
\end{figure}
Note that Region III is not considered under the hypotheses of Lemma \ref{LEM:TECHNICAL}, so we will ignore it.

Just as in the proof of Claim (i) the fixed points $w_N(t,q), w_A(t,q),$ and the marked point $z=N_-(t,q)$ vary continuously with $(t,q)$ over Regions I and II, that they never leave $(0,\infty)$, and on a given region they are never equal.    Therefore, the order with which $w_N(t,q), w_A(t,q),$ and $N_-(t,q)$ occur on $[0,\infty)$ is constant in each of the two regions.

\vspace{0.1in}
\noindent
{\bf Region I:}
Consider $(t,q) = (0.2,3)$, which lies in Region I.  For these parameters one can do explicit 
calculations to find:
\begin{align*}
z=N_-(0.2,3) &= \frac{1}{36} \left(39-\sqrt{21}\right) \approx 0.95604, \\
        w_N(0.2,3) &=  \frac{1}{6} \left(\sqrt{21}+6\right) \approx 1.76376, \quad \mbox{and} \\
        w_A(0.2,3) &= \frac{1}{12} \left(33-7 \sqrt{21}\right) \approx 0.0768308.
\end{align*}
In particular, we find that for all $(t,q)$ in Region I we have the following order
\begin{align*}
w_A(t,q) < z=N_-(t,q) < w_N(t,q).
\end{align*}
In particular, since $\lim_{w \to \infty} R_{z,t,q}(w)$ is finite and since the graph of $R_{z,t,q}(w)$ does not cross the diagonal at $w_N(t,q)$ we find that $R_{z,t,q}(w) < w$ for all $w \in (w_A(t,q),w_N(t,q))$.  This implies that the orbit of $z = N_-(t,q)$ under iteration of $R_{z,t,q}(w)$ converges to $w_A(t,q)$, as claimed.   This proves Claim (ii)(a).

\vspace{0.1in}
\noindent
{\bf Region II:}
Consider $(t,q) = (1/6,6)$, which lies in Region II.  For these parameters one finds:
\begin{align*}
z=N_-(1/6,6) &= \frac{1}{320} (-3) \left(\sqrt{33}-111\right) \approx 0.98677, \\
        w_N(1/6,6) &=  \frac{1}{20} \left(\sqrt{33}+9\right) \approx 0.737228, \quad \mbox{and} \\
        w_A(1/6,6) &= \frac{1}{80} \left(69-11 \sqrt{33}\right) \approx 0.0726226.
\end{align*}
In particular, we find that for all $(t,q)$ in Region I we have the following order
\begin{align*}
w_A(t,q) <  w_N(t,q) < z=N_-(t,q).
\end{align*}
since $\lim_{w \to \infty} R_{z,t,q}(w)$ is finite we find that $R_{z,t,q}(w) < w$ for all $w \in (w_N(t,q),\infty)$.  This implies that the orbit of $z = N_-(t,q)$ under iteration of $R_{z,t,q}(w)$ converges to $w_N(t,q)$, as claimed.   

Note that if $t = t_1(q)$ it is clear from the above calculations and continuity
that $z=N_+(t,q)= w_N(t,q)$.   Combined with the analysis of the parameters in Region II, this proves Claim (ii)(b).

\vspace{0.1in}
{\bf Proof of Claim (iii):}
First suppose that $z=N_-(t,q)$ for $(t,q)$ in Region II, as discussed in Case (ii) above.  Since the graph of
$R_{z,t,q}(w)$ does not cross the diagonal at the fixed point $w_N(t,q)$ we have that for all $w \in (w_A(t,q),\infty)$ that $R_{z,t,q}(w) \leq w$ with equality occurring at $w=w_N(t,q)$.  Decreasing the parameter
$z > 0$ decreases the values of $R_{z,t,q}(w)$ and this will eliminate the neutral fixed point $w_N(t,q)$.
Meanwhile, if the decrease of the parameter $z > 0$ is by a sufficiently small amount, 
then the attracting fixed point $w_A(t,q)$ will move continuously to an attracting fixed point $w_A'$ for the perturbed map, and we 
will have that $R_{z,t,q}(w) < w$ for all $w \in (w_A',\infty)$.   This implies that the orbit of
the marked point $w=z$ will now converge to the new attracting fixed point $w_A'$, as claimed in Part (iii) of the Lemma.
\end{proof}

\section{Derivation of the Renormalization Mapping (Proof of Theorem D)}\label{APPENDIX_THMC}

\begin{proof}[Proof of Theorem D]
 For each $0 \leq j \leq q-1$ we define the conditional partition functions of $\Gamma_n$ conditioned on the spin $\sigma(r)$ at the root vertex equaling $j$ as follows
\begin{align*}
Z_n^j&\equiv Z_n^j(z,t):=\sum_ {\sigma\ s.t\ \sigma(r)=j} W_n(\sigma).
\end{align*}
Here $W_n(\sigma):=e^{-\frac{H_n(\sigma)}{T}}$ is the Boltzmann-Gibbs weight of the configuration~$\sigma$.  Notice that
\begin{align}\label{(2.2)}
    \text{$Z_n^j=Z_n^k$ for any $1\leq j,k \leq q-1$.}
\end{align}
To see why this is true, let $\rho$ be a permutation on $\{0,1,\cdots, q-1\}$ which fixes 0. Then $H_n(\sigma) = H_n(\rho \circ \sigma)$, implying the claim. The term $Z^0_n$ is different from $Z^k_n$ for $k=1,\cdots, {q-1}$ because of the term $- h\sum_{i\in V_n}\delta(\sigma(i),0)$ that corresponds to the interaction between the externally applied magnetic field and the spins $\sigma(i)$ in (\ref{(2.1)}). The full partition function then equals to
\[Z_n \equiv
 Z_n(z,t) = Z_n^0+(q-1)Z_n^1.\]

\noindent
{\bf Computing $Z_0^0$ and $Z_0^1$:}
Notice that $\Gamma_n$ is just the root vertex $r$ when $n=0$, so that $H(\sigma)=-h$ when $\sigma(r)=0$ and $H(\sigma)=0$ when $\sigma(r)=1$. Hence, we have 
\begin{align}
   Z_0^0=e^{h/T}=z^{-1} \qquad \mbox{and} \qquad Z_0^1=e^0=1. \label{2.2}
\end{align}

\noindent
{\bf Computing $Z_{n+1}$ in terms of $Z_n^0$ and $Z_n^1$:}
We first compute $Z_{n+1}^0$ and $Z_{n+1}^1$ in terms of $Z_n^0$ and $Z_n^1$. Let $\sigma: V_{n+1}\longrightarrow \{0,\cdots,q-1\}$ be a spin configuration. 
Let $\Lambda_n$ be the graph that can be obtained by joining a single new vertex (say $\alpha$) to the root vertex $r$ of $\Gamma_n$ using a new edge (see Figure \ref{Graph of Gamma}). By $\Lambda_n^2$ let us denote the graph that one can obtain by gluing two $\Lambda_n$ graphs together at the vertex $\alpha$ (see Figure \ref{Graph of two Gammas}). In fact $\Gamma_{n+1}=\Lambda_n^2$. Recall from Equation (\ref{(2.1)}) that $H_n(\sigma)$ denotes the Hamiltonian of the $n^{th}$ level Cayley rooted tree with branching number two for the spin configuration $\sigma$. By $H_{\Lambda_n}(\sigma)$, let us denote the Hamiltonian of $\Lambda_n$ with spin configuration $\sigma$. Let $\mathcal{W}_{n+1}(\sigma)$ be the Boltzmann-Gibbs weight of configuration $\sigma$ for the graph $\Lambda_n$, and let $\mathcal{Z}^{\sigma(\alpha)}_{n+1}$ be the conditional partition function of $\Lambda_n$, supposing the spin at $\alpha$ is $\sigma(\alpha)$.  Here we extend the domain of $\sigma$ from $V_{n+1}$ to $V_{n+1}\cup\{\alpha\}$. Thus, ${\sigma(\alpha)\in \{0,\cdots, q-1\}}$.

\begin{figure}[h]
    \centering
    \includegraphics[width=0.8\linewidth]{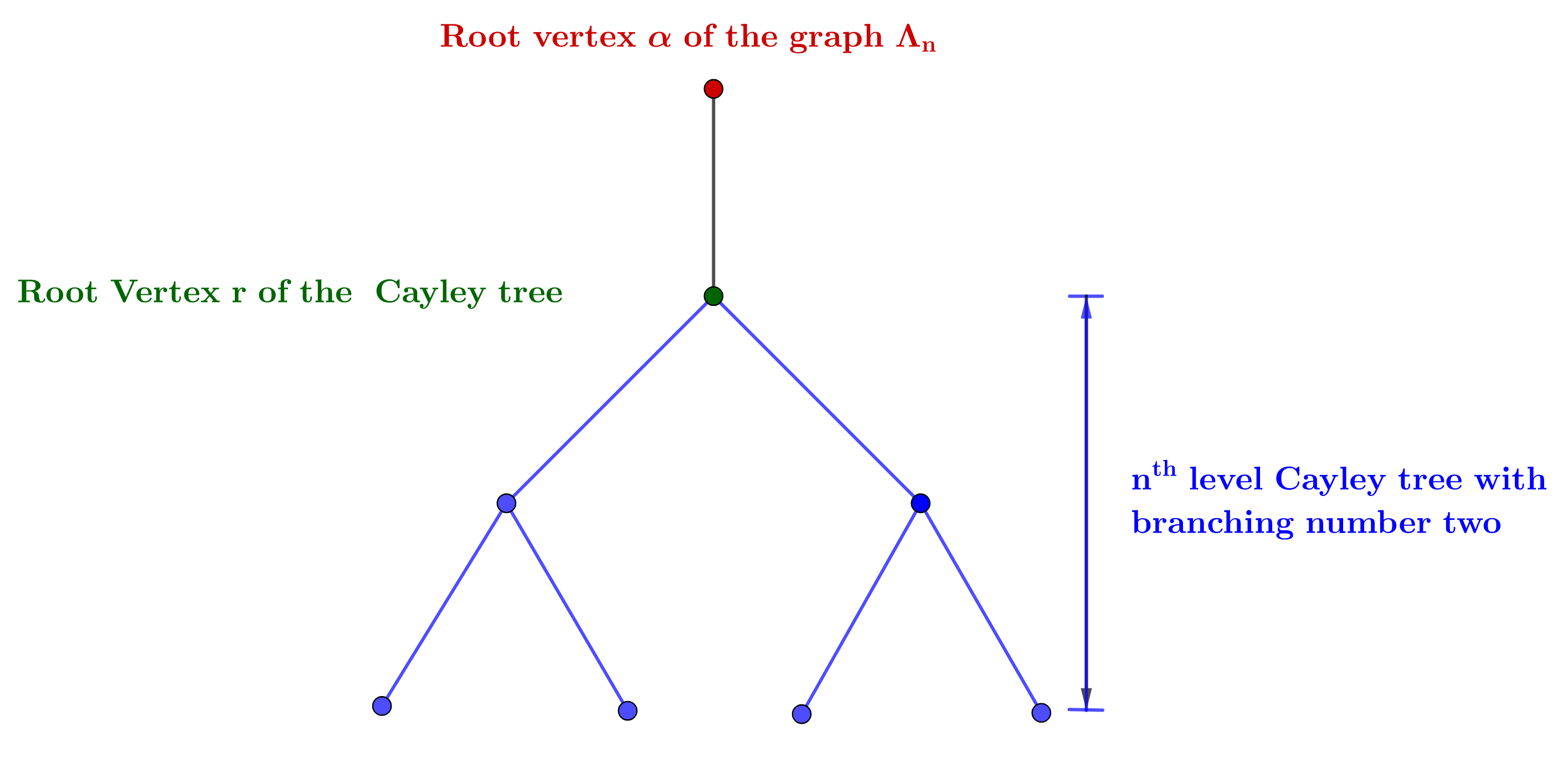}
    \caption{Graph $\Lambda_n$.}
    \label{Graph of Gamma}
 \end{figure}

\begin{figure}[h]
    \centering
    \includegraphics[width=0.7\linewidth]{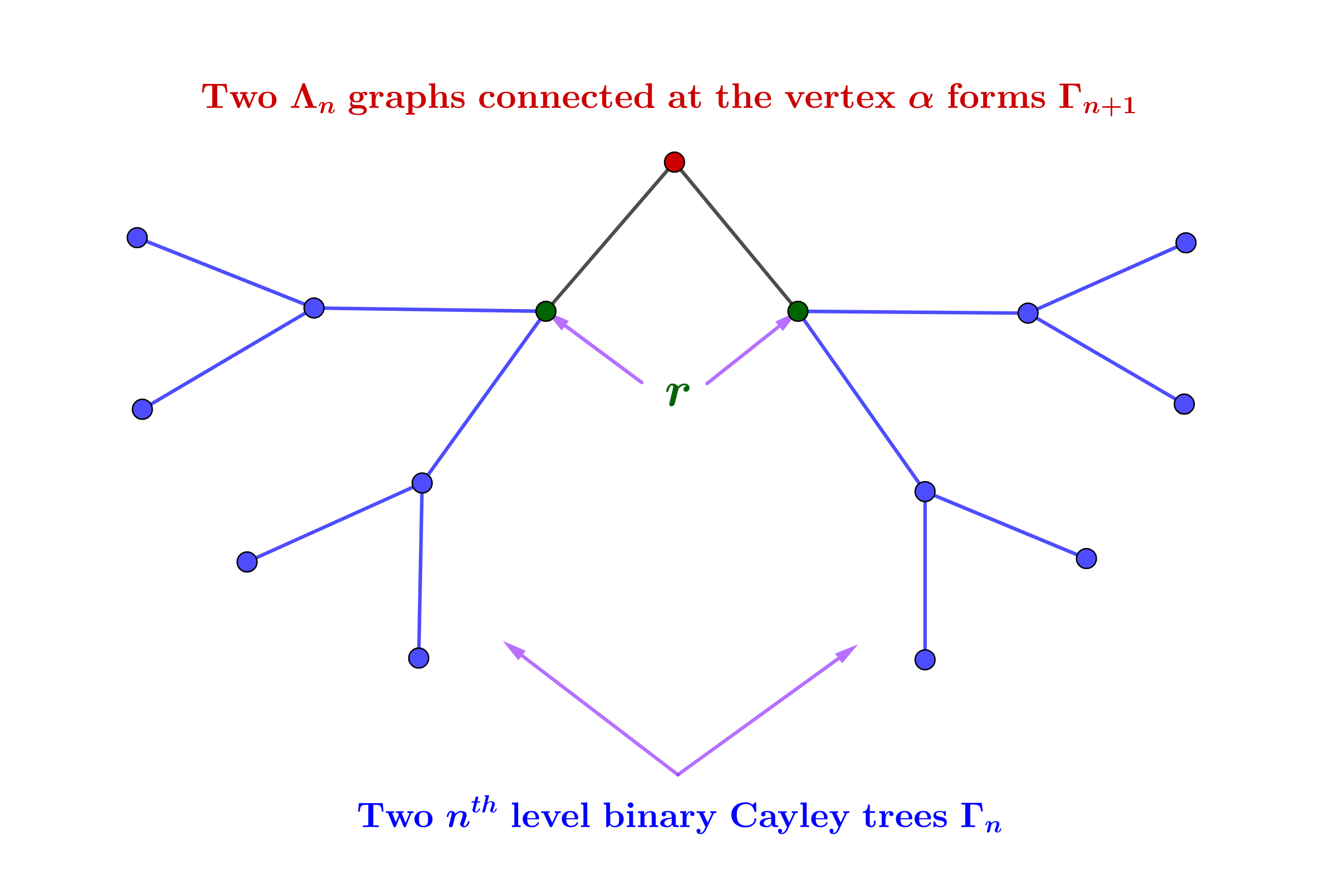}

    \caption{Graph $\Gamma_{n+1}=\Lambda_n^2$: two $\Lambda_n$ connected at their respective vertices $\alpha$.}
    \label{Graph of two Gammas}
\end{figure}

\vspace{0.1in}

\noindent
{\bf Computing $Z_{n+1}^0$ in terms of $Z_n^0$ and $Z_n^1$:}
First, we compute the conditional partition function $\mathcal{Z}^0_{n+1}$ of $\Lambda_n$ in terms of $z,t,Z^0_n$ and $Z^1_n$. Notice that the superscript $0$ on $Z_{n+1}^0$ indicates that we will have $\sigma(\alpha)=0$ throughout this entire subsection. We have two cases.

\vspace{0.1in}
\noindent
{\em Case 1: $\sigma(r)=0$}. 
    Then $H_{\Lambda_n}(\sigma)=H_n(\sigma)-h$, implying that $\sum_{\sigma(r)=0}\mathcal{W}_{n+1}=z^{-1}Z_n^0$.

\vspace{0.1in}
\noindent
{\em Case 2: $\sigma(r)\neq 0$}.
     Then $H_{\Lambda_n}(\sigma)=H_n(\sigma)+J-h$, implying that $\sum_{\sigma(r)\neq 0}\mathcal{W}_{n+1}=(q-1)tz^{-1}Z_n^0$.  Here we have used  Equation (\ref{(2.2)}).
     
\vspace{0.1in}
\noindent
Thus 
\begin{align}
    \mathcal{Z}_{n+1}^0=z^{-1}\big(Z_n^0+(q-1)tZ_n^1 \big).\label{e2.2}
\end{align}
Notice that, when $\sigma(\alpha)=0$, 
\[H_{n+1}=\text{Hamiltonian of}\ \Gamma_n^2 = H_{\Gamma_n}(\sigma)+H_{\Gamma_n}(\sigma)+h.\]
Here the term $``h"$ comes from the Hamiltonian of the graph of a single vertex $\alpha$ (subtracting the energy of the single point $\alpha$). Thus, $Z_{n+1}^0=z(\mathcal{Z}_{n+1}^0)^2$. This together with (\ref{e2.2}) this gives us
\begin{align}
    Z_{n+1}^0 = z^{-1}\big(Z_n^0+(q-1)tZ_n^1\big)^2.\label{e2.3}
\end{align}

\noindent
{\bf Computing $Z_{n+1}^1$ in terms of $Z_n^0$ and $Z_n^1$:}
First, we compute the conditional partition function $\mathcal{Z}^1_{n+1}$ of $\Lambda_n$ in terms of $z,t,Z^0_n$ and $Z^1_n$. Notice that the superscript $1$ on $Z_{n+1}^1$ indicates that we will have $\sigma(\alpha)=1$ throughout this entire subsection. We have three cases.\\

\vspace{0.1in}
\noindent
{\em Case 1: $\sigma(r)=0$}.
    Then $H_{\Lambda_n}(\sigma)=H_n(\sigma)+J$, implying that $\sum_{\sigma(r)=0}\mathcal{W}_{n+1}=tZ_n^0$.

\vspace{0.1in}
\noindent
{\em Case 2: $\sigma(r)=1$}.
    Then $H_{\Lambda_n}(\sigma)=H_n(\sigma)$, implying that $\sum_{\sigma(r)=1}\mathcal{W}_{n+1}=Z_n^1.$

\vspace{0.1in}
\noindent
{\em Case 3: $\sigma(r)\neq 0,1$}.
    Then $H_{\Lambda_n}(\sigma)=H_n(\sigma)+J$, implying that $\sum_{\sigma(r)\notin \{0,1\}}\mathcal{W}_{n+1}=(q-2)tZ_n^1$.  Here we have used  Equation (\ref{(2.2)}).

\vspace{0,1in}
\noindent Thus 
\begin{align}
    \mathcal{Z}_{n+1}^1=tZ_n^0 + Z_n^1 + (q-2)tZ_n^1.\label{e2.4}
\end{align}
Notice that, when $\sigma(\alpha)=1$, 
\[H_{n+1}=\text{Hamiltonian of}\ \Gamma_n^2 = H_{\Gamma_n}(\sigma)+H_{\Gamma_n}(\sigma).\]
Thus $Z_{n+1}^1=\big(\mathcal{Z}_{n+1}^1\big)^2$. This together with Equation (\ref{e2.4}) gives us
\begin{align}
    Z^1_{n+1}=\big(tZ_n^0 + Z_n^1 + (q-2)tZ_n^1\big)^2.\label{e2.5}
\end{align}

Therefore, by (\ref{e2.3}) and (\ref{e2.5}) we have a formula for the full partition function of the rooted Cayley tree with branching number two at level $n+1$:
\begin{align}
    \nonumber Z_{n+1} &:= Z_{n+1}^0 + (q-1)Z_{n+1}^1\\
    &= z^{-1}\big(Z_n^0+(q-1)tZ_n^1 \big)^2 + (q-1)\Big(tZ_n^0+Z_n^1+(q-2)tZ_n^1 \Big)^2. \label{2.15}
    \end{align}
 We are interested in the zeros of $Z_{n+1}$. Notice that $Z_{n+1}=0$ when 
 \begin{align}
      \frac{Z_{n+1}^1}{Z_{n+1}^0}=z\frac{\big(tZ_n^0+Z_n^1+(q-2)tZ_n^1 \big)^2}{\big(Z_n^0+(q-1)tZ_n^1 \big)^2}
      = z\Bigg[\frac{tZ_n^0+Z_n^1+(q-2)tZ_n^1 }{Z_n^0+(q-1)tZ_n^1 }\Bigg]^2
      =\frac{1}{1-q}. \label{2.7}
 \end{align}
Let $w_n:=\frac{Z_{n}^1}{Z_{n}^0}.$ Then by (\ref{2.7}),
\begin{align}
    \nonumber w_{n+1} &=  z\Bigg[\frac{tZ_n^0+Z_n^1+(q-2)tZ_n^1 }{Z_n^0+(q-1)tZ_n^1 }\Bigg]^2
    =  z\Bigg[\frac{t+\frac{Z_n^1}{Z_n^0}+(q-2)t\frac{Z_n^1}{Z_n^0} }{1+(q-1)t\frac{Z_n^1}{Z_n^0} }\Bigg]^2
    = z\Bigg[\frac{t+w_n+(q-2)tw_n }{1+(q-1)tw_n }\Bigg]^2.
\end{align}
 \noindent Thus if we define 
 \begin{align}
     R_{z,t,q}(w):= z\Bigg[\frac{t+w+(q-2)tw }{1+(q-1)tw }\Bigg]^2, \label{2.18}
 \end{align}
 then \[w_{n} = R_{z,t,q}^{n}(w_0).\]
Here, the superscript $``n"$ indicates that we compose $R_{z,t,q}(w)$ with itself $n$ times (with respect to the variable $w$ while fixing parameters $z, t$ and $q$). In other words, it denotes the $n^{th}$ iterate of the function $R_{z,t,q}$.

By definition and (\ref{2.2}) we have $w_0=\frac{Z_0^1}{Z^0_0} = \frac{1}{z^{-1}} = z$. So,  $w_{n} = R_{z,t,q}^{n}(z)$.
Therefore, by ($\ref{2.15}$) and ($\ref{2.18}$) the Lee-Yang zeros of the $q$-state Potts model on the $n^{th}$ level Cayley tree with branching number two are the solutions to the equation
\begin{align}
    R_{z,t,q}^{n} (z) = \frac{1}{1-q}.\label{2.20}
\end{align}
This finishes the proof of Part (i) of Theorem D.

\vspace{0.1in}
\noindent
{\bf Case of the unrooted Cayley tree:}
We now explain how to prove Part (ii) of Theorem D about the unrooted Cayley $\hat \Gamma_n$.
Denote by $c$ the central vertex of $\hat \Gamma_n$, i.e. the vertex at distance $n$
from the leaves of $\hat \Gamma_n$.   For any $0 \leq j \leq q-1$
denote by $\hat Z_n^j$ the conditional partition function for $\hat \Gamma_n$ conditioned on $\sigma(c) = j$.   Like in the proof for the rooted tree, we again  have $\hat Z_n^j = \hat Z_n^k$ for any $1 \leq j,k \leq q-1$.   

One obtains $\hat \Gamma_n$ by taking three copies
of the rooted tree $\Gamma_{n-1}$ and attaching each of their root vertices by an
edge to the central vertex $c$.   In much the same way as we handled the rooted
tree above, one can prove that:
\begin{align*}
\hat{Z}^0_{n} =  z^{-1}\big(Z_{n-1}^0+(q-1)tZ_{n-1}^1\big)^3 \qquad \mbox{and} \qquad  \hat{Z}^1_{n} = \big(tZ_{n-1}^0 + (1+ (q-2)t)Z_{n-1}^1\big)^3.
\end{align*}
Therefore, $\hat{Z}_n = \hat{Z}^0_{n} + (q-1) \hat{Z}^1_{n} = 0$ if and only if
\begin{align*}
\frac{\hat{Z}^1_{n}}{\hat{Z}^0_{n}} = z \frac{\big(tZ_{n-1}^0 + (1+ (q-2)t)Z_{n-1}^1\big)^3}{\big(Z_{n-1}^0+(q-1)tZ_{n-1}^1\big)^3}  = \hat{R}_{z,t,q}(w_{n-1})  =  \left(\hat{R}_{z,t,q} \circ R^{(n-1)}_{z,t,q}\right)(z) = \frac{1}{1-q}.
\end{align*}
This finishes the proof of Part (ii) of Theorem D.
\end{proof}

\vspace{0.1in}

\noindent
{\bf Disclosures and declarations:}

\noindent
As required by current rules for articles submitted to Springer-Nature journals, the authors declare that:
\begin{enumerate}

\item The authors have no financial or non-financial or competing interests that are relevant to this article. The authors have stated their grant support in the Acknowledgments.

\item Data accessibility: this article is theoretical and does not contain any experimental data. Calculational data supporting the conclusions of the article are included in the text of the article.
\end{enumerate}

\bibliographystyle{plain}
\bibliography{references2}

\end{document}